\newtheorem{Proposition}{Proposition}[section]
\DeclareMathOperator{\Diag}{Diag}
\DeclareMathOperator{\tr}{tr}
\DeclareMathOperator\argmin{arg min}
\newcommand{\bm}[1]{\mbox{\boldmath $#1$}}
\newcommand{\bE}[0]{\mathbb{E}}
\newcommand{\bV}[0]{\mathbb{V}\mathrm{ar}}
\newcommand{\mN}[0]{\mathcal{N}}
\newcommand{\R}{\mathbb{R}}
\newcommand{\x}{\mathbf{x}}
\newcommand{\xu}{\bar{\mathbf{x}}} 
\newcommand{\yu}{\bar{y}} 
\newcommand{\z}{\mathbf{z}}
\newcommand{\veck}{\mathbf{k}}
\newcommand{\veckN}{\mathbf{k}_N}
\newcommand{\veckn}{\mathbf{k}_n}
\newcommand{\vecX}{\mathbf{X}}
\newcommand{\vecY}{\mathbf{Y}}
\newcommand{\vecYu}{\bar{\mathbf{Y}}} 
\newcommand{\An}{\mathbf{A}_n}
\newcommand{\B}{\mathbf{B}}
\newcommand{\Cg}{\mathbf{C}_{(g)}}
\newcommand{\Cn}{\mathbf{C}_n}
\newcommand{\Deltan}{\boldsymbol{\Delta}_n}
\newcommand{\K}{\mathbf{K}}
\newcommand{\KN}{\mathbf{K}_N}
\newcommand{\Kn}{\mathbf{K}_n}
\newcommand{\hnug}{\hat{\nu}_{(g)}}
\newcommand{\ones}{\bm{1}}
\newcommand{\Lan}{\boldsymbol{\Lambda}_n}
\newcommand{\nuN}{\hat{\nu}_N}
\newcommand{\Or}{\mathcal{O}}
\newcommand{\xnew}{\tilde{\x}}
\newcommand{\xnewp}{\tilde{\x}_{(p)}}
\newcommand{\cs}{\check{\sigma}}
\newcommand{\vecc}{\mathbf{c}}
\newcommand{\vecd}{\mathbf{d}}
\newcommand{\vecg}{\mathbf{g}}
\newcommand{\vech}{\mathbf{h}}
\newcommand{\veckg}{\mathbf{k}_{(g)}}
\newcommand{\vecu}{\mathbf{u}}
\newcommand{\vecw}{\mathbf{w}}
\newcommand{\Hmat}{\mathbf{H}}
\newcommand{\W}{\mathbf{W}}
\newcommand{\Wn}{\mathbf{W}_n}
\newcommand{\M}{\mathbf{M}}
\newcommand{\m}{\mathbf{m}}
\newcommand{\erf}{\mathrm{erf}}
\newtheorem{lemma}{Lemma}[section]
\title{\vspace{-1cm}  Replication or exploration? Sequential design for stochastic simulation experiments}
\author{Micka\"{e}l Binois\thanks{Corresponding author: The University of Chicago Booth School of Business, 
Chicago IL, 60637;
\href{mailto:mbinois@mcs.anl.gov}{\tt mbinois@mcs.anl.gov}}
\and
Jiangeng Huang\thanks{Department of Statistics, Virginia Tech}
\and
Robert B.~Gramacy\footnotemark[2]
\and Mike Ludkovski\thanks{Department of Statistics and Applied Probability,
University of California Santa Barbara}
}
\begin{document}
\maketitle

\begin{abstract}
 	
We investigate the merits of replication, and provide methods for optimal
design (including replicates), with the goal of obtaining globally
accurate emulation of {\em noisy} computer simulation experiments. We first
show that replication can be beneficial from both design and computational
perspectives, in the context of Gaussian process surrogate modeling. We then
develop a lookahead based sequential design scheme that can determine if a new
run should be at an existing input location (i.e., replicate) or at a new one
(explore). When paired with a newly developed heteroskedastic Gaussian process
model, our dynamic design scheme facilitates learning of signal and noise
relationships which can vary throughout the input space. We show that it does
so efficiently, on both computational and statistical grounds. In addition to
illustrative synthetic examples, we demonstrate performance on two challenging
real-data simulation experiments, from inventory management and epidemiology.

\bigskip
  \noindent {\bf Keywords:}
  computer experiment, Gaussian process, surrogate model, input-dependent
  noise, replicated observations, lookahead

\end{abstract}


\section{Introduction}

Historically, design and analysis of computer experiments focused on
deterministic solvers from the physical sciences via Gaussian process (GP)
interpolation \citep{Sacks1989}. But nowadays computer modeling is common in
the social \citep[][Chapter 8]{cioffi2014introduction}, management
\citep{Law2015} and biological \citep{johnson:2008} sciences, where
stochastic simulations abound. Noisier simulations demand bigger experiments
to isolate signal from noise, and more sophisticated GP models---not just
adding nuggets to smooth the noise, but variance processes to track changes in
noise throughout the input space in the face of heteroskedasticity
\citep{Binois2016}. In this context there
are not many simple tools: most add to, rather than reduce,
modeling and computational complexity.

Replication in the experiment design is an important exception, offering a
pure look at noise, not obfuscated by signal.  Since usually the signal is of
primary interest, a natural question is: How much replication should be
performed in a simulation experiment? The answer to that question depends on a
number of factors. In this paper the focus is on
global surrogate model prediction accuracy and computational efficiency, and we show
that replication can be a great benefit to both, especially for
heteroskedastic systems.

There is evidence to support this in the literature. \cite{Ankenman2010}
demonstrated how replicates could facilitate signal isolation, via stochastic
kriging (SK), and that accuracy could be improved without much extra
computation by augmenting existing degrees of replication in stages
\citep[also see][]{Liu2010,Quan2013,Mehdad2018}. \cite{Wang2017} showed
that replicates have an important role in characterizing sources of inaccuracy
in SK. \citet{Boukouvalas2014} demonstrated the value of replication in
(Fisher) information metrics, and \citet{Plumlee2014} provided asymptotic
results favoring replication in quantile regression.  Finally, replication has
proved helpful in the surrogate-assisted (i.e., Bayesian) optimization of
noisy blackbox functions \citep{Horn2017,Jalali2017}.

However, none of these studies address what we see as the main decision
problem for design of GP surrogates in the face of noisy simulations. That is:
how to allocate a set of unique locations, and the degree of replication
thereon, to obtain the best overall fit to the data.  That sentiment has been echoed
independently in several recent publications
\citep{Kleijnen2015,Weaver2016,Jalali2017,Horn2017}.  The standard approach of
allocating a uniform number of replicates leaves plenty of room for
improvement. One exception is \cite{Chen2014,Chen2017} who proposed several
criteria to explore the replication/exploration trade-off, but only for a
finite set of candidate designs.

Here we tackle the issue sequentially, one new design element at a time.  We
study the conditions under which the new element should be a replicate, or
rather explore a new location, under an integrated mean-square prediction
error (IMSPE) criterion.  We also highlight how replicates offer computational
savings in surrogate model fitting and prediction with GPs, 
augmenting results of \cite{Binois2016}
with fast updates as new data arrives.
Inspired by those findings, we develop a new IMSPE-based criterion that
offers lookahead over future replicates.  This criterion is the first to
acknowledge that exploring now offers a new site for replication later, and
conversely that replicating first offers the potential to learn a little more
(cheaply, in terms of surrogate modeling computation) before committing to a
new design location.  A key component in solving this sequential decision
problem in an efficient manner is a closed form expression for IMSPE, and its
derivatives, allowing for fast numerical optimization.

While our IMSPE criterion corrects for myopia in replication, it is important
to note that it is not a full lookahead scheme.  Rather, we illustrate that it
is biased toward replication: longer lookahead horizons tend to tilt toward
more replication in the design. In our experience, full lookahead, even when
approximated, is impractical for all but the most expensive simulations. Even
the cleverest dynamic programming-like schemes
\citep[e.g.,][]{Ginsbourger2010,Gonzalez2016,Lam2016,Huan2016} require
approximation to remain tractable or otherwise only manage to glimpse a few
steps into the future despite enormous computational cost. Our more thrifty
scheme can search dozens of iterations ahead.  That flexibility
allows us to treat the horizon as a tuning parameter that can be adjusted,
online, to meet design and/or surrogate modeling goals. When simulations are
cheap and noisy, we provide an adaptive horizon scheme that favors replicates
to keep surrogate modeling costs down; when surrogate modeling costs are less
of a concern, we provide a scheme that optimizes out-of-sample RMSE, which
might or might not favor longer horizons (i.e., higher replication).

The structure of the remainder of the paper is as follows. First we summarize
relevant elements of GPs, sequential design and the computational savings
enjoyed through replication in Section \ref{sec:concept}. Then in Section
\ref{sec:imspe} we detail IMSPE, with emphasis on sequential applications and
computational enhancements (e.g., fast GP updating) essential for the
tractability of our framework. Section \ref{sec:ahead} discusses our lookahead
scheme, while  Section \ref{sec:implement}
provides practical elements for the implementation, including tuning the
horizon of the lookahead scheme. Finally, in Section \ref{sec:experiment}
results are presented from several simulation experiments, including
illustrative test problems, and real simulations from epidemiology and
inventory management, which benefit from disparate design strategies.

\section{Background and proof of concept}
\label{sec:concept}

Here we introduce relevant surrogate modeling and design elements while at the
same time illustrating proof-of-concept for our main methodological
contributions.  Namely that replication can be valuable computationally, as
well as for accuracy in surrogate modeling.

\subsection{Gaussian process regression with replication}

We consider Gaussian process (GP) surrogate models for an unknown function over a fixed domain
$f: D \subset \R^d \to \R$ based on noisy observations $\vecY = (y_1, \ldots,
y_N)^\top$ at design locations $\vecX = (\x_1^\top, \ldots, \x_N^\top)$. For
simplicity, we assume a zero-mean GP prior, completely specified by covariance
kernel $k(\cdot, \cdot)$, a positive definite function. Many different choices
of kernel are possible, while in the computer experiments literature the power
exponential and Mat\'ern families are the most common.
Often the families are parametererized by unknown quantities such as
lengthscales, scales, etc., which are inferred from data
\citep[see, e.g.,][]{Rasmussen2006,Santner2013}.  The noise is presumed to be
zero-mean i.i.d.\ Gaussian, with variance $r(\x) = \bV[ Y(\x) | f(\x)]$. While
we discuss our preferred modeling and inference apparatus in Section
\ref{sec:hetGP}, for now we make the (unrealistic) assumption that kernel
hyperparameters, along with the potentially non-constant
$r(\x)$, are known.
Altogether, the data-generating mechanism follows
a multivariate normal distribution, $\vecY
\sim \mN_N(0,
\KN)$, where $\KN$ is an $N \times N$ matrix comprised of $k(\x_i, \x_j) +
\delta_{ij} r(\x_i)$, for $1
\leq i,j \leq N$ and with $\delta_{ij}$ being the Kronecker delta function.

Conditional properties of multivariate normal
(MVN) distributions yield that the predictive distribution $Y(\x) | \vecY$ is Gaussian with
\begin{align}
\mu_N(\x) &= \bE(Y(\x)| \vecY) =  \veckN(\x)^\top \KN^{-1} \vecY, \quad
\mbox{ with }\; \veckN(\x) = (k(\x, \x_1), \ldots, k(\x, \x_N))^\top;  \nonumber \\
\sigma_N^2(\x) &= \bV(Y(\x)| \vecY) =
k(\x, \x) + r(\x) - \veckN^\top(\x) \KN^{-1} \veckN(\x).
\label{eq:Npred} 
\end{align}

It can be shown that $\mu(\x)$
is a best linear unbiased predictor (BLUP) for $Y(\x)$ (and $f(\x)$). Although
testaments to the high accuracy and attractive uncertainty quantification
features abound in the literature, one notable drawback is that when $N$ is
large the computational expense of $\Or(N^3)$ due to decomposing $\KN$ (e.g.,
to solve for $\KN^{-1}$) can be prohibitive.

When the observations $y(\x)$ are deterministic (i.e., $r(\x) = 0$), often $N$
can be kept to a manageable size.  When data are noisy, with potentially
varying noise level, many samples may be needed to separate signal from noise.
Indeed in our motivating applications, the signal-to-noise ratios can be very
low, so even for a relatively small input space, thousands of training
observations are necessary. In that context replication can offer significant
computational gains.  To illustrate, let $\xu_i$, $i= 1, \ldots, n$
denote the $n \leq N$ unique input locations, and $y_i^{(j)}$ be the
$j^\mathrm{th}$ out of $a_i \ge 1$ replicates observed at $\xu_i$, i.e.,
$j=1,\ldots, a_i$, where $\sum_{i = 1}^n a_i = N$. Also, let $\vecYu_{(N,n)} =
(\yu_1, \ldots,
\yu_n)^\top$ store averages over replicates, $\yu_i =
\frac{1}{a_i}\sum_{j =1}^{a_i} y_i^{(j)}$.
Then \citet{Binois2016} show that predictive equations based on
this ``unique-$n$'' formulation, i.e., following Eq.~(\ref{eq:Npred})
except with $\vecYu_{(N,n)}$ and $\K_{(N,n)} = \left( k(\xu_i, \xu_j) +
\delta_{ij}
\frac{r(\xu_i)}{a_i}
\right)_{1 \leq i,j \leq n}$, are identical. Compared to
 the ``full-$N$'' formulation, the respective costs are reduced from
 $\mathcal{O}(N^3)$ to just $\mathcal{O}(n^3)$, without any approximations.

\subsection{Sequential design for GPs}

Although there are many criteria dedicated to design for GP regression \citep[see,
e.g.,][]{Pronzato2012}, our focus here is on global predictive accuracy defined
via integrated mean-squared prediction error (IMSPE). Fixing $\vecX$, the
IMSPE integrates the ``de-noised'' posterior variance $\cs^2_N(\x) = \sigma_N^2(\x) -
r(\x)$ over $D$,
\begin{equation}
\mathrm{IMSPE}(\x_1, \ldots, \x_N) = \int \limits_{\x \in D}
\cs^2_N(\x) \, d\x =: I_N.
\label{eq:IMSPE}
\end{equation}
Note that although this definition removes $r(\x)$, it is still present in
$\KN$ and therefore affects $\cs^2_N(\x)$. Removing $r(\x)$ is not required,
but since $\int r(\x) \, d\x$ is constant over $\x_1, \dots, \x_N$, it simplifies
future expressions.

Even in the highly idealized case were all covariance
$k(\cdot, \cdot)$ and noise $r(\cdot)$ relationships are presumed known, one-shot
design---i.e., choosing all $N$ locations $\vecX$ at once to minimize
\eqref{eq:IMSPE}---is an extraordinarily difficult task owing to the $(N
\times d)$-dimensional search space. Only in very specific cases, such as
$d=1$ and a exponential kernel \citep{Antognini2010}, or with the
simpler task of allocating $N$ replicates to a fixed set of $n$ unique sites
$\xu_1,
\ldots, \xu_n$ \citep{Ankenman2010}, is a computationally tractable solution
known.

Therefore, we consider here the simpler case of a purely sequential design, building up
a big design greedily, one simulation at a time. Note that this means that $N$
grows by 1 after each iteration.  While $n$ is also evolving, the precise
change is dependent on whether a replicate or a new location is selected. In the
generic step, we condition on existing $\x_1,\ldots, \x_N$ locations and
optimize $\mathrm{IMSPE}( \x_1, \ldots, \x_N, \x_{N+1})$ over $\x_{N+1}$.
Recall that the posterior variance $\cs^2_N$ only depends on the geometry of
$\vecX$, i.e., it is independent of the outputs $\vecY$ and hence we can view the
above as minimizing $I_{N+1}(\x_{N+1}) :=
\mathrm{IMSPE}(\x_{N+1} | \x_1, \ldots, \x_N)$.  Later we establish specific
closed-form expressions both for $I_{N+1}$ and its gradient which
enable fast optimization via library-based numerical schemes. Foreshadowing
these developments, and utilizing the calculations detailed therein, we
illustrate here the possibility that $\x_{N+1} = \argmin_{\x}
I_{N+1}(\x)$ is a replicate.  The conditions under which replication is
advantageous, which we describe shortly in Section \ref{ssec:closed}, have to
our knowledge only been illustrated empirically \citep{Boukouvalas2010}, or
conceptually (e.g., \citet{Wang2017} highlight that replication is more
beneficial as the signal-to-noise ratio decreases, via upper bounds on the
MSPE), or to bolster technical results (e.g., \cite{Plumlee2014} demand a
sufficient degree of replication to ensure asymptotic efficiency).

\begin{figure}[ht!]
  \centering
  \vspace{-0.5cm}
  \begin{subfigure}[t]{0.4\textwidth}%
	\centering%
	\includegraphics[width=\textwidth, trim = 0 10 0 30, clip = TRUE]{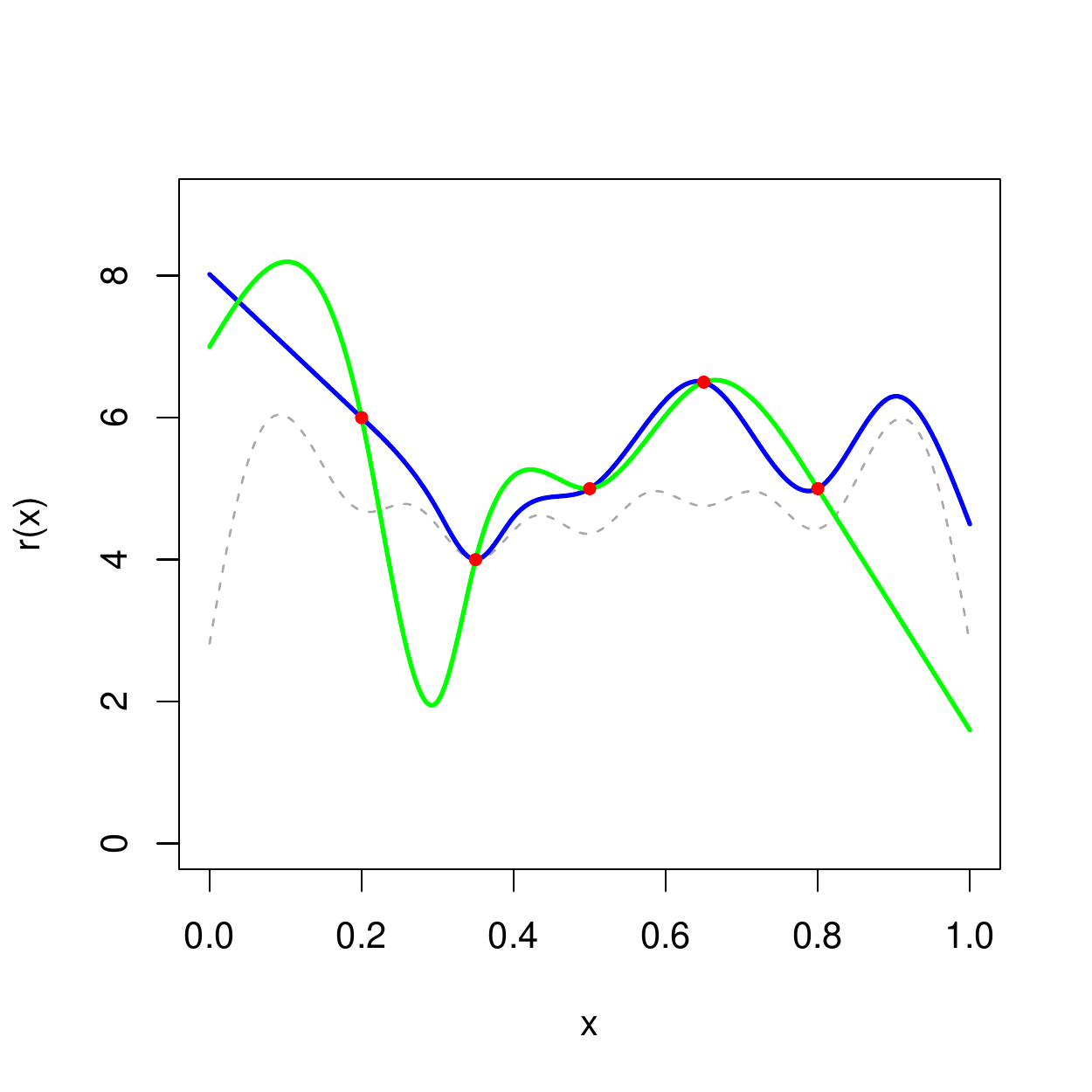}%
	\end{subfigure}%
	  \begin{subfigure}[t]{0.4\textwidth}%
	\centering%
	\includegraphics[width=\textwidth, trim = 0 10 0 30, clip = TRUE]{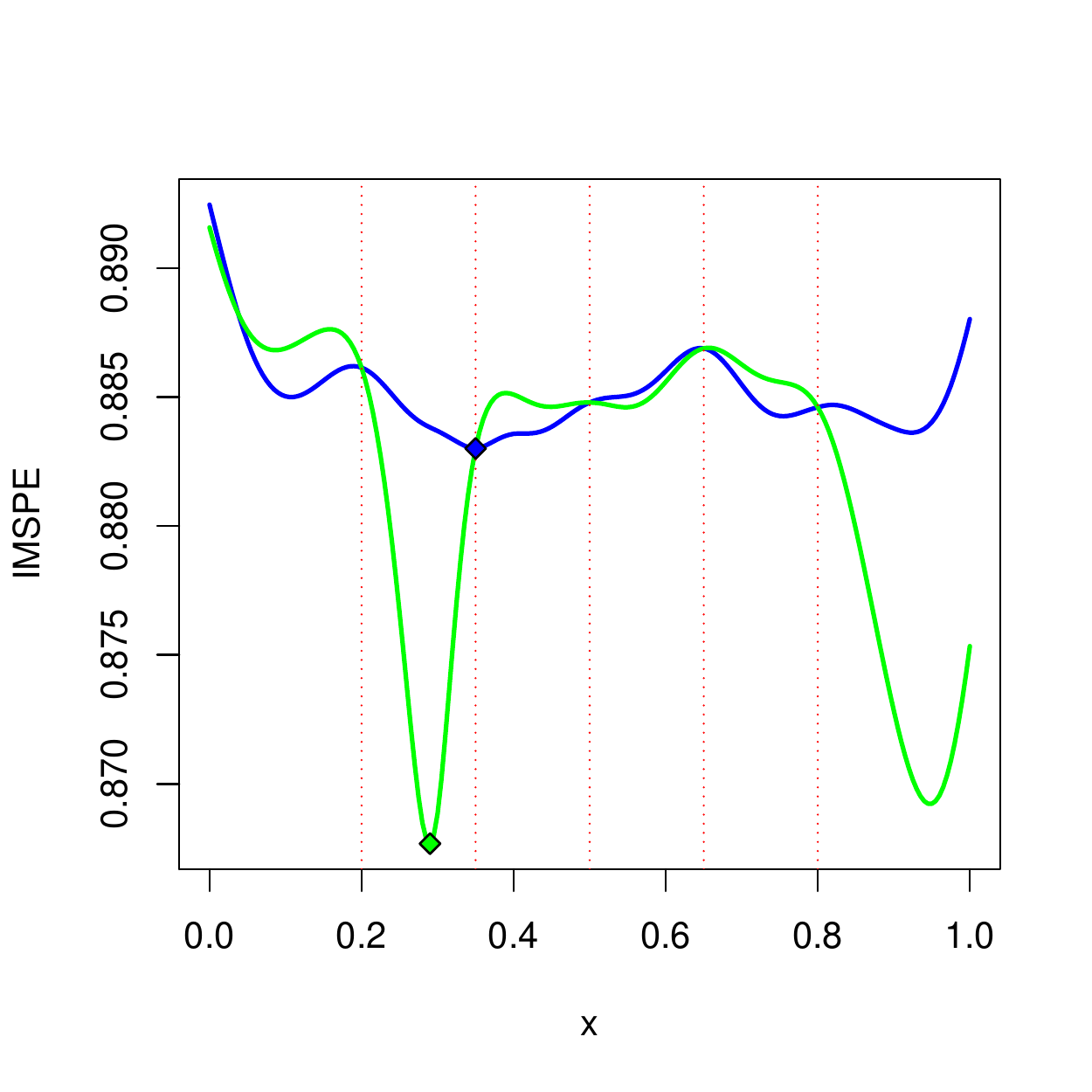}%
	\end{subfigure}%
	\vspace{-0.25cm}
  \caption{Illustration of the effect of noise variance on IMSPE
  optimization. Left: two examples of noise variance functions $r(\cdot)$
  (blue solid and green dashed lines), with observations at $\vecX$ (five red points). The
  grey dotted line represents the minimum $r(\x)$ that guarantees that
  replicating is optimal. Right:  $I_{N+1}(\x)$ for the two respective
  $r(\cdot)$. Diamonds highlight minimum values, and red dotted lines the
  existing designs $\x_1, \ldots, \x_5$. }
  \label{fig:rep}%
\end{figure}

The {\em left} panel of Figure \ref{fig:rep} shows two different noise levels,
$r(\x)$, for a stylized heteroskedastic GP predictor trained at $N=5$
locations whose $\x_1, \ldots, \x_5$ values are shown as red dots.  The fact
that the two $r(\x)$ curves coincide at these locations is not material to
this discussion. Later in Section \ref{ssec:closed} this feature and a
description of the gray-dotted curve will be provided.  The right panel
in the figure shows the predicted IMSPE, $I_{N+1}(\x) = \mathrm{IMSPE}(\x_1, \ldots,
\x_5, \x)$ derived from $\cs_N^2$ calculations using those $\x_1, \ldots,
\x_5$ values combined with the two $r(\x)$ settings.  With smaller IMSPE being
better, we see that the solid blue regime calls for $\x_6$ being a replicate
$(\argmin I_{N+1}(\x)= \x_2)$, whereas the dashed green regime wants to explore at a
new unique location ($\argmin I_{N+1}(\x) \simeq 0.32$).  Also note that the
IMSPE surfaces are multi-modal, which may pose a challenge to numerical
optimizers, and that even for the dashed green curve there are replicates (e.g., at
$\x_2$) with lower IMSPE than some local minima, meaning that augmenting with
a cheap discrete search over replicates may be more effective than deploying a
multi-start optimization scheme.

Ultimately, we entertain the far more realistic setup of unknown kernel
hyperparameters and noise processes.  In this context, sequential design to
``learn-as-you-go'' is essential.  We take this approach not simply to avoid
pathologies in hyperparameter mis-specification, as discussed in homoskedastic
setups \citep[e.g.,][]{seo00,Krause2007}, but explicitly to gain the
flexibility to sample non-uniformly in a manner that can only be adapted after
a degree of initial sampling allows a fit of the noise process $\hat{r}(\x)$
to be obtained, and further refined.  Our empirical results illustrate that
reasonable, yet inaccurate, {\em a priori} simplifications such as constant
$r(\x)$ may---even if just for the purposes of design, not subsequent
fitting---lead to inferior prediction. Previously such adaptive behavior and
non-uniform sampling was only available via more cumbersome fully
nonstationary methods, say involving treed partitioning \citep{gra:lee:2009}.

\section{IMSPE through the lens of replication}
\label{sec:imspe}

Over the years several authors \citep[e.g.,][]{Ankenman2010,anagnos:gramacy:2013,Burnaev2015,Leatherman2017}
have provided closed form expressions
for IMSPE (i.e., for the integral in \eqref{eq:IMSPE}) via variations on the
criterion's definition (i.e., versions somewhat different than our preferred
version in (\ref{eq:IMSPE})), or via simplifications to the GP specification
or to the argument $\x_1,\ldots,\x_N$, obtained by constraining the search
set.  Others have argued in more general contexts that $d$-dimensional
numerical integration, usually via sums over a (potentially random) reference
set, is the only viable option in their setting
\citep{seo00,gra:lee:2009,Gauthier2014,Gorodetsky2016,Pratola2017}.

Here we provide a new closed-form expression for the IMSPE which, despite
being intimately connected to earlier versions, is quite general and, we
think, could replace many of the prevailing numerical schemes.  This
development uses the ``unique-$n$'' representation for efficient calculation
under replication, however the analogue ``full-$N$'' version is immediate.  We
then consider an ``add one'' variation, $I_{N+1}(\xnew) = \mathrm{IMSPE}(\xnew
| \x_1, \ldots, \x_N)$, for efficient calculation in the sequential design
setting and derive a condition under which replication is preferred for the
next sample.  Here we use $\xnew$ for a potential new location, while
$\x_{N+1}$ will ultimately be chosen as the best candidate (i.e., minimizing
IMSPE over $\xnew$). Note that if $\x_{N+1}$ turns out to be a replicate, $n$
would not increase.

One important reason to have a closed-form IMSPE is the
calculation of gradients, also in closed form, to aid in optimization.  We
provide the first such derivative expressions of which we are aware.  Finally,
acknowledging the dual role of replication (to speed calculations and
separate signal from noise) we describe two new lookahead IMSPE heuristics for
tuning the lookahead horizon in an online fashion, depending on whether speed
or accuracy is more important.

\subsection{IMSPE closed-formed expressions}
\label{ssec:closed}

We start by writing the IMSPE, shorthanded as $I_N$ in
Eq.~(\ref{eq:IMSPE}), as an expectation: \[ I_N  = \int \limits_{\x \in D}
\cs^2_n(\x) \; d\x = \bE[\cs^2_n(X)] = \bE[k(X, X)] - \bE[\veckn(X)^\top
\Kn^{-1} \veckn(X)] \] with $X$ uniformly sampled in $D$, and using the
linearity of the expectation. Notice that $\K_n$ depends on the number of
replicates per unique design, so this representation includes a tacit
dependence on the noise and replication counts $a_1, \ldots, a_n$. Then,
as shown in Lemma \ref{lem:IMSPE_closed}, the integration of $\cs_n^2$ over
$D$ may be reduced to integrations of the covariance function.

\begin{lemma}
Let $\Wn$ be an $n \times n$ matrix with entries comprising integrals of kernel
products $w(\x_i, \x_j) =
\int_{\x \in D} k(\x_i, \x) k(\x_j, \x) \; d\x$ for $1 \leq i,j \leq n$, and
let $E = \int_{\x \in D} k(\x,\x) \; d\x$. Then
\vspace{-0.25cm}
\begin{equation}
I_N = E - \tr(\Kn^{-1}\Wn). \label{eq:Ifull}
\vspace{-1.0cm}
\end{equation}
\label{lem:IMSPE_closed}
\end{lemma}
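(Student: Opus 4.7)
The plan is to substitute the expression for the de-noised variance directly into the definition of $I_N$ and then apply a standard trace manipulation. Using the unique-$n$ representation recalled just above, namely $\cs_n^2(\x) = k(\x,\x) - \veckn(\x)^\top \Kn^{-1} \veckn(\x)$, I would split
\begin{equation*}
I_N = \int_{\x\in D} k(\x,\x)\, d\x \;-\; \int_{\x\in D} \veckn(\x)^\top \Kn^{-1} \veckn(\x)\, d\x.
\end{equation*}
The first term is $E$ by definition, so the only real work is to rewrite the quadratic form integral as $\tr(\Kn^{-1}\Wn)$.

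For that, the key step is the standard trace trick: because $\veckn(\x)^\top \Kn^{-1} \veckn(\x)$ is a scalar, it equals its own trace, and the cyclic property gives $\veckn(\x)^\top \Kn^{-1} \veckn(\x) = \tr\!\bigl(\Kn^{-1} \veckn(\x)\veckn(\x)^\top\bigr)$. Since $\Kn$ does not depend on $\x$ and the trace is a linear functional on a finite-dimensional space, integration and trace commute, so
\begin{equation*}
\int_{\x\in D} \tr\!\bigl(\Kn^{-1} \veckn(\x)\veckn(\x)^\top\bigr)\, d\x
= \tr\!\left(\Kn^{-1} \int_{\x\in D} \veckn(\x)\veckn(\x)^\top\, d\x\right).
\end{equation*}
Finally, I would identify the $(i,j)$ entry of $\int_{D} \veckn(\x)\veckn(\x)^\top\, d\x$ as $\int_{D} k(\xu_i,\x)\,k(\xu_j,\x)\,d\x = w(\xu_i,\xu_j) = (\Wn)_{ij}$, so the integrated outer product is exactly $\Wn$, yielding $I_N = E - \tr(\Kn^{-1}\Wn)$.

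Honestly, there is no serious obstacle here; the derivation is a short algebraic manipulation. The only things worth pausing on are (i) that the entire dependence on the replicate counts $a_i$ and the noise $r(\cdot)$ is absorbed into $\Kn$ (through its diagonal $r(\xu_i)/a_i$ terms) and plays no role in the integration over $\x$, and (ii) that the interchange of integral and trace is legitimate, which is immediate from the linearity of both operations in finite dimension and the boundedness of the integrand on the compact domain $D$ under any continuous kernel. These remarks also make clear that the analogous ``full-$N$'' expression holds verbatim with $\KN$, $\veckN$, and an $N\times N$ version of $\Wn$ in place of their unique-$n$ counterparts.
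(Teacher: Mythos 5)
Your proposal is correct and follows essentially the same route as the paper: both split $I_N$ into $E$ minus the integrated quadratic form and reduce the latter to $\tr\bigl(\Kn^{-1}\int_D \veckn(\x)\veckn(\x)^\top d\x\bigr) = \tr(\Kn^{-1}\Wn)$ by linearity. The only cosmetic difference is that you apply the trace-cyclic identity directly, whereas the paper phrases the integral as an expectation over a uniform $X$ and invokes the standard identity $\bE[\z^\top \Kn^{-1}\z] = \tr(\Kn^{-1}\M) + \m^\top\Kn^{-1}\m$ before recombining $\M + \m\m^\top$ into $\Wn$ — the same computation in probabilistic dress.
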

\begin{proof}
The first part, involving $E$, follows simply by definition.  For the second
part, let $\z$ be a random vector of size $n$ with mean $\m$ and covariance
$\M$. \cite{Petersen2008} provides that $\bE \left[\z^\top \Kn^{-1} \z \right]
= \tr \left( \Kn^{-1} \M \right) + \m^\top \Kn^{-1} \m$. Therefore
using $\m^\top
\Kn^{-1} \m = \tr(\Kn^{-1} \m \m^\top)$, we have
$\bE[\veckn(X)^\top \Kn^{-1} \veckn(X)] = \tr( \Kn^{-1}(\M + \m \m^\top) )$
where $\m = \bE[ \veckn(X)]$ and $\M = \mathbb{C}ov( \veckn(X)^\top, \veckn(X))$. Observing that
$\Wn = \M + \m\m^\top$ gives the desired result.
\end{proof}

Our interest in the re-characterization in \eqref{eq:Ifull} is three-fold.
First and foremost, some of the most commonly used kernels enjoy closed form
expressions of $E$ and $w(\cdot, \cdot)$. In Appendix \ref{ap:kernel_exp} we
provide $w(\cdot, \cdot)$ for (i) Gaussian, (ii) Mat\'ern-5/2, (iii)
Mat\'ern-3/2, and (iv) Mat\'ern-1/2 families. For those families,
$E$ further reduces to their scale hyperparameter.  Section
\ref{sec:hetGP} offers specific forms for the generic expression
(\ref{eq:Ifull}) under our {\tt hetGP} model. Second, note that even when
closed forms are not available, as may arise when the kernel $k(X, X)$ cannot
be analytically integrated over $D$, this formulation may still be
advantageous. Numerically integrating $k(\x, \cdot)$ inside $\Wn$ will likely
be far easier than the alternative of integrating $\cs_n^2$, which can be
highly multi-modal. Third, we remark that $\tr(\Kn^{-1} \Wn) = \ones^\top(
\Kn^{-1} \circ \Wn) \ones$ where $\circ$ stands for the Hadamard (i.e.,
element-wise) product. Once $\Kn^{-1}$ and $\W_n$ are computed, the cost is in
$\Or(n^2)$, whereas the na\"ive alternative is $\Or(n^3)$.

Now, in sequential application the goal is to choose a new $\x_{N+1}$ by
optimizing $I_{N+1}(\xnew)$ over candidates $\xnew$.  Fixing the first $n$ unique
design elements simplifies calculations substantially if we assume that
$\K_n^{-1}$ and $\Wn$ are previously available. In that case, write
\[
\K_{n+1}=\begin{bmatrix}
\K_{n} & \veckn(\xnew) \\
\veckn(\xnew) ^\top & k(\xnew,\xnew) + r(\xnew)
 \end{bmatrix}, \quad
\W_{n+1}= \begin{bmatrix}
 \Wn & \vecw(\xnew)\\
\vecw(\xnew)^\top & w(\xnew, \xnew)
 \end{bmatrix}
\]
with $\vecw(\xnew) = \left(w(\xnew, \xu_i) \right)_{1 \leq i \leq n}$.
The partition inverse equations
\citep{barnett:1979} give
\begin{equation}
\K_{n+1}^{-1}=\begin{bmatrix}
\K_{n}^{-1}+ \vecg(\xnew) \vecg(\xnew)^ \top \sigma_n^2(\xnew) & \vecg(\xnew) \\
\vecg(\xnew) ^\top & \sigma_n^2(\xnew)^{-1}
 \end{bmatrix},
 \label{eq:Knp1i_up}
 \end{equation}
where $\vecg(\xnew)= -\sigma_n^2(\xnew)^{-1}\K_n^{-1} \veckn(\xnew)$ and $\sigma_n^2(\xnew)
= \cs^2_n(\xnew) + r(\xnew)$ as in \eqref{eq:Npred}.
Combining those two
results together leads to
\vspace{-0.5cm}
\begin{align} \notag
I_{N+1}(\xnew) &
 = E - \ones^\top [\K_{n+1}^{-1} \circ \W_{n+1}] \ones  \\
&= E - \left(\ones^\top [\K_{n}^{-1} \circ \W_{n}] \ones
+ \sigma_n^2(\xnew) \vecg(\xnew)^\top \Wn \vecg(\xnew) + 2 \vecw(\xnew)^\top \vecg(\xnew)
+ \sigma_n^2(\xnew)^{-1} w(\xnew, \xnew) \right)  \nonumber \\
&= I_N - \left( \sigma_n^2(\xnew) \vecg(\xnew)^\top \Wn \vecg(\xnew)
+ 2 \vecw(\xnew)^\top \vecg(\xnew) + \sigma_n^2(\xnew)^{-1}  w(\xnew, \xnew) \right). \label{eq:I}
\end{align}
Both (\ref{eq:Knp1i_up}--\ref{eq:I}) only require $\Or(n^2)$ computation.

After optimizing the latter part of \eqref{eq:I} over $\xnew$ and choosing the
best new design $\x_{N+1}$, one may utilize those inverse equations again to
update the GP fit.   Although similar identities have been provided in the
literature \citep[e.g.,][]{gramacy:polson:2011,Chevalier2014}, the ones we
provide here are the first to exploit the thrifty ``unique-$n$''
representation, and to tailor to the setting where $\x_{N+1}$ is a
replicate, i.e., an $\xu_k$, for $k \in \{1,\dots,n\}$, versus a new distinct
$\xu_{n+1}$ location.

\begin{lemma}
Suppose $\x_{N+1} = \xu_k$. Then the updated predictive mean and variance (increasing $N$
but not $n$) are given by
\begin{align*}
\mu_{(N+1,n)}(\x) & := \mu_{(N,n)}(\x) + \veckn(\x)^\top(\K_{(N,n)}^{-1} (\vecYu_{(N+1,n)} - \vecYu_{(N,n)}) - \B_k \vecYu_{(N+1,n)}), \\
\sigma^2_{(N+1,n)} (\x) &= \sigma^2_{(N,n)}(\x) - \veckn(\x)^\top\B_k \veckn(\x),
\end{align*}
with  $\B_k = \frac{\left(\K_{(N,n)}^{-1} \right)_{.,k} \left(\K_{(N,n)}^{-1}\right)_{k,.} }{a_k (a_k + 1)/r(\xu_k) - \left(\K_{(N,n)}\right)^{-1}_{k,k}}$, a rank-one matrix.
\label{lem:rep}
\end{lemma}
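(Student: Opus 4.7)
The plan is to derive both identities from a single rank-one perturbation of the unique-$n$ kernel matrix, followed by the Sherman--Morrison inverse update. The central observation is that replicating at $\xu_k$ does not change the set of unique locations $\xu_1,\ldots,\xu_n$ (so $\veckn(\x)$ is unchanged) and perturbs $\K_{(N,n)}$ only in its $k$th diagonal entry, where the noise contribution $r(\xu_k)/a_k$ becomes $r(\xu_k)/(a_k+1)$. Writing $e_k$ for the $k$th canonical basis vector, this can be expressed as the rank-one update
\[
\K_{(N+1,n)} = \K_{(N,n)} - \frac{r(\xu_k)}{a_k(a_k+1)}\, e_k e_k^\top .
\]

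Next I would invoke Sherman--Morrison to obtain $\K_{(N+1,n)}^{-1}$ in closed form. After multiplying the numerator and denominator of the resulting correction by $a_k(a_k+1)/r(\xu_k)$ to clear fractions, the inverse differs from $\K_{(N,n)}^{-1}$ by precisely the rank-one matrix $\B_k$ appearing in the statement; the denominator $a_k(a_k+1)/r(\xu_k)-(\K_{(N,n)}^{-1})_{k,k}$ comes straight out of the Sherman--Morrison denominator. A short positivity check (e.g.\ via Schur complements, or by noting that $\K_{(N+1,n)}$ is positive definite) guarantees this denominator is strictly positive whenever $r(\xu_k) > 0$, so $\B_k$ is well defined and of rank one.

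With the inverse update in hand, both displayed identities are direct substitutions into the ``unique-$n$'' predictive equations $\mu_{(N,n)}(\x) = \veckn(\x)^\top \K_{(N,n)}^{-1} \vecYu_{(N,n)}$ and $\sigma_{(N,n)}^2(\x) = k(\x,\x) + r(\x) - \veckn(\x)^\top \K_{(N,n)}^{-1} \veckn(\x)$ from~\eqref{eq:Npred}. The variance claim follows immediately, because $\veckn(\x)$ is independent of $N$. For the mean I would split
\[
\K_{(N+1,n)}^{-1} \vecYu_{(N+1,n)} = \K_{(N,n)}^{-1} \vecYu_{(N,n)} + \K_{(N,n)}^{-1}(\vecYu_{(N+1,n)} - \vecYu_{(N,n)}) + (\K_{(N+1,n)}^{-1} - \K_{(N,n)}^{-1})\vecYu_{(N+1,n)},
\]
recognize the last correction as a $\B_k$-term, and use that $\vecYu_{(N+1,n)} - \vecYu_{(N,n)}$ is nonzero only in its $k$th coordinate (reflecting the single new replicate's effect on $\yu_k$).

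The main obstacle is bookkeeping rather than any deep idea: one has to keep track of three simultaneous changes --- the diagonal perturbation to $\K_{(N,n)}$, the single-entry update in $\vecYu$, and the fact that $\veckn(\x)$ stays put --- and confirm the Sherman--Morrison denominator remains strictly positive. Once aligned, both identities drop out of the single rank-one inverse update together with a reshuffling of terms in the standard GP predictive equations.
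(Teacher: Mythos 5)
Your argument is correct and follows essentially the same route as the paper's own proof: express the replicate at $\xu_k$ as the rank-one diagonal perturbation $\K_{(N+1,n)}-\K_{(N,n)}=-\frac{r(\xu_k)}{a_k(a_k+1)}e_ke_k^\top$, apply Sherman--Morrison to get $\K_{(N+1,n)}^{-1}=\K_{(N,n)}^{-1}+\B_k$, and substitute into the unique-$n$ predictive equations, with your positivity check on the denominator being a small (welcome) addition the paper omits. Note that both your derivation and the paper's substitution actually yield $+\B_k\vecYu_{(N+1,n)}$ in the mean update, so the minus sign in the displayed lemma is a typo in the statement rather than a discrepancy in your proof.
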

\begin{proof}
By adding a replicate at $\xu_k$, the only change is to augment $a_k$ by
one in $\K_{(N+1,n)}$, namely $\K_{(N+1,n)} - \K_{(N,n)} = - \Diag\left(0, \ldots, 0, \frac{r(\xu_k)}{a_k (a_k +
1)}, 0, \ldots, 0 \right) =: -r(\xu_k) \vecu \vecu^\top = r(\xu_k)\vecu' \vecu^\top$ with $\vecu'
= -\vecu$. Similarly, $\vecYu_{(N+1, n)} -
\vecYu_{(N,n)} = \left( 0, \ldots, 0, \frac{1}{a_k+1}( y_k^{(a_{k+1})} - \bar{y}_k^{(N)}), \ldots, 0 \right)$ has only one non-zero element, residing in position $k$.

The Sherman-Morrison (i.e., rank-one Woodbury) formula gives
\begin{equation}
\K_{(N+1,n)}^{-1} = (\K_{(N,n)} + r(\xu_k) \vecu' \vecu^\top)^{-1} = \K_{(N,n)}^{-1} + \frac{\left(\!\K_{(N,n)}^{-1}\!\right)_{.,k} \left(\!\K_{(N,n)}^{-1}\!\right)_{k,.} }{\left(r(\xu_k) u_k^2 \right)^{-1} \!- \left(\!\K^{-1}_{(N,n)}\!\right)_{k,k}} = \K_{(N,n)}^{-1} + \B_k.
\label{eq:Kni_up}
\end{equation}
This enables us to write $\mu_{(N+1,n)}(\x) - \mu_{(N,n)}(\x) =
\veckn(\x)^\top\left( \K_{(N+1,n)}^{-1} \vecYu_{n+1} - \K_{(N,n)}^{-1} \vecYu_{(N,n)} \right)$
and $\sigma^2_{(N+1,n)} (\x) - \sigma^2_{(N,n)}(\x) = \veckn(\x)^\top\left(
\K_{(N+1,n)}^{-1} - \K_{(N,n)}^{-1} \right) \veckn (\x)$ and substitute $\K_{(N+1,n)}^{-1} - \K_{(N,n)}^{-1} = \B_k$ from \eqref{eq:Kni_up}. From the proof we also see that adding a replicate $\x_{N+1}$ incurs $\Or(n)$ rather than the usual $\Or(n^2)$
cost.
\end{proof}

As a corollary we obtain the following formula for one-step-ahead IMSPE at existing designs $I_{N+1}(\xu_k)$
(relying on the fact that $\Wn$ is unchanged when replicating):
\begin{equation}
I_{N+1}(\xu_k) = E - \tr(\K^{-1}_{(N+1,n)} \Wn) = E -
\tr( (\K^{-1}_{(N,n)} + \B_k) \Wn) = I_N - \tr(\B_k \Wn). \label{eq:Ir}
\end{equation}
Besides enabling a ``quick check'' (with cost $\Or(n^2)$) for finding the best replicate, perhaps a
more important application of this result is that \eqref{eq:Ir} yields an explicit condition
under which replication is optimal.

\begin{Proposition}
Given $n$ unique design locations $\xu_1, \ldots, \xu_n$, replicating is optimal (with respect to $I_{N+1}$)
if
\begin{equation}
r(\xnew) \geq \frac{\veck(\xnew)^\top \K_n^{-1} \Wn \K_n^{-1}
\veck(\xnew) - 2 \vecw(\xnew)^\top \K_n^{-1} \veck(\xnew) + w(\xnew,
\xnew)}{\tr(\B_{k^*} \Wn)} - \cs_n^2(\xnew), \quad \forall \xnew \in D, \label{eq:r}
\end{equation}
where $k^* \in \argmin_{1 \leq k \leq n} I_{N+1}(\xu_k)$.
\label{prop:rep}
\end{Proposition}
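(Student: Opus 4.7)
The plan is to establish the sufficient condition by directly comparing the two closed-form expressions for the one-step-ahead IMSPE: equation \eqref{eq:I} for a general candidate $\xnew$ and equation \eqref{eq:Ir} for a replicate at $\xu_{k^*}$. Replication is optimal precisely when $I_{N+1}(\xu_{k^*}) \leq I_{N+1}(\xnew)$ for every $\xnew \in D$, and since both expressions share the common leading term $I_N$ the inequality reduces to a pointwise comparison of the ``reduction'' terms.

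The main algebraic step is to substitute $\vecg(\xnew) = -\sigma_n^2(\xnew)^{-1} \Kn^{-1} \veckn(\xnew)$ into the bracketed reduction in \eqref{eq:I}. After collecting terms, each of the three summands picks up a common factor of $\sigma_n^2(\xnew)^{-1}$, yielding
\[
I_{N+1}(\xnew) = I_N - \sigma_n^2(\xnew)^{-1}\left[\veckn(\xnew)^\top \Kn^{-1} \Wn \Kn^{-1} \veckn(\xnew) - 2\vecw(\xnew)^\top \Kn^{-1} \veckn(\xnew) + w(\xnew,\xnew)\right].
\]
Requiring this to be $\geq I_{N+1}(\xu_{k^*}) = I_N - \tr(\B_{k^*} \Wn)$, using $\sigma_n^2(\xnew) = \cs_n^2(\xnew) + r(\xnew)$, and rearranging to isolate $r(\xnew)$ on the left then reproduces condition \eqref{eq:r} term-for-term.

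The delicate point, which I expect to be the main obstacle, is sign control when dividing by $\tr(\B_{k^*} \Wn)$ and preserving the direction of the inequality. I would verify positivity in two steps: first, $\Wn$ is positive semidefinite as the Gram matrix in $L^2(D)$ of the functions $k(\xu_i,\cdot)$; second, $\B_{k^*}$ is positive semidefinite because its numerator is the outer product of the $k^*$-th column of $\Kn^{-1}$ with itself, while its denominator $a_{k^*}(a_{k^*}+1)/r(\xu_{k^*}) - (\Kn^{-1})_{k^*,k^*}$ is positive by a Schur-complement argument reflecting the positive definiteness of $\K_{(N+1,n)}$ guaranteed by Lemma \ref{lem:rep}. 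Together these imply $\tr(\B_{k^*}\Wn) \geq 0$, with strict positivity outside degenerate configurations, so division is valid and inequality-direction-preserving. As a consistency check, the bracketed numerator in \eqref{eq:r} equals $\int_D \left[k(\xnew,\x) - \veckn(\xnew)^\top \Kn^{-1} \veckn(\x)\right]^2 d\x \geq 0$, confirming that \eqref{eq:r} is a genuine lower bound on $r(\xnew)$ rather than a vacuous one.
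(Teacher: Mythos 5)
Your proposal is correct and takes essentially the same route as the paper's proof: compare $I_{N+1}(\xu_{k^*})$ from \eqref{eq:Ir} against \eqref{eq:I} with $\vecg(\xnew) = -\sigma_n^2(\xnew)^{-1}\K_n^{-1}\veckn(\xnew)$ substituted, then rearrange using $\sigma_n^2(\xnew) = \cs_n^2(\xnew) + r(\xnew)$. Your extra verification that $\tr(\B_{k^*}\Wn) \geq 0$ (via the Gram-matrix structure of $\Wn$ and the rank-one, positive-denominator form of $\B_{k^*}$) fills in a sign-control step the paper leaves implicit, but it does not change the argument.
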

\begin{proof}
We proceed by comparing $I_{N+1}(\xnew)$ values when $\xnew$ is a
replicate vis-\`a-vis a new design.  Summarizing our results from above, we have $I_{N+1}(\xu_k^*) = I_N - \tr(\B_{k^*} \Wn)$ for the (best) replicate and $I_{N+1}(\xnew)
= I_N - \left( \sigma_n^2(\xnew) \vecg(\xnew)^\top \Wn \vecg(\xnew) + 2
\vecw(\xnew)^\top \vecg(\xnew)
+ \sigma_n^2(\xnew)^{-1} w(\xnew, \xnew) \right)$ for a new design.  Replicating is better if
$I_{N+1}(\xu_k^*) \leq I_{N+1}(\xnew)$ for all $\xnew$, or when
\[
\tr(\B_{k^*} \Wn) \geq
\sigma_n^2(\xnew)^{-1} \left(\veck(\xnew)^\top \K_n^{-1} \Wn \K_n^{-1} \veck(\xnew)
   - 2 \vecw(\xnew)^\top \K_n^{-1} \veck(\xnew) + w(\xnew, \xnew) \right).
\]
Using the fact that $\sigma_n^2(\xnew) = \cs_n^2(\xnew) + r(\xnew)$ establishes the desired result.
\end{proof}
Referring back to Figure \ref{fig:rep}, the gray-dotted line in the {\em left}
panel represents the right hand side of Eq.~(\ref{eq:r}). Thus, any noise
surfaces with $r(\x)$ above this line will lead to the $I_{N+1}$ minimizer
being a replicate, cf.~the solid blue $r(\x)$ case in the figure.  Although this
illustration involves a heteroskedastic example, the inequality in
\eqref{eq:r} can also hold in the homoskedastic case.  In practice,
replication in homoskedastic processes is most often at the edges of the input
space, however particular behavior is highly sensitive to the settings of the
$n$ design locations, and their degrees of replication, $a_i$.

\subsection{Gradient expressions}
\label{ss:grad}

To facilitate the optimization of $I_{N+1}(\xnew)$ with respect to $\xnew$, we
provide closed-form expressions for its gradient, via partial derivatives.
Below the subscript $(p)$ denotes the $p$-th coordinate of the $d$-dimensional
design $\xnew \in D$. As a starting point, the chain rule gives
\begin{equation}
\dfrac{ \partial I_{N+1}(\xnew)}{\partial \xnewp}
= -\dfrac{ \partial \tr(\K_{n+1}^{-1} \W_{n+1})}{\partial \xnewp}
=- \tr \left(\K^{-1}_{n+1}\dfrac{ \partial \W_{n+1}}{\partial \xnewp}\right)
- \tr\left(\dfrac{ \partial \K^{-1}_{n+1}}{\partial \xnewp} \W_{n+1} \right)
\label{eq:dIfirst} .
\end{equation}
To manage the computational costs, we notate below how the partial derivatives are distributed
in another application of the partition inverse equations:

\noindent\begin{minipage}{7cm}
\begin{equation}
\dfrac{ \partial \K^{-1}_{n+1}}{\partial \xnew}
 = \begin{bmatrix}
 \Hmat(\xnew) & \vech(\xnew) \\
\vech(\xnew)^\top & v_1(\xnew)
 \end{bmatrix} \label{eq:dKi}
 \end{equation}
 \end{minipage}%
 \hfill
 \begin{minipage}{7cm}
\begin{equation}
\dfrac{ \partial \W_{n+1}}{\partial \xnewp}= \begin{bmatrix}
\mathbf{0}_{n \times n} & \vecc_1(\xnew) \\
\vecc_1(\xnew)^\top & c_2(\xnew)
 \end{bmatrix} \label{eq:dW}
 \end{equation}
 \end{minipage}\vskip1em
 \noindent where the detailed expressions and derivations are given in Appendix \ref{ap:grad}.

The expressions above are collected into the following lemma.

\begin{lemma}
The $p^\mathrm{th}$ component of the gradient for sequential ISMPE is 
\begin{align}
-\frac{\partial I_{N+1}}{ \partial \xnewp}
 & = 2 \vecc_1(\xnew)^\top \vecg(\xnew) + \vecc_2 \sigma_n^2(\xnew)^{-1}
 	+ \ones^\top_n [\Hmat(\xnew) \sigma_n^2(\xnew) \circ \W_{n}] \ones_n  \label{eq:dI} \\
 & \quad + 2 \vecw(\xnew)^\top \vech(\xnew)
 	+ v_1(\xnew) w(\xnew, \xnew). \nonumber 
\end{align}

\label{lem:grad_IMSPE_closed}
\end{lemma}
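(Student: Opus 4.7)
The plan is to start directly from equation~(\ref{eq:dIfirst}), which already reduces $\partial I_{N+1}/\partial \xnewp$ to the sum of two traces involving $\K_{n+1}^{-1}$, $\W_{n+1}$, and their partial derivatives. Into those two traces I would substitute the $2\times 2$ block-partitioned expression for $\K_{n+1}^{-1}$ from~(\ref{eq:Knp1i_up}), the analogous partition of $\W_{n+1}$ written immediately before it, and the partitioned derivatives~(\ref{eq:dKi})--(\ref{eq:dW}). Once everything is in block form, the calculation reduces to multiplying $2\times 2$ block matrices and reading off the diagonal blocks whose traces contribute.

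For the first trace $\tr(\K_{n+1}^{-1}\,\partial_p \W_{n+1})$, the vanishing top-left block of $\partial_p \W_{n+1}$ kills the most expensive term, leaving only $\vecg(\xnew)$ paired with $\vecc_1(\xnew)$ in both off-diagonal blocks, plus $\sigma_n^2(\xnew)^{-1}$ paired with $c_2(\xnew)$ in the corner. The blockwise trace then collapses to $2\,\vecc_1(\xnew)^\top \vecg(\xnew) + \sigma_n^2(\xnew)^{-1}\, c_2(\xnew)$, where the factor $2$ comes from the symmetric contributions on the $(1,1)$ and $(2,2)$ block-diagonal entries. For the second trace $\tr((\partial_p \K_{n+1}^{-1})\, \W_{n+1})$, all four blocks are populated, and an analogous expansion gives $\tr(\Hmat(\xnew)\Wn) + 2\,\vech(\xnew)^\top \vecw(\xnew) + v_1(\xnew)\, w(\xnew,\xnew)$, again with the factor $2$ arising from the off-diagonal blocks. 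Summing the two traces with the overall minus sign from~(\ref{eq:dIfirst}) and rewriting $\tr(\Hmat\Wn) = \ones_n^\top (\Hmat \circ \Wn)\ones_n$ using the symmetry of $\Wn$, exactly as done for $\tr(\Kn^{-1}\Wn)$ in Section~\ref{ssec:closed}, produces the claimed identity~(\ref{eq:dI}); the scalar $\sigma_n^2(\xnew)$ grouped inside the Hadamard product traces back to the precise algebraic form in which $\Hmat$ is most naturally written after differentiating the correction $\vecg\vecg^\top \sigma_n^2(\xnew)$ in~(\ref{eq:Knp1i_up}).

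The main obstacle is not conceptual but bookkeeping. Because $\vecg(\xnew)$ and $\sigma_n^2(\xnew)$ both depend on $\xnew$, the explicit entries of $\Hmat(\xnew)$, $\vech(\xnew)$, $v_1(\xnew)$, $\vecc_1(\xnew)$, and $c_2(\xnew)$ require careful product-rule differentiation of the partition-inverse identity~(\ref{eq:Knp1i_up}) and of the integrals $w(\xnew,\xu_i)$ and $w(\xnew,\xnew)$ defining $\W_{n+1}$. That entry-level calculation is exactly what Appendix~\ref{ap:grad} is designed to carry out, and importing the resulting expressions back into the blockwise trace computation above completes the proof. No new ingredient beyond differentiating the building blocks of the partition-inverse identity and invoking the trace--Hadamard identity for symmetric matrices is required.
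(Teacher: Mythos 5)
Your proposal is correct and follows essentially the same route as the paper: substitute the block-partitioned derivatives (\ref{eq:dKi})--(\ref{eq:dW}) together with (\ref{eq:Knp1i_up}) into the chain-rule expression (\ref{eq:dIfirst}), read off the blockwise traces (with the factor $2$ from the off-diagonal blocks), and convert $\tr(\Hmat\Wn)$ to the Hadamard form, deferring the entrywise expressions to Appendix \ref{ap:grad} exactly as the paper does.
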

\begin{proof}
Beginning with Eq.~(\ref{eq:dIfirst}), substitute (\ref{eq:dKi})
for the partial derivative of $\K_{n+1}^{-1}$, and (\ref{eq:dW}) for that of $\W_{n+1}$.
Then, note that
$\ones^\top_n [\Hmat(\xnew) \sigma_n^2(\xnew) \circ \W_{n}] \ones_n$ can be
rewritten as $v_2(\xnew) \vecg(\xnew)^\top \Wn \vecg(\xnew)$ $+ 2 \sigma_n^2(\xnew)
\vecg(\xnew)^\top \Wn \vech(\xnew)$.
\end{proof}

\noindent Since no further matrix decompositions are required, note that
calculating the gradient of $I_{N+1}(\xnew)$ in this way incurs computational
costs in $\Or(n^2)$.

\subsection{Looking ahead over replication}
\label{sec:ahead}

Under certain conditions, sequential design via IMSPE, i.e., greedily
minimizing $I_{N+1}$ to choose $\x_{N+1}$, can well-approximate a one-shot
batch design of size $N_{\mathrm{max}}$ because the criterion is monotone
supermodular \citep{Das2008,Krause2008}.   However, these results assume a
known kernel hyperparameterization $k(\cdot, \cdot)$ and constant noise level
$r(\cdot)$. In the more realistic case where those quantities must be
estimated from data, and potentially with non-constant variance, there is
ample evidence in the literature suggesting that sequential design can be {\em
much better} than a batch design, e.g., based on a poorly-chosen
parameterization, and no worse than an idealistic one
\citep{seo00,gra:lee:2009}. However, that does not mean that greedy, myopic,
selection is optimal.  By accounting for potential future selections in
choosing the very next one, it is possible to obtain substantially improved
final designs.  However, the calculations involved, especially to ``look
ahead'' from early sequential decisions to a far-away horizon
$N_{\mathrm{max}}$, require expensive dynamic programming techniques to search
an enormous decision space.

Approximating that full search, by limiting the lookahead horizon or
otherwise reducing the scope of the decision space, has become an active area
in Bayesian optimization via expected improvement
\citep{Ginsbourger2010,Gonzalez2016,Lam2016}. Targeting overall accuracy has
seen rather less development, the work by \cite{Huan2016} being an important
exception.  Here we aim to port many of these ideas to our setting of IMSPE
optimization, where the nature of our approximation involves a weak bias
towards replication which we have shown can be doubly beneficial in design.

The essential decision boils down to either choosing an $\x_{N+1}$ to explore,
i.e., a new design element $\xu_{n+1}$, or choosing to replicate with
$\x_{N+1}$ taken to be some $\xu_k$, for $k
\in \{1,\ldots,n\}$.  However, rather than directly minimizing (\ref{eq:I}) or
(\ref{eq:Ir}), respectively, we perform a ``rollout'' lookahead procedure
similar to \cite{Lam2016} in order to explore the impact of those choices on a
limited space of future design decisions.  The updating equations in the
previous subsections make this tractable.

In particular we consider a horizon $h \in \{0, 1, 2, \ldots \}$ determining
the number of design iterations to look ahead, with $h=0$ representing
ordinary (myopic) IMSPE search.  Although larger values of $h$ entertain
future sequential design decisions, the goal (for any $h$) is to determine
what to do \emph{now}. Toward that end, we evaluate $h+1$ ``decision paths''
spanning alternatives between exploring sooner and replicating later, or vice
versa. During each iteration along a given path, either (\ref{eq:I}) or
(\ref{eq:Ir}) (but not simultaneously) is taken up as the hypothetical action.
On the first iteration, if a new $\xu_{n+1}$ is chosen by optimizing
Eq.~\eqref{eq:I}, that location (along with the existing $\xu_1, \ldots,
\xu_n$) are considered as candidates for future replication over the remaining
$h$ lookahead iterations (when $h \geq 1$). If instead a replicate is chosen
in the first iteration, the lookahead recursively searches over the choice of
which of the remaining $h$ iterations will pick a new $\xu_{n+1}$, with the
others optimizing over replicates. This recursion is resolved by moving to the
second iteration and again splitting the decision path into the choice between
replicate-explore-replicate-... and replicate-replicate-..., etc. After
recursively optimizing up to horizon $h$ along the $h+1$ paths, the ultimate
IMSPE for the respective hypothetical design with size $N + 1 +h$ is computed,
and the decision path yielding the smallest IMSPE is noted. Finally, the next
$\xu_{N+1}$ is a new location if the explore-first path was optimal, and is a
replicate otherwise.

\begin{figure}[ht!]
  \centering
  \def\svgwidth{0.75\textwidth}%
  \import{figures/}{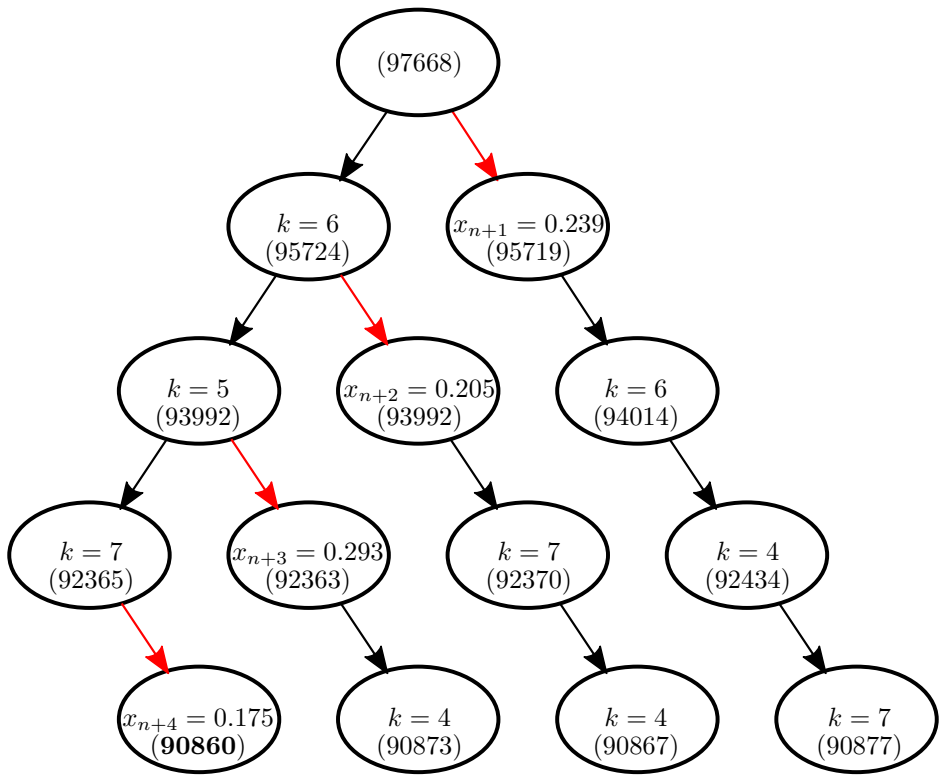}%
  \caption{Lookahead strategy for $h = 3$, starting with $n$ unique designs.
  Each 
  ellipse is a state, with a specific training set. Black dashed arrows
  represent the action of adding the best replicate, red solid arrows represent
  adding the best new design. This example considers augmenting the design for
  the example shown in Figure \ref{fig:alloc}. Numbers in parenthesis
  indicates the IMSPE at each stage (values have been multiplied by $10^8$).}
  \label{fig:rollout}
\end{figure}

A diagram depicting the decision space is shown in Figure \ref{fig:rollout}.
In this example we attempt to augment the design from Figure~\ref{fig:alloc}
using $h=3$. The path yielding the lowest IMSPE at the
horizon involves here replicating three times (adding copies of design
elements 6, 5, and 7 respectively), with exploration at $x_{n+4} = 0.175$ in
the final stage. Consequently, replication is
preferred over exploration and the next design element will be a replicate,
duplicating $\xu_6$.  The figure also illustrates that the cost of searching
for the best replicate over adding a new design element involves at most $h+1$
global optimizations of Eq.~(\ref{eq:I}), using the gradient.  Although
$(h+1)(h+2)/2-1$ discrete searches over (\ref{eq:Ir}) are required, the
diagram indicates that $(h+1)$ searches of mixed continuous and discrete type
may be performed in parallel. In practice, global optimization
with a budget of at least the same order as $n$ is an order of magnitude more
expensive than looking for the best replicate.

In this scheme the horizon, $h$, determines the extent to which
replicates are entertained in the lookahead, and therefore larger $h$ somewhat
inflates the likelihood of replication. Indeed, as $h$ grows, there are more and
more decision paths that delay exploration to a later iteration; if any of
them yield a smaller IMSPE than the explore-first path, the
immediate action is to replicate. However, note that although larger $h$
allows more replication before committing to a new, unique $\xu_{n+1}$,  it
also magnifies the value of an $\xu_{n+1}$ chosen in the first iteration, as
it could potentially accrue its own replicates in subsequent rollout
iterations.  Therefore, although we do find in practice that larger $h$ leads
to more replication in the final design, this association is weak. Indeed, we
frequently encounter situations where exploration is (temporarily) preferred
for arbitrarily large horizons.

\section{Modeling, inference and implementation}
\label{sec:implement}

Here we consider inference and implementation details, in particular for
learning the noise process $r(\cdot)$.  Our presumption is that little is
known about the noise, however it is worth noting that this assumption may not
be well aligned to some data-generating mechanisms, e.g., as arising from
Monte Carlo simulations with known convergence rates \citep{Picheny:2013}.
After reviewing a promising new framework called {\tt hetGP}, for
heteroskedastic GP surrogate modeling \citep{Binois2016}, we provide
extensions facilitating fast sequential updating of that predictor along with
its (hyper-) parameterization. We conclude with schemes for adjusting the
lookahead horizon introduced in Section \ref{sec:ahead}.

\subsection{Heteroskedastic modeling}
\label{sec:hetGP}

One way of dealing with heteroskedasticity in GP regression
is to use empirical estimates of the variance as in SK \citep{Ankenman2010},
described briefly in Section \ref{sec:concept}.  Although this has the
downside of requiring a minimum amount of replication, the calculations are
straightforward and computations are thrifty. However, sequential design
requires predicting the variance at new locations, and to accommodate that
within SK \citeauthor{Ankenman2010}, recommend fitting a second,
independent, GP for $\hat{r}(\x)$ to smooth the empirical variances.

An alternative is to model the (log) variance as a latent process, jointly
with the original ``mean process''
\citep{goldberg:williams:bishop:1998,kersting:etal:2007}.  However these
methods can be computationally cumbersome, and are not tailored to leverage
the computational savings that come with replication.  Here we rely on the
hybrid approach detailed by \cite{Binois2016}, leveraging replication and
learning the latent log-variance GP based on a joint log-likelihood with the
mean GP. We offer the following by way of a brief review.

For common choices of stationary kernel $k(\x, \x') = \nu c(\x -
\x')$, the covariance matrix for the ``mean GP'' may be characterized as $\Kn =
\nu(\Cn + \Lan)$ with $\Cn = \left(c(\xu_i -
\xu_j) \right)_{1 \leq i,j \leq n}$; and for the ``noise GP'' we take the
analog $\log \Lan = \Cg (\Cg + g
\An^{-1})^{-1} \Deltan$ where $\Cg$ is the equivalent of $\Cn$ for the second
GP with kernel $k_{(g)}$. That is, $\log \Lan$ is the prediction given by a GP
based on latent variables $\Deltan = (\delta_1, \ldots, \delta_n)$ that can be
learned as additional parameters, alongside hyperparameters
of $k_{(g)}$ and nugget $g$.

Based on this representation, the MLE of $\nu$ is $$\nuN := N^{-1}
\left(N^{-1} \sum_{i=1}^n \frac{a_i}{\lambda_i} s_i^2 + \vecYu^\top (\Cn +
\An^{-1}\Lan)^{-1} \vecYu \right)$$ with $s_i^2 = \frac{1}{a_i}
\sum_{j=1}^{a_i} (y_i^{(j)} - \bar{y}_i)^2$ whereas the rest of the
parameters and hyperparameters can be optimized based on the concentrated
joint log-likelihood:
\begin{align*}
\log \tilde L =~&   - \frac{N}{2} \log \nuN  \nonumber
- \frac{1}{2} \sum\limits_{i=1}^n \left[(a_i - 1)\log \lambda_i + \log a_i \right] -
\frac{1}{2} \log |\Cn + \An^{-1}\Lan|  \\
& - \frac{n}{2} \log \hnug  - \frac{1}{2} \log |\Cg + g \An^{-1}| + \mbox{Const}, \label{eq:jllik}
\end{align*}
with $\hnug = n^{-1} \Deltan^\top (\Cn + g \An^{-1})^{-1} \Deltan$. Closed
form derivatives are given in \cite{Binois2016}, while an {\sf R}
\citep{cran:R}  package with embedded {\sf C++} subroutines is available as
\texttt{hetGP} on CRAN.

Notice that for stationary kernels, the Eq.~(\ref{eq:Ifull}) reduces to
$\mathrm{IMSPE}(\x_1, \ldots, \x_N) = \nu( 1 - \mathrm{tr}(\Cn^{-1} \Wn))$.
The look-ahead IMSPE over replicates (\ref{eq:Ir}) becomes
$I_{N+1}(\xu_k) = \nu( 1 - \mathrm{tr}(\B_k' \Wn))$ with $\B_k' =
\frac{\left(\left(\Cn + \An^{-1}\Lan\right)^{-1}\right)_{.,k} \left(\left(\Cn + \An^{-1}\Lan\right)^{-1}\right)_{k,.} }{a_k (a_k + 1)/\lambda_k -
\left(\Cn + \An^{-1}\Lan \right)^{-1}_{k,k}}$. Also, the gradient of $I_{N+1}(\xnew)$ from (\ref{eq:I})
involves $\partial r(\xnew)/ \partial \xnewp$, which for \texttt{hetGP}
reduces to
\[
\frac{\partial \veckg(\xnew) (\Cg + g \An^{-1})^{-1} \Deltan}{ \partial
\xnewp} = \frac{\partial \veckg(\xnew) }{\partial \xnewp} (\Cg + g
\An^{-1})^{-1} \Deltan.
\]

\subsection{Sequential heteroskedastic modeling}\label{sec:seq-hetGP}

Optimizing IMSPE with lookahead over replication [Section \ref{sec:ahead}] is
only practical if the \texttt{hetGP} model can be updated efficiently when new
simulations are performed. Two different update schemes are necessary: one for
potential new designs, considered during the process of evaluating
alternatives under the criteria [Eqs.~(\ref{eq:I}--\ref{eq:Ir})]; and another
for the actual update with new simulation $y(\x_{N+1})$.

When looking-ahead, no new $y$-value is entertained, so hyperparameters of both
GPs stay fixed and only the latents may need to be augmented. Updating $\K_n$
follows (\ref{eq:Knp1i_up}) or (\ref{eq:Kni_up}), depending on whether the
candidate $\xnew$ is new or a replicate. In the latter case, only $\An$ is
updated for the ``noise GP''. Conversely, if a new location is added, an estimate of
$r(\xu_{n+1})$ is required, which can come from the noise GP via exponentiating
the usual GP predictive equations. That is, the new latent $\delta_{n+1}$ is
taken as the predicted value by the noise GP.

The second update scheme---using the $y(\x_{N+1})$ observation---will require
updating all the GPs' hyperparameters (including latents). Optimizing all
hyperparameters of our heteroskedastic GP model is a potentially costly
$\Or(n^3)$ procedure. Instead of starting from scratch, a warm start of the
MLE optimization is performed. Where they exist, previous values can be
re-used as starting values, leaving only the latent $\tilde{\delta}$ at the
newest design point, that is $\tilde{\delta} = \delta_{n+1}$ for a new location
or $\tilde{\delta} = \delta_k$ for a replicate,
requiring special attention.

As in the first case, $\tilde{\delta}$ may be
initialized at its predicted value. But taking into account the new
$y(\x_{N+1})$ makes it possible to combine information from the latent noise GP
with results from empirical estimation of the log-variances.
\cite{Kaminski2015} explores this for updating SK models when new observations
are added---a special case of the typical GP update formulas.
The resulting combination of two predictions is via the geometric mean and can be summarized by the Gaussian
$\mathcal{N}(\tilde{\delta}, V_{\tilde{\delta}})$ with
$$ \tilde{\delta} = \left(\frac{\mu_{(g)} (\x_{N+1})}{\cs^2_{(g)}(\x_{N+1})} + \frac{\hat{\delta}}{ V_{\hat{\delta}} } \right) \left( \frac{1}{\cs^2_{(g)}(\x_{N+1})} + \frac{1}{V_{\hat{\delta}}} \right)^{-1}, \quad
V_{\tilde{\delta}} = \left(\frac{1}{\cs^2_{(g)}(\x_{N+1})} + \frac{1}{V_{\hat{\delta}}} \right)^{-1},
$$
where $\mu_{(g)} (\x_{N+1})$ and $\cs^2_{(g)}(\x_{N+1})$ are the prediction
from the noise GP\footnote{To avoid predictive variances close to zero for
replicates, i.e., $\tilde \delta = \delta_k$, such that
$\sigma^2_{(g)}(\x_{N+1}) \approx 0$ ($g$ should be small), the variance is
given by the ``downdated'' GP instead (i.e., the predicted variance if
removing the replicated design), that are usually used for Leave-One-Out
estimations and can be found, e.g., in \cite{Bachoc2013}, giving
$\sigma^2_{(g)}(\x_{N+1}) =
\left( \left(\Cg + g \An^{-1} \right)^{-1}_{k,k} \right)^{-1}$. 
} while $\hat{\delta}$ is the empirical estimate
of the log variance at $\x_{N+1}$, itself with variance $V_{\hat{\delta}}$.
We take
$\hat{\sigma}^2 = \nuN^{-1} \frac{1}{\tilde{a}} \sum_{j=1}^{\tilde{a}}
\left(y^{(j)}(\x_{N+1}) -
\mu_n(\x_{N+1})\right)^2$, i.e., the uncorrected sample variance estimator
that exists even for $\tilde{a} = 1$, i.e., the number of observations at
$\x_{N+1}$. Supposing that the $y^{(j)}(\x_{N+1})$'s are i.i.d.\ Gaussian, we have $\tilde{a}
\hat{\sigma}^2/\sigma^2 \sim \chi^2_{\tilde{a}}$. Accounting for the log-transformation, as in \cite{Boukouvalas2010}, we get $\hat{\delta} =
\log(\hat{\sigma}^2) - \Psi((\tilde{a})/2) - \log(2) + \log(\tilde{a})$ and
$V_{\hat{\delta}} = \Psi_2(\tilde{a}/2)$ with $\Psi$ and $\Psi_2$ the digamma
and trigamma functions.

Finally, the quick updates described above are predicated on improving local
searches, and are thus not guaranteed to globally optimize the likelihood,
which is always a challenge in MLE settings.  The risk of becoming trapped in
an inferior local mode is greater at earlier stages in the sequential design,
i.e., when $n$ is small. In practice, we find it beneficial to periodically
restart the optimization with conservative (potentially random)
initializations, which is cheap in that (small $n$) setting.  As $n$ increases,
and the likelihood becomes more peaked, we find that costly restarts are
of limited practical value.  Local refinements, as described above, are fast
and reliable.

\subsection{Defining the horizon}
\label{sec:horiz}

Although the horizon $h$ in the lookahead criteria in Section \ref{sec:ahead}
could be fixed arbitrarily, or chosen based on computational considerations
(smaller being faster), here we propose two heuristics to set it adaptively
based on design goals. The adaptiveness means that $h \equiv h_N$ is now
indexed by the current design size.

 The first heuristic involves managing surrogate modeling costs, targeting a
fixed ratio $\rho = n/N$ of unique to full design size.  The goal is to ensure
that each new unique location is, ``worth its weight'' in replicates
from a computational perspective. The choice of $n/N$ is arbitrary
---other targets will do---but we focus on this particular one because its
magnitude is easy to intuit.  The \emph{Target} heuristic we use to ``maintain $\rho$'' as
sequential design steps progress is as simple as it is effective:
\begin{equation}
h_{N+1} \leftarrow \left\{
\begin{array}{lll}
h_N + 1 & \mbox{if } n/N > \rho & \mbox{and a new point $\xu_{n+1}$ is chosen}; \\
\max\{h_N-1,-1\} & \mbox{if } n/N < \rho & \mbox{and a replicate is chosen}; \\
h_N & \mbox{otherwise}.
\end{array}
\right. \label{eq:htarget}
\end{equation}
If the current ratio is too high and a new point $\xu_{n+1}$ was recently
added, making the ratio even higher, the horizon is increased to encourage
future replication. If rather a replicate has been added while the current
ratio was too low, then the horizon is decreased, encouraging exploration.
Otherwise the evolution is on the right trajectory and the horizon is
unchanged.  Observe that (\ref{eq:htarget}) allows a horizon of $-1$, which
is described shortly.

To implement the continuous search (\ref{eq:I}), we
deploy a limited multistart scheme over {\tt optim} searches in {\sf R} with
\verb!method="lbfgsb"! and closed form gradients (\ref{eq:dI}).  In parallel,
a discrete search over $\xu_1,\ldots, \xu_n$ is carried out via (\ref{eq:Ir}).
The two solutions thus obtained are compared against thresholds, and if the
$\xnew$ found via continuous search (or its relative objective value) is
within (say) $\varepsilon = 10^{-6}$ of that of the best $\xu_{k^*}$, the
replicate is preferred on computational grounds. The horizon $h_{N}=-1$ is an
exception, adding a new $\xnew$ no matter how close it is to the replicate
candidate. Thus, $h \equiv -1$ can be roughly thought of as incrementing $n$
by 1 along with $N$ at each iteration; in practice it still occasionally
generates replicates, primarily at the corners of the input space, if the
corresponding multistart scheme determines that the $I_{N+1}$-minimizer lies
at the boundary of $D$. On the other hand,  $h \equiv 0$ obtains many
replicates due to thresholding, which yields a ``soft'' clustering mechanism
for $(\xu_1, \ldots, \xu_n)$. Indeed, every iteration where we have a
situation resembling Figure \ref{fig:rep},
the $h = 0$ rule will select a replicate and not increment $n$. In
contrast, $h = -1$ will only ``stumble'' into a replicate if the optimizer
finds a global minimum at the edge of the domain.  It does not explicitly
entertain replicates via \eqref{eq:Ir}.

Our empirical work [Section \ref{sec:experiment}] illustrates how horizon
targeting effectively manages computational costs. Although at times the
horizon $h_N$ can reach quite high values (upwards of $h_N=20$), the
computational cost of search is negligible compared to updating the GP fits.
Meanwhile high horizons represent a ``light touch'' preference for
replication: they do not preclude exploration, rather they somewhat discourage
it. Thus, while the ultimate number of unique locations $n$ is dependent on
the entire history of the simulations, and hence comes with a sampling
distribution, the corresponding search heuristic is much simpler than one that
would impose a hard constraint on the final $n$.

When accuracy is the ultimate goal we prefer a different adaptation of $h$,
making a more explicit link between $\rho$ and the signal-to-noise ratio in
the data. In {\em linear} regression contexts, one way to deal with
heterogeneity is to allocate replications on unique designs such that the
ratio of the empirical variance over number of replicates are close to each
other, i.e., to enforce homogeneity of $\hat{\sigma}_i^2/a_i$
\citep{Kleijnen2015}. This approach captures the basic idea that more
replicates are needed where $r(\x)$ is high, but applicability to
our setup is not direct because such a scheme does not factor in correlations
estimated by GPs. \citet{Ankenman2010} address this within SK by
considering the allocation of the remaining budget of evaluations over
existing designs, i.e., to determine where to augment with additional
replicates. In particular, they show that the optimal allocation of the $N$
simulations across $n$ unique designs is summarized by
$\An^*$, a diagonal matrix with components
\begin{align}\label{eq:sk-imspe}
a_i^* \approx N
\frac{\sqrt{r(\xu_i)K_i}}{\sum\limits_{j=1}^n
\sqrt{r(\xu_j)K_j}}, \quad \mbox{where} \quad K_i = (\K_n^{-1} \Wn \K_n^{-1})_{i,i}.
\end{align}
We emphasize that \eqref{eq:sk-imspe} only addresses the replication
aspect---the designs $\xu_1, \dots, \xu_n$ must be entered {\em a
priori} by the user. Thus, this recipe is not directly implementable in a
sequential design setting. One solution could be to generate (e.g., by
space-filling) a candidate design of pre-determined size $n$ and
$\underbar{r}$ replicates per design and then, after learning $\K_n$ and
$r(\xu_i)$'s, apply \eqref{eq:sk-imspe}. However, in that case one may end up
with $a_i^* < \underbar{r}$, as is the case in Figure \ref{fig:alloc}. This
illustrative 1d example highlights that in areas with low noise, a lower
number of replicates would have been better, while in more noisy areas, more
points are necessary. The right panel shows the $a_i^*$ at this stage
(referred to as \emph{batch}) compared to the greedy sequential allocation of
105 replicates. The latter is more realistic because it acknowledges that
design decisions cannot be undone\footnote{The batch scheme recommends fewer
than five replicates after five replicates where already used.}.

\begin{figure}[ht!]
  \centering
  \vspace{-0.75cm}
    \begin{subfigure}[t]{0.4\textwidth}%
	\centering%
	\includegraphics[width=\textwidth, trim = 0 5 0 10, clip = TRUE]{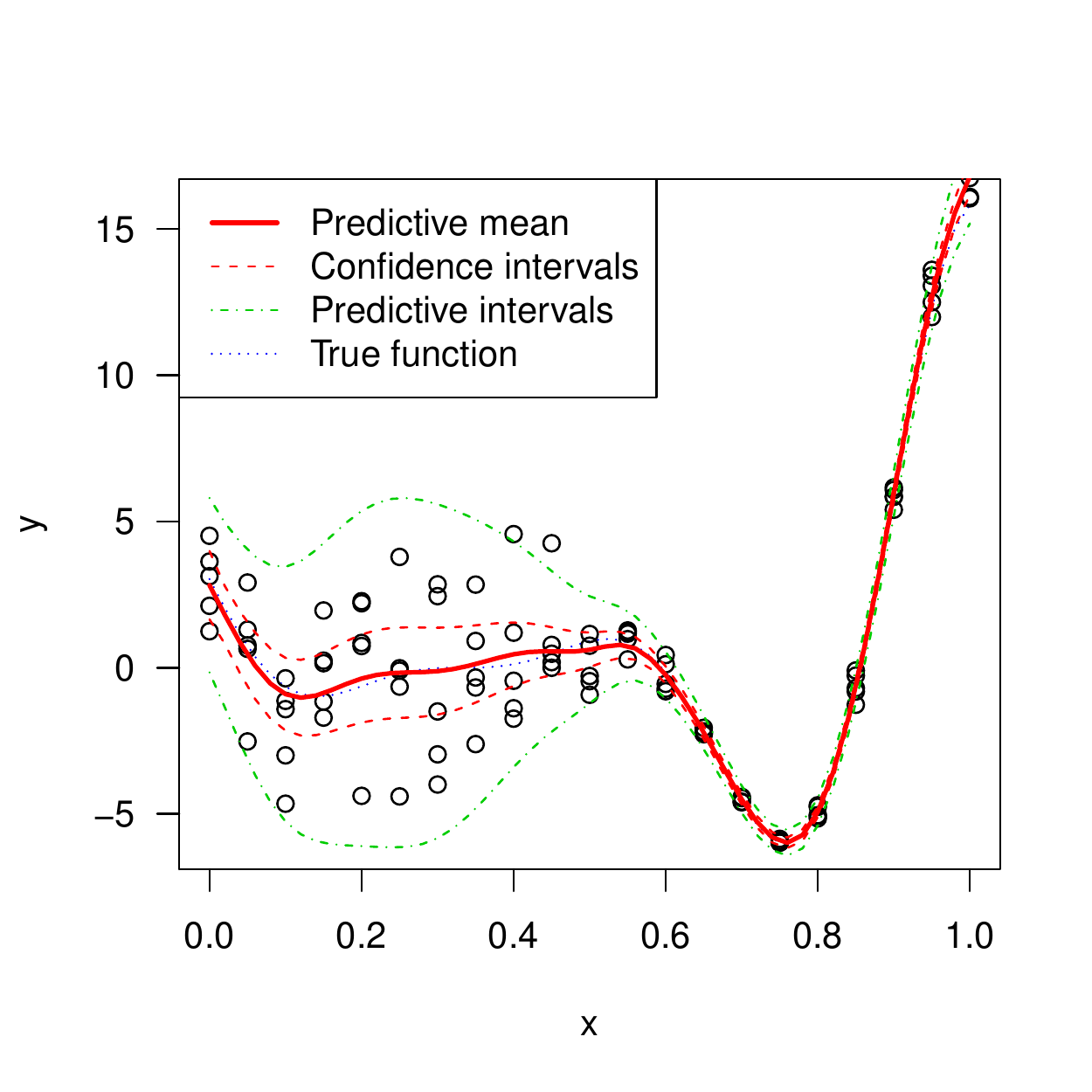}%
	\end{subfigure}%
	  \begin{subfigure}[t]{0.4\textwidth}%
	\centering%
	\includegraphics[width=\textwidth, trim = 0 5 0 10, clip = TRUE]{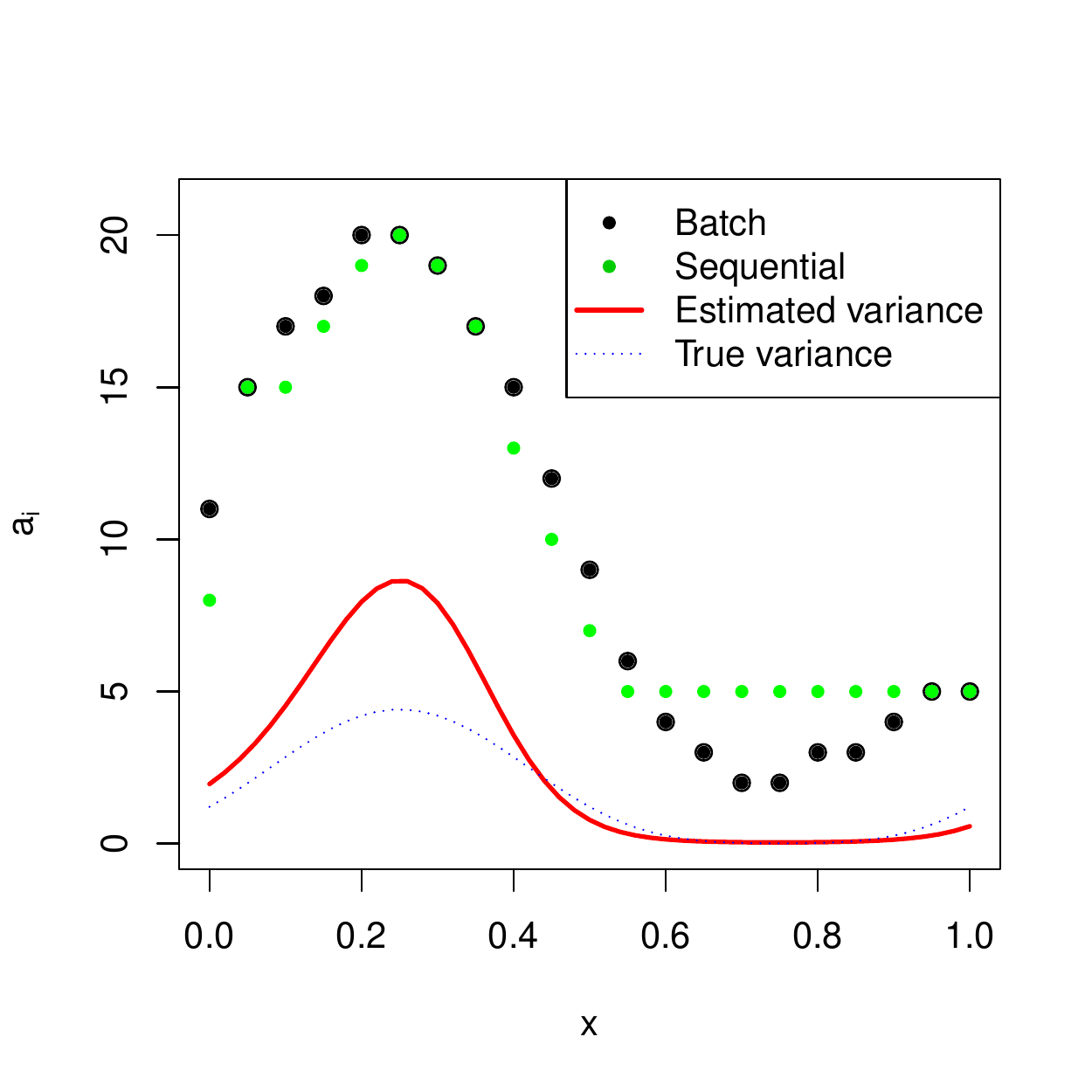}%
	\end{subfigure}%
  \vspace{-0.5cm}
  \caption{Left: toy example with 5 replicates at each of the 21 uniformly
  spaced unique design points. Right: proposed allocation of new 105
  replicates (total 210 observations) based on \eqref{eq:sk-imspe} and a
  greedy sequential approach.}%
  \label{fig:alloc}%
\end{figure}

Instead of such two-stage design, we utilize \eqref{eq:sk-imspe} in a
sequential fashion, by making a comparison between the allocation $a_i^*$ via
\eqref{eq:sk-imspe} (employing the current estimates of the noise $r(\xu)$ at
that particular stage) and the actual $a_i$'s collected so far from the
sequential design. The existing number of replicates $a_i$ is then either too
high, in which case no more replicates should be added, or too low, and could
benefit from more replication. We use this information in the \emph{Adapt}
scheme to adjust the horizon by sampling
\begin{equation}
h_{N+1} \sim \mathrm{Unif} \{ a'_1, \dots, a'_n \} \quad \text{ with }\quad a'_i := \max(0, a_i^* - a_i).
\label{eq:hadapt}
\end{equation}
Hence, if there are locations that require many more replicates according to
\eqref{eq:sk-imspe}, $h_{N+1}$ could be large to encourage replication.

\section{Experiments}
\label{sec:experiment}

Here we illustrate our methods and simpler variants on a suite of examples
spanning synthetic and real data from computer simulation experiments. Our
main metric is out-of-sample root mean-square (prediction) error (RMSE) over
the sequential design iterations, and in particular after the final iteration.
Since accurate estimation of variances over the input space is also an
important consideration (especially in the heteroskedastic context)---even
though our IMSPE criteria does not explicitly target learning variances---we
consider RMSE to the true $\log$ variance, when it is known, and when it is
not we use a proper scoring rule \citep[][Eq.~(27)]{gneiting:raftery:2007}
combining mean and variance forecasts out-of-sample.  Our main comparators are
non-sequential (space-filling) designs, homoskedastic GP predictors, and
combinations thereof.

\subsection{Illustrative one-dimensional example}

We start by reusing the 1d toy example from above [surrounding Figure
\ref{fig:rollout} and \ref{fig:alloc}] to show qualitatively the effect of the
horizon choice on the resulting designs. The underlying function is $f(x) =
(6x -2 )^2 \sin(12x-4)$, from \citet{Forrester2008}, and the noise function is
$r(x) = (1.1 + \sin(2 \pi x))^2$. The experiment starts with an initial maximin
LHS with 10 points, no replicates, and the GPs use a Gaussian kernel.

\begin{figure}[ht!]
  \centering
	\includegraphics[scale=0.38,trim=0 30 44 0,clip=TRUE]{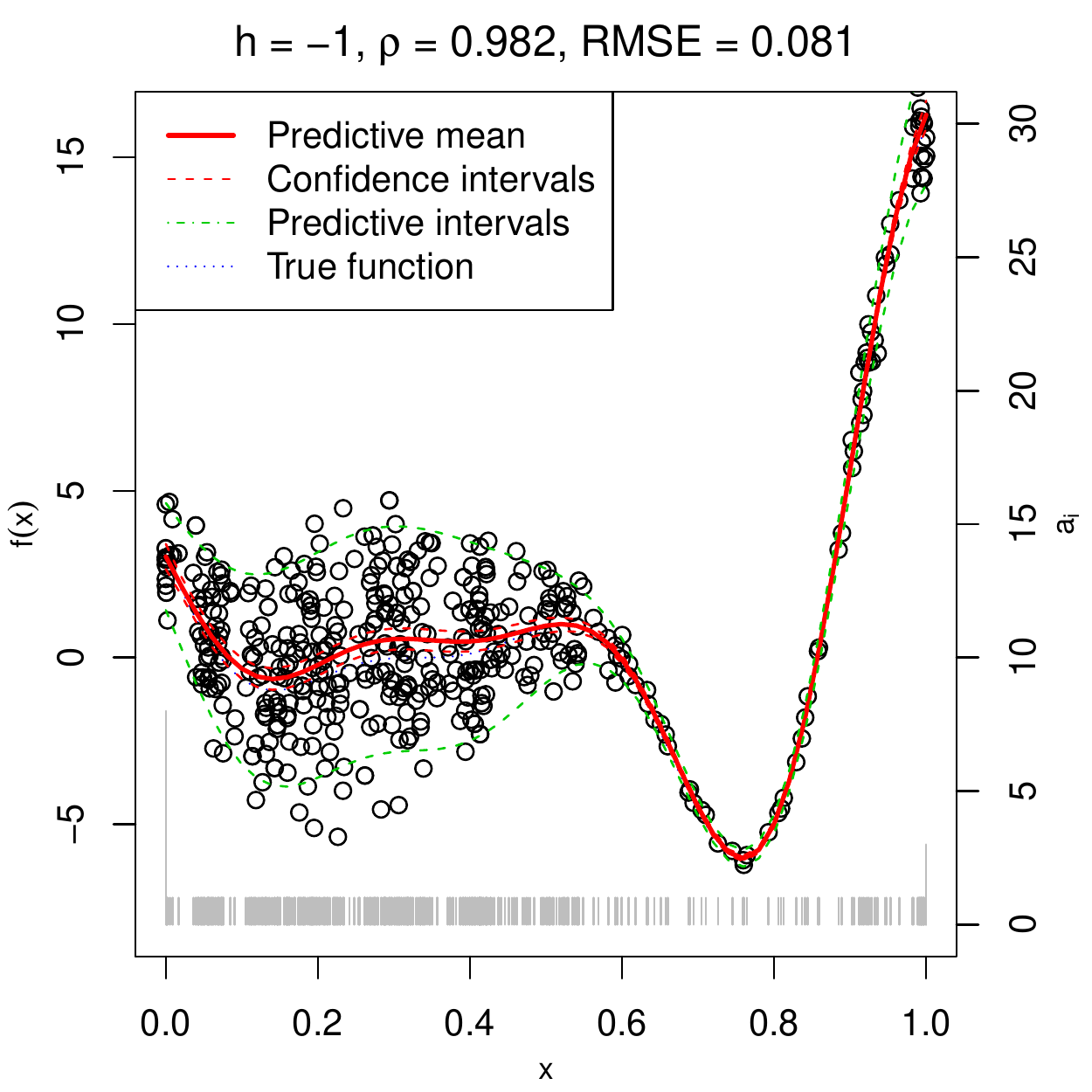}
	\includegraphics[scale=0.38,trim=44 30 44 0,clip=TRUE]{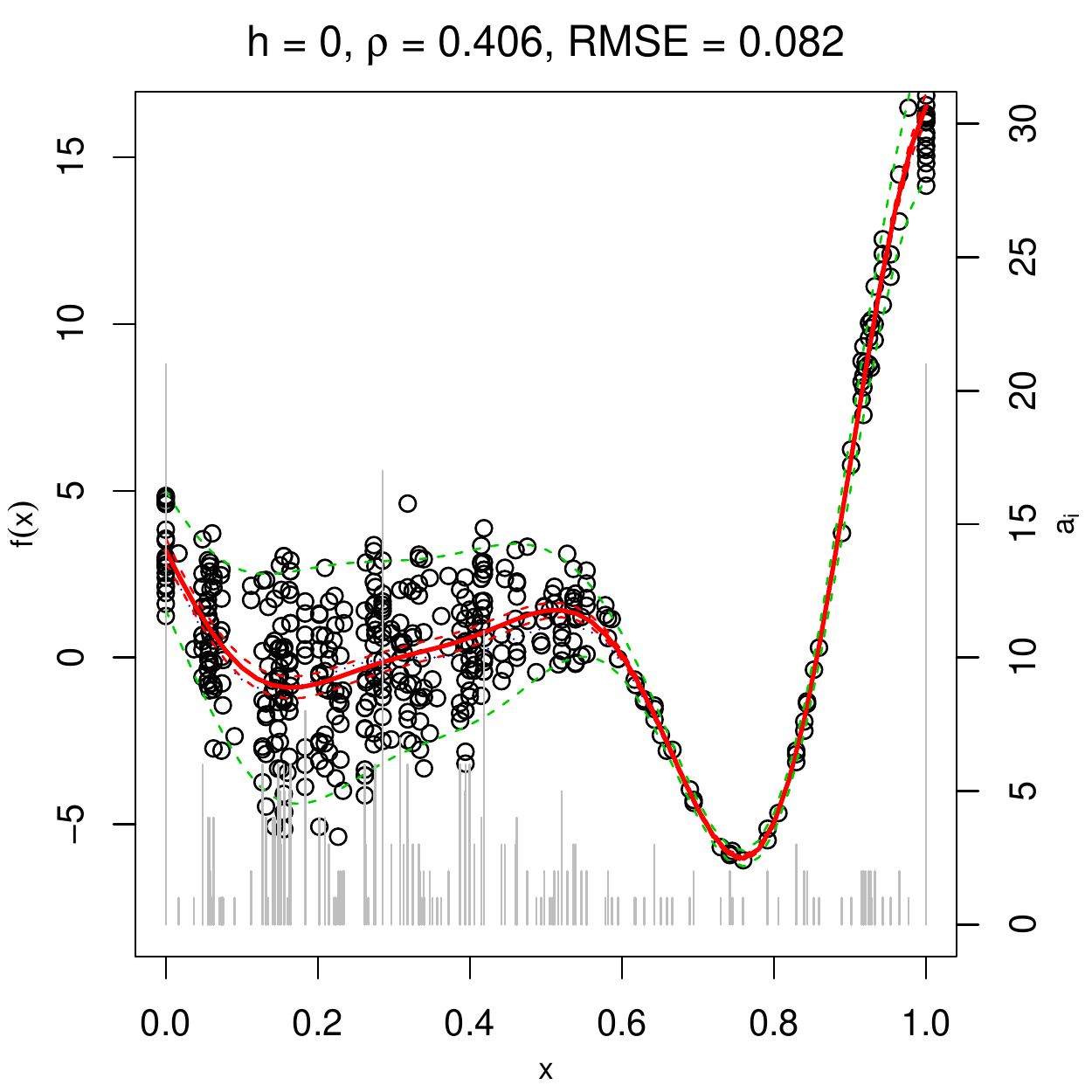}
	\includegraphics[scale=0.38,trim=44 30 44 0,clip=TRUE]{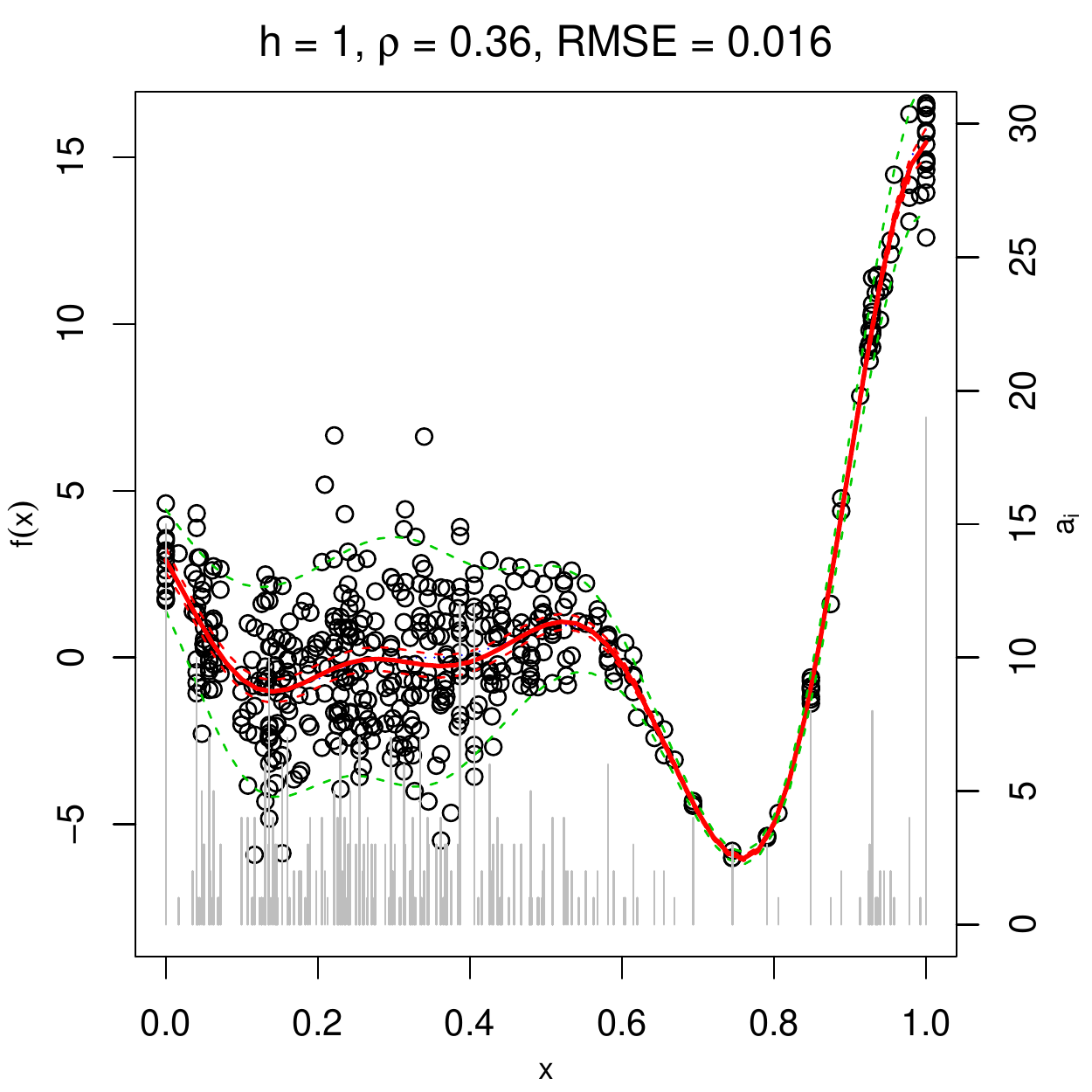}
	\includegraphics[scale=0.38,trim=44 30 0 0,clip=TRUE]{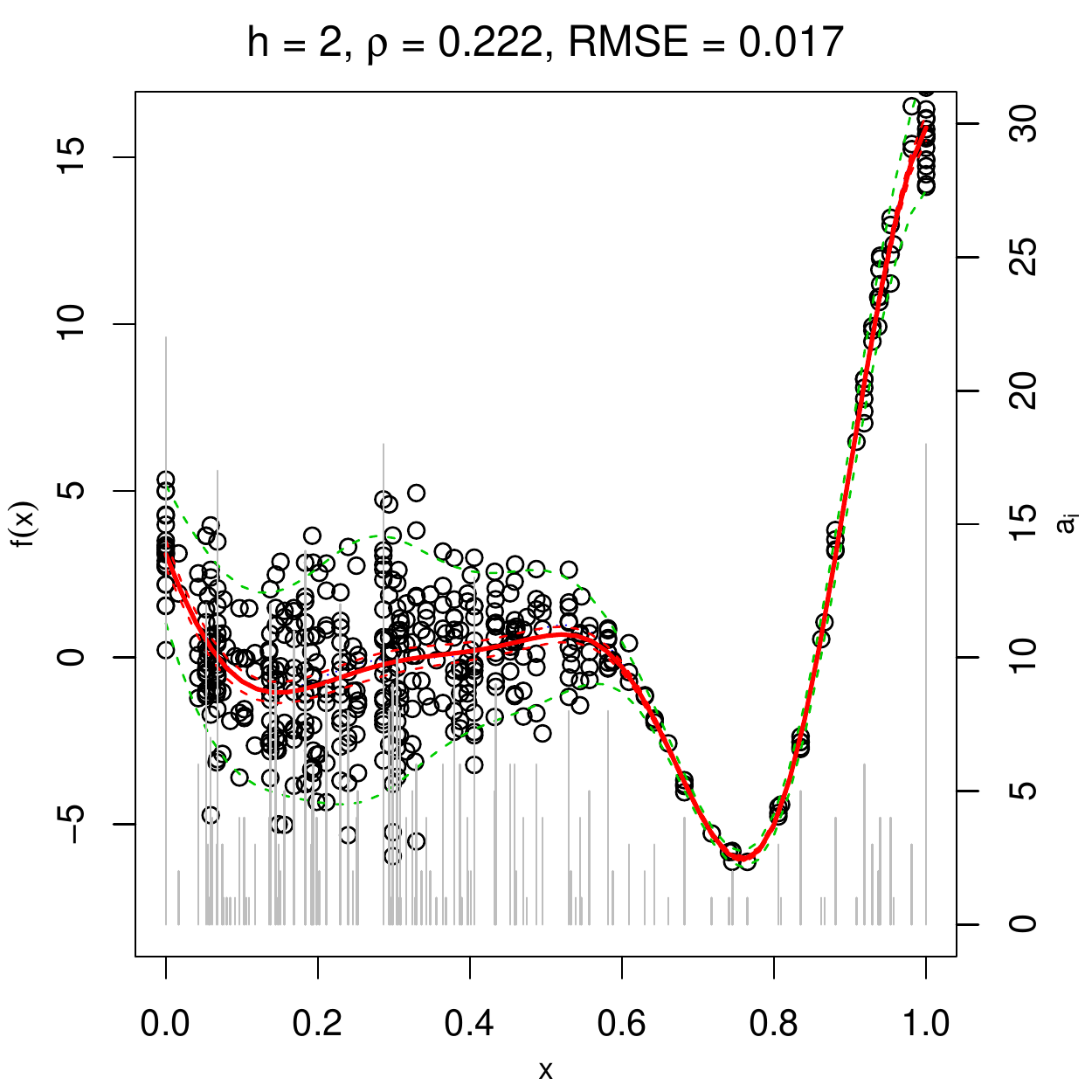}

	\includegraphics[scale=0.38,trim=0 0 44 0,clip=TRUE]{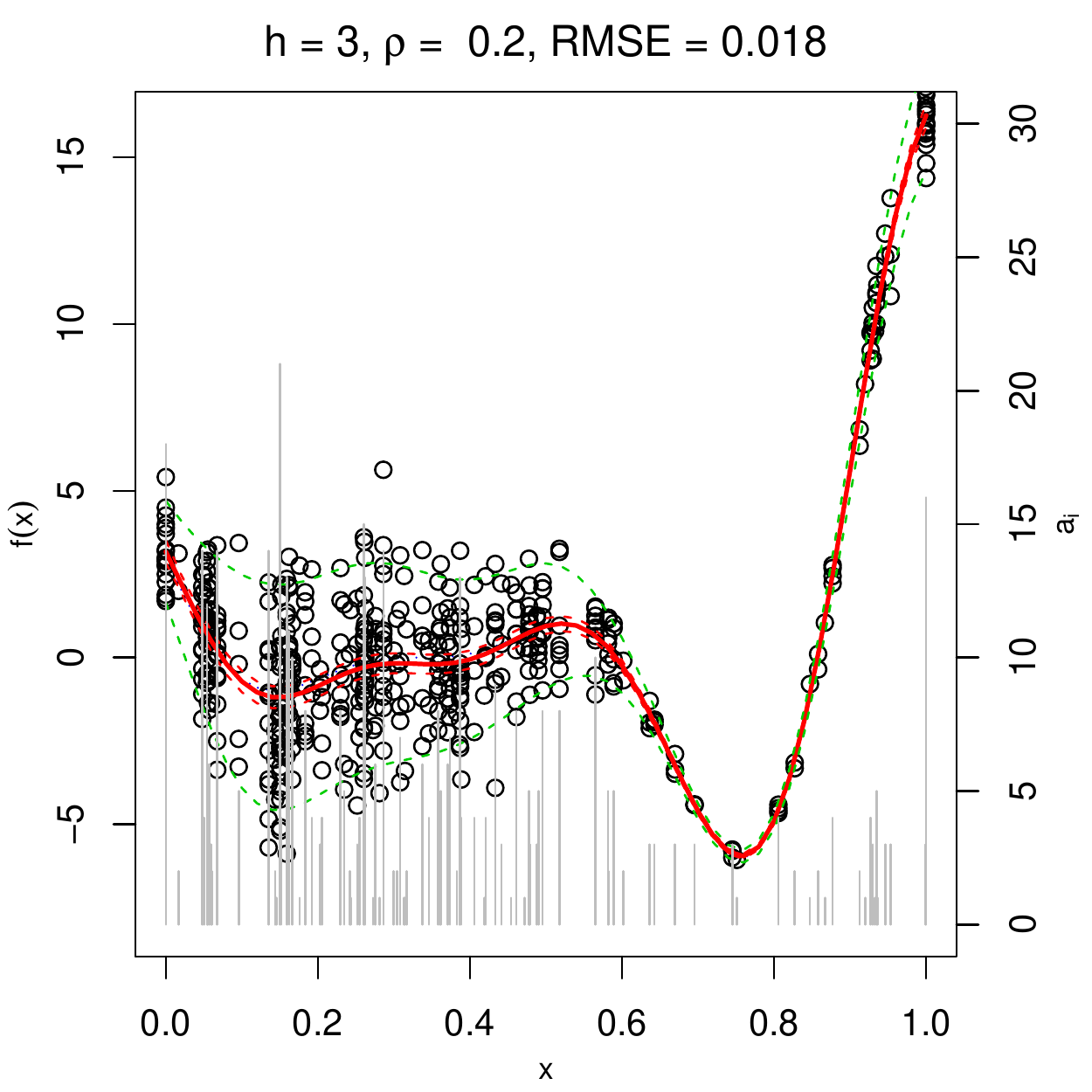}
	\includegraphics[scale=0.38,trim=44 0 44 0,clip=TRUE]{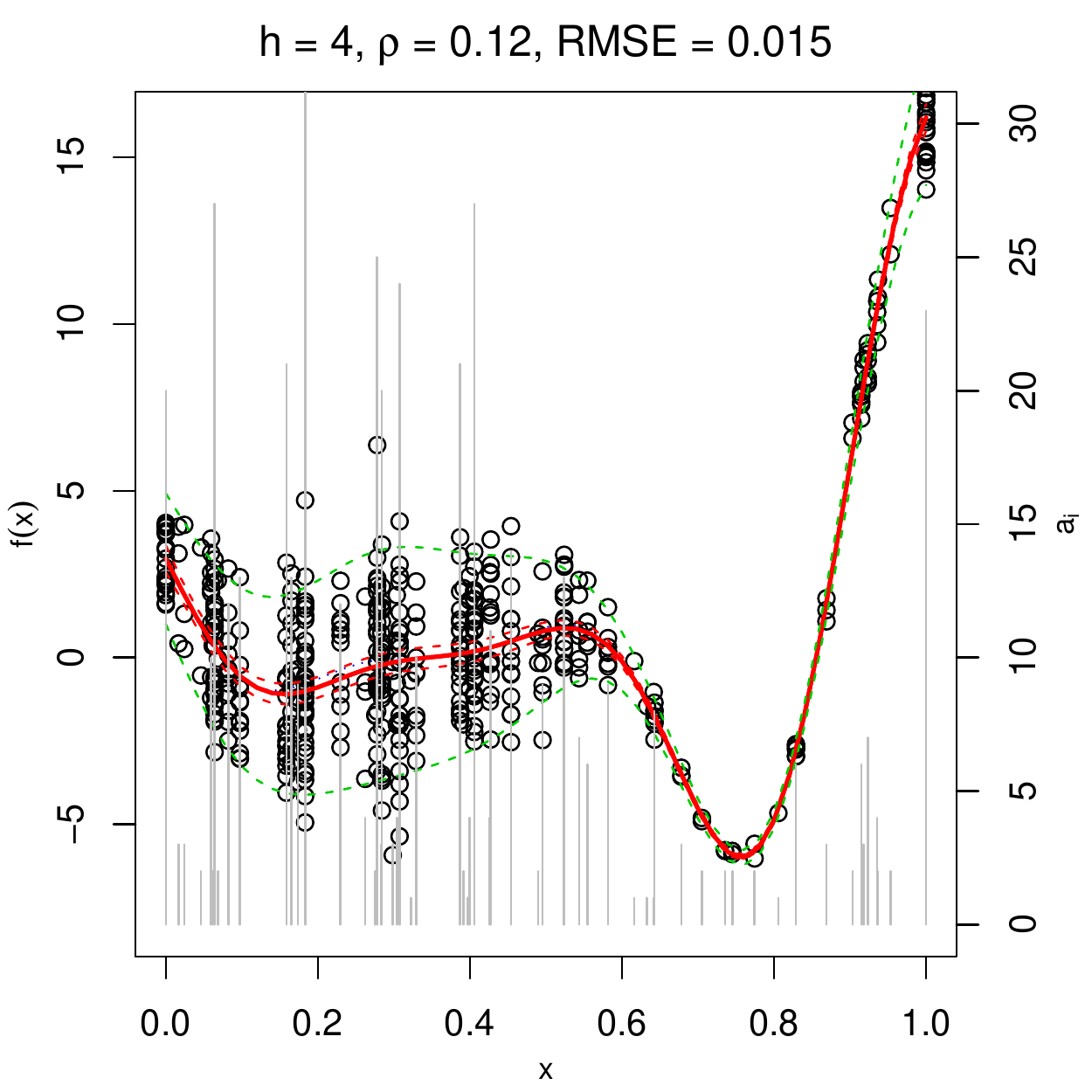}
	\includegraphics[scale=0.38,trim=44 0 44 0,clip=TRUE]{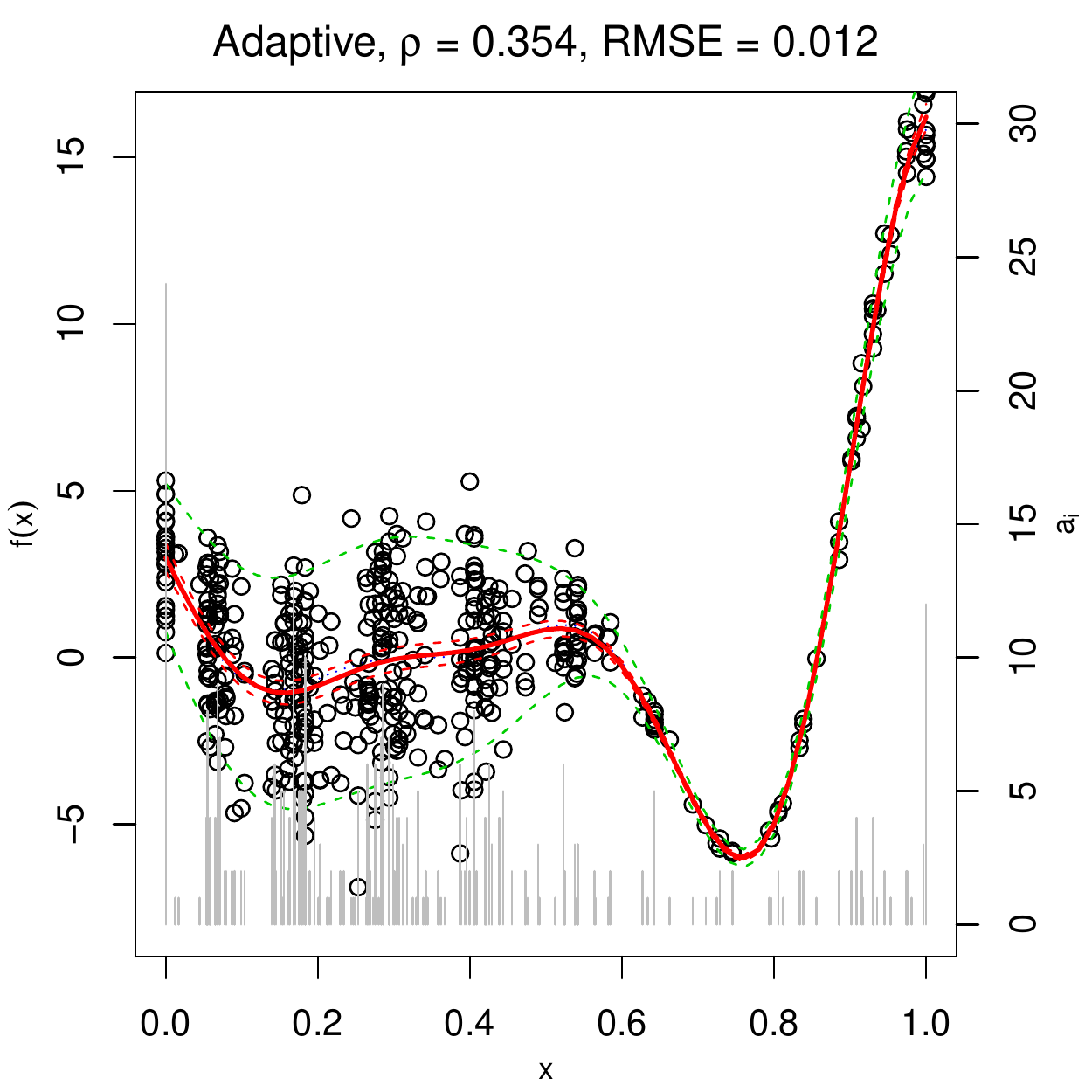}
	\includegraphics[scale=0.38,trim=44 0 0 0,clip=TRUE]{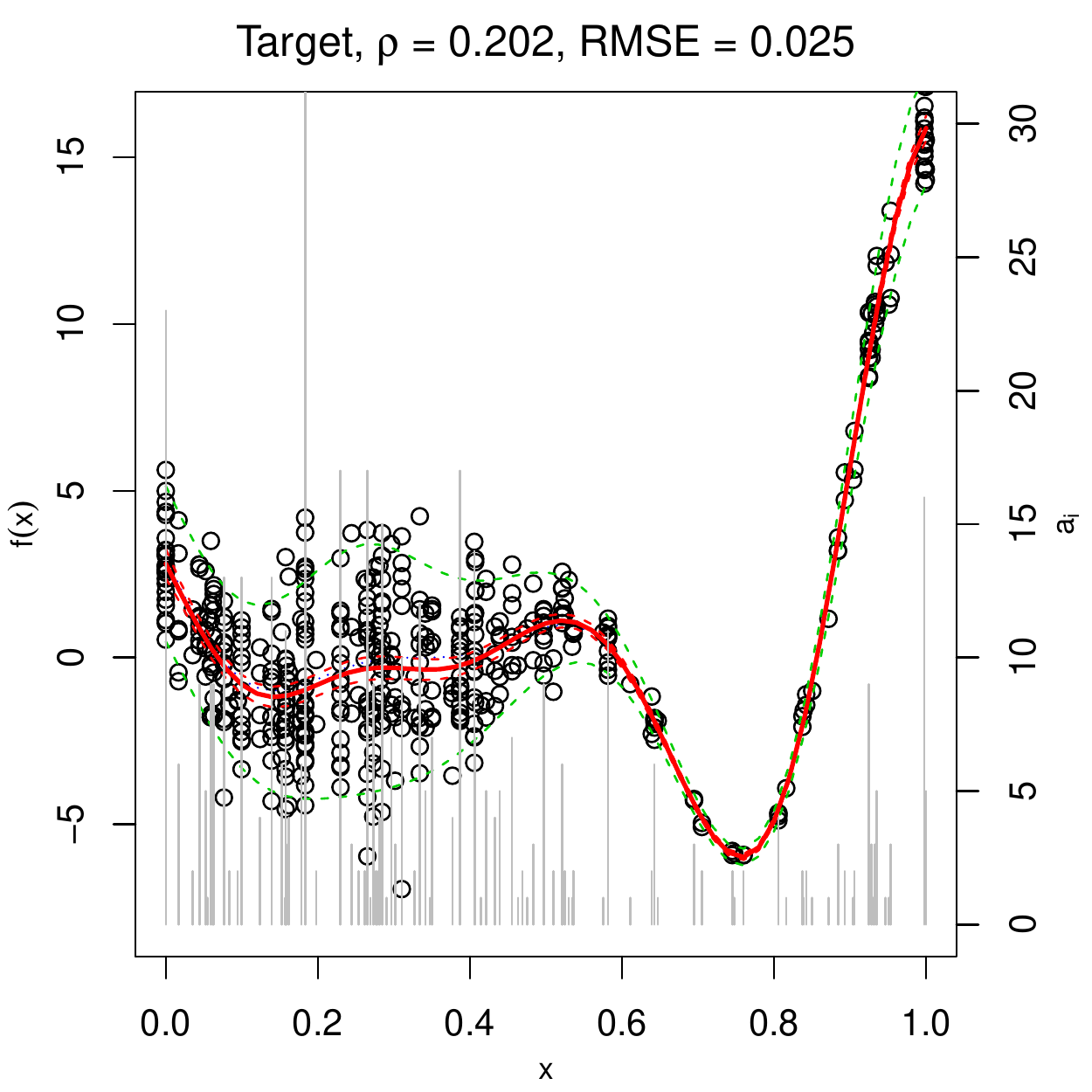}
  \caption{Results on the 1d example by varying the horizon. Grey vertical
  segments indicate the number of replicates at a given location.}%
  \label{fig:1D}%
\end{figure}

Results are presented in Figure \ref{fig:1D} for a total budget of
$N_{\mathrm{max}} = 500$.  Each panel in the figure corresponds to a different
look-ahead horizon $h$, with the final two involving Adaptive and Target
schemes. There are several noteworthy observations. Notice that as the horizon
is increased, more replicates are added.  See $\rho=n/N$ reported in the main
title of each panel. The design density is greatest in the high variance parts
of the space, and that density is increasingly replaced by replication when
the horizon is increased. The effect is most drastic from $h = -1$ to $h = 0$,
with the ratio of unique designs over total designs dropping by more than half
without impact on performance. Notice that replicates are added even with
$h=-1$, at the extremities of the space. Results with high horizons and Adapt
and Target schemes end up having both fewer unique designs and a higher
accuracy.  The very best RMSE results are provided by $h=4$ and the Adapt
(\ref{eq:hadapt}) scheme. In the latter case just 60 unique locations are used
($\rho=0.12$), with some design points replicated as many as thirty times.

\subsection{Synthetic simulation experiment}

Here we expand previous 1-d illustrative example by exploring variation over
data-generating mechanisms via Monte Carlo (MC) with input space $\x \in [0,
1]$. Using the hyperparameter setting outlined in Section \ref{sec:hetGP}, we
consider a process with noise structure $\Lan$ sampled as $\log \Lan
\sim \mathrm{GP}(0, \nu_g \Cg)$, where $\Cg$ is stationary with Mat\'ern $5/2$
kernel $k_{(g)}$. Then observations are drawn via $Y |
\Lan \sim \mathrm{GP}(0, \Kn)$, where $\Kn = \nu(\Cn + \Lan)$ and $\Cn$ is
again Mat\'ern $5/2$. We set $\theta=0.1$, and $\nu=1$ for the mean GP, and
$\theta_{(g)}=0.5$ and $\nu_{(g)}=7^2$ for the noise GP. To manage the MC
variance between runs we normalized the $\Lan$-values thus obtained so
that the average signal-to-noise ratio was one.

We considered a budget of $N=200$ and studied various
strategies for design---comparing one-shot space-filling designs without or
with replication to sequential designs with a lookahead horizon of $h=0$---and
for modeling, testing both homoskedastic and heteroskedastic GPs. These are
enumerated as follows: (i) homoskedastic GP without replication using an
$n=200$ grid design; (ii) {\tt hetGP} without replication, again with an
$n=200$ grid; (iii) {\tt hetGP} with one-shot space-filling design with random
replication on an $n=40$ grid with random $a_i \in
\{1,\dots,10\}$; (iv) sequential learning and design using a homoskedastic GP
initialized with a single-replicate $n=40$ grid, iterating until $N=200$; and
(v) sequential learning and design using {\tt hetGP} initialized with
a single-replicate $n=40$ grid, iterating until $N=200$.

\begin{figure}[ht!]
  \centering
  	\includegraphics[scale=0.45,trim=20 0 0 0]{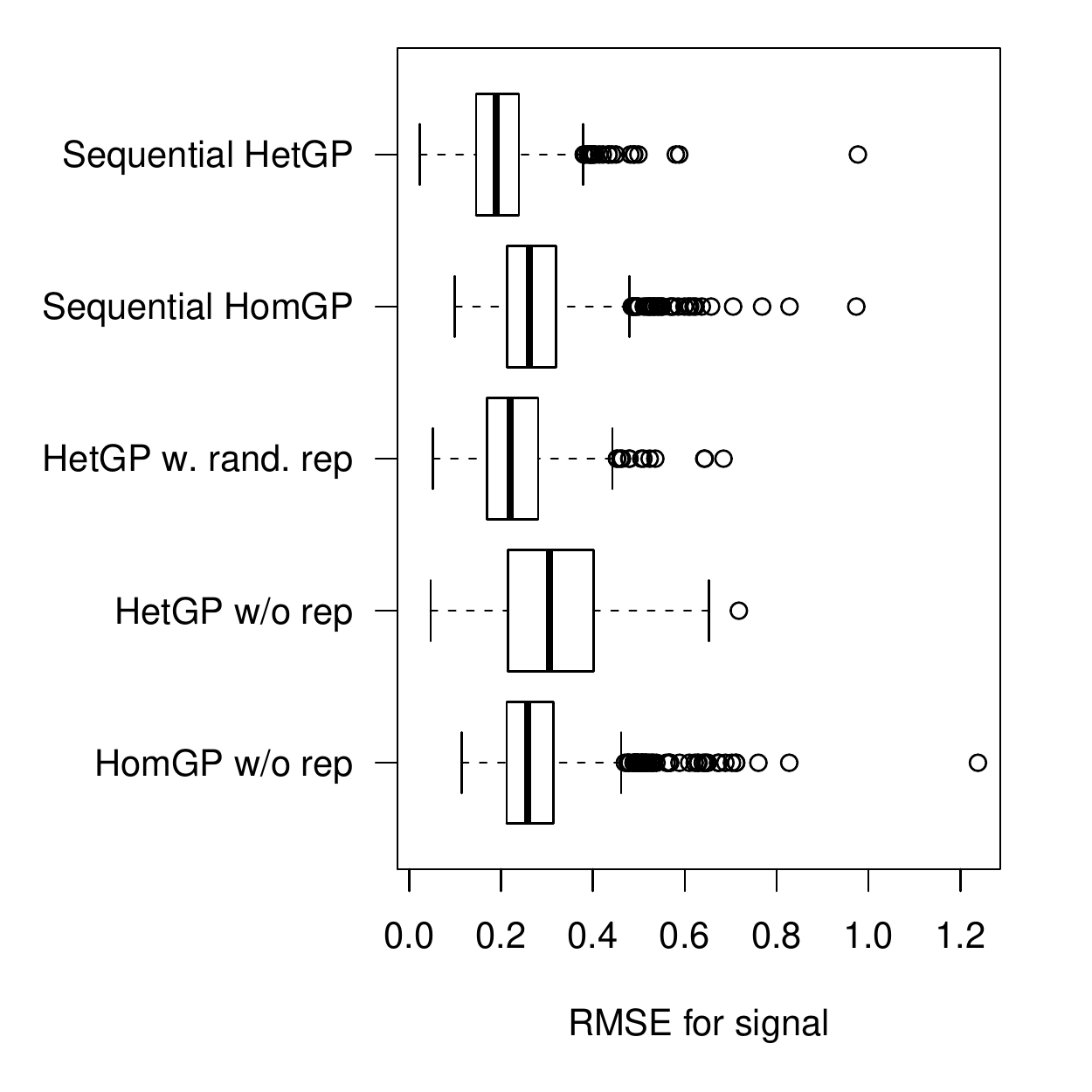}%
	\includegraphics[scale=0.45,trim=130 0 0 0,clip=TRUE]{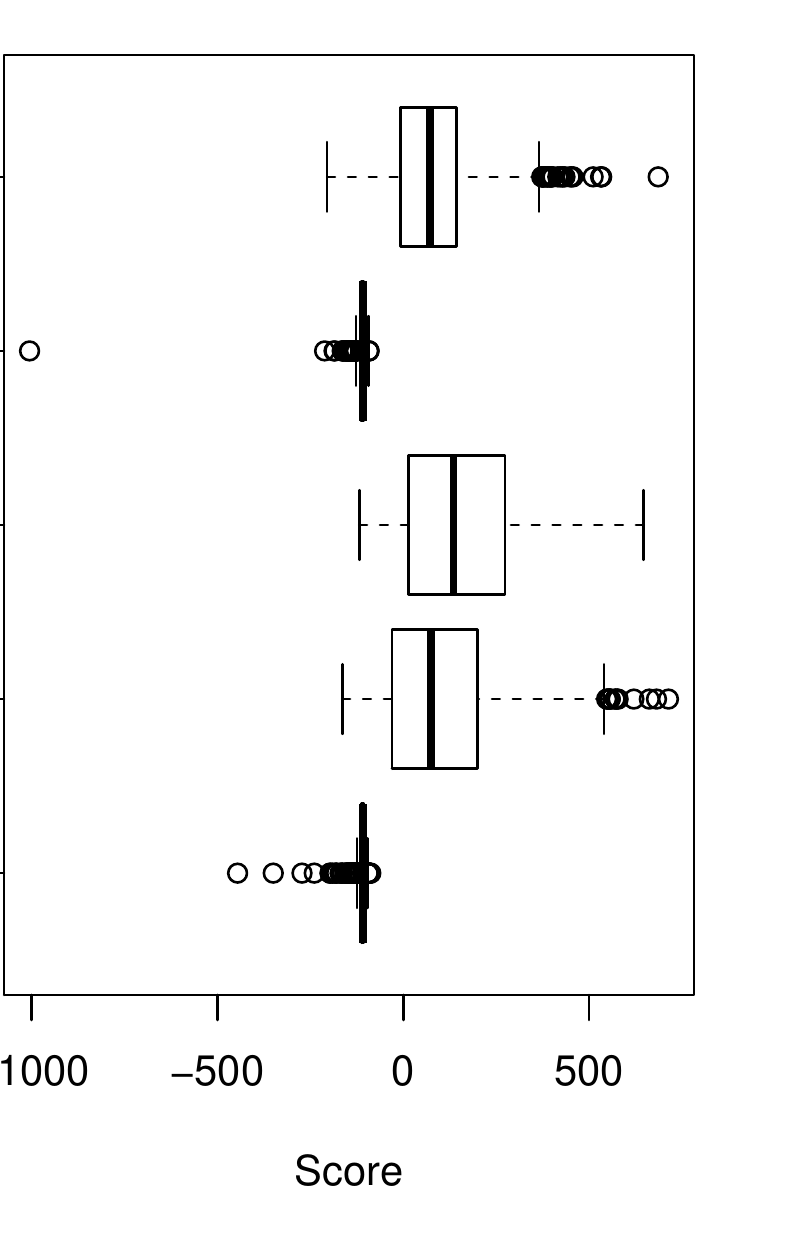}%
	\includegraphics[scale=0.45,trim=130 0 0 0,clip=TRUE]{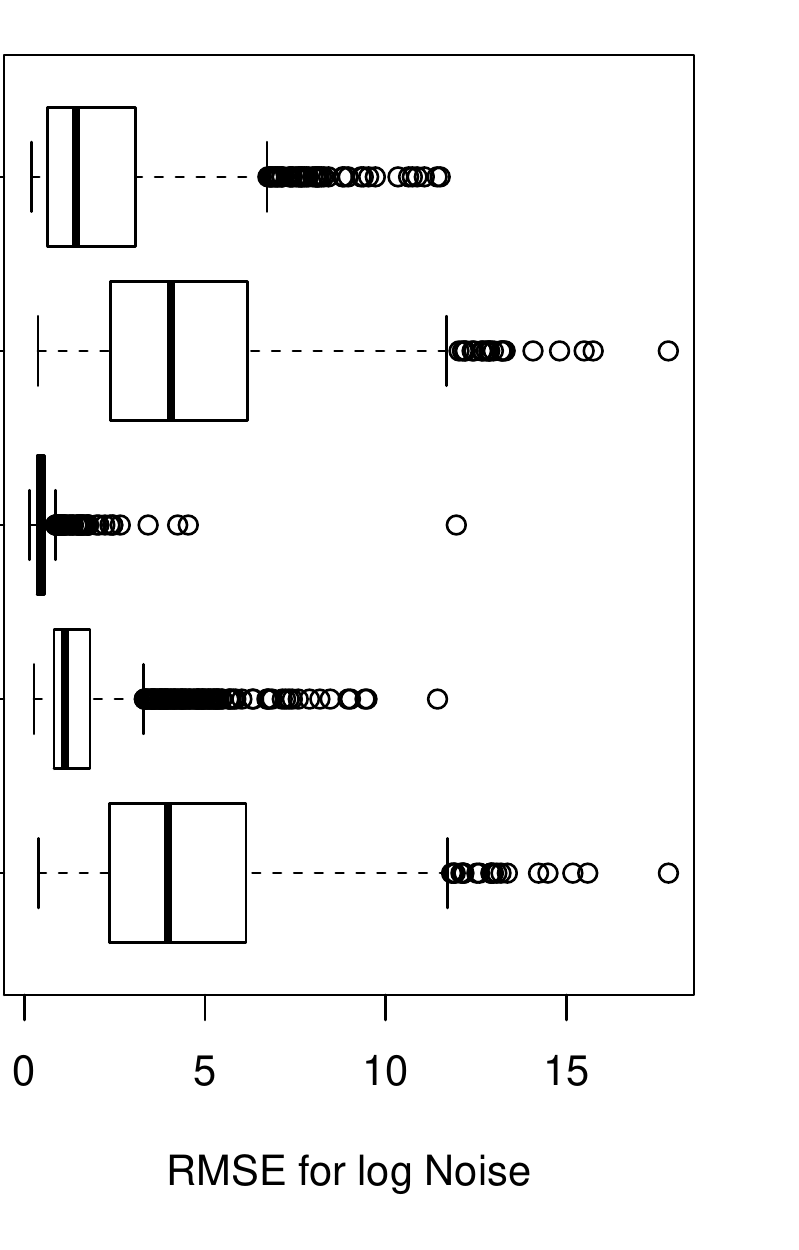}%
	\vspace{-0.25cm}
   \caption{Result from on-dimensional synthetic Monte Carlo experiment in terms
   of RMSE to the true mean (left), proper scores (middle) and RMSE to true log noise (right).
  }%
  \label{fig:test1D}%
\end{figure}

Figure \ref{fig:test1D} summarizes the results from 1000 MC
replicates, illustrating the subtle balance between replication and
exploration. As can be seen in the left panel, our proposed sequential {\tt
hetGP} performs the best in terms of out-of-sample RMSE.  To investigate the
statistical significance of RMSE differences we conducted one-sided matched
Wilcoxon signed-rank tests of adjacent performers (better v.~next-best) with
the order based on median RMSE. The corresponding $p$-values are  $4.80 \times
10^{-24}$, $9.66 \times 10^{-26}$, $0.0956$ and $ 2.66 \times 10^{-12}$. For
example, the test involving our best method, {\tt hetGP} with sequential
design, versus the second best, {\tt hetGP} with random replication, suggests
that the former significantly out-performs the latter. Since our IMSPE design
criteria emphasized mean-squared prediction error, it is refreshing that our
proposed method wins (significantly) on that metric. The only such comparison
which did not ``reject the null at the 5\% level'' involved pitting sequential
versus uniform design with a homoskedastic GP. The value of
proceeding sequentially is much diminished without the capability to learn a
differential noise level.

The center and right panels of the figure show that other design variations
may be preferred for other performance metrics.  Observe that one-shot
space-filling design with random replication using {\tt hetGP} wins when using
proper scores. Apparently, random replication yields better estimates of
predictive variance when comparing to the truth.  See the right panel.
Space-filling and uniform replication are easily achieved in this
one-dimensional case, but may not port well to higher dimension as our later,
more realistic, examples show. Our sequential {\tt hetGP}, coming in second
here on score and log noise RMSE, offers more robustness as the input
dimension increases.

\subsection{Susceptible-Infected-Recovered (SIR) epidemic model}

Our first real example deals with estimating the future number of
infecteds in a stochastic Susceptible-Infected-Recovered (SIR) epidemic model. This
is a standard model for cost-benefit analysis of public health interventions
related to communicable diseases, such as influenza or dengue. For our
purposes we treat it as a 2d input space indexed by the count $I_0 \in
\mathbb{N}$ of initial infecteds and $S_0 \in \mathbb{N}$ of initial
susceptibles (the total population size $M \ge I_0 + S_0$ is pre-fixed; the
rest of the population is viewed as immune to the disease). The pair $(I_t,
S_t) \in \{ S+I \le M \}$ evolves as a continuous-time Markov chain (easily
simulated) following certain non-linear (hence analytically intractable)
transition rates, until eventually $I_t = 0$ and the epidemic dies out.  The
response $f(S,I)$ is the expected aggregate number of infected-days,
$\int_0^\infty I_t \; dt$ averaged across the Markov chain trajectories; determining $f(S,I)$ is a first step towards
constructing adaptive epidemic response policies. It is important to note that
the signal-to-noise ratio is varying drastically over $D$, with a zero
variance at $I = 0$ (where $Y \equiv 0$) and up to $r(\x) \approx 90^2$ on the
left part of the domain, in the critical region where the stochasticity in
infections leads to either a quick burn-out in infecteds or a rapid infection
of a significant population fraction.

Whereas \citet{Binois2016} considered static space-filling designs with random
numbers of replicates, with a favorable comparison to SK, here we focus on
aspects of sequential design, in particular the effect of horizon $h$ in the
IMSPE with lookahead over replication.  We perform a Monte Carlo experiment
wherein designs are initialized with $n = N = 10$ unique design locations
(just one observation each), and grown to size $N=500$ over sequential design
iterations, disregarding  how many unique locations $n$ are chosen along the
way. A Mat\'ern kernel with $\nu = 5/2$ is used. The experiment is repeated
30 times and averages of various statistics are reported in Figures
\ref{fig:SIR_res}, \ref{fig:horizons_SIR} and Table \ref{tab:SIR} based on a
testing set placed on a dense grid with a thousand replications each.

\begin{figure}[ht!]
  \centering
  \vspace{-0.5cm}
  	\includegraphics[scale = 0.47,trim=4 0 20 0,clip=TRUE]{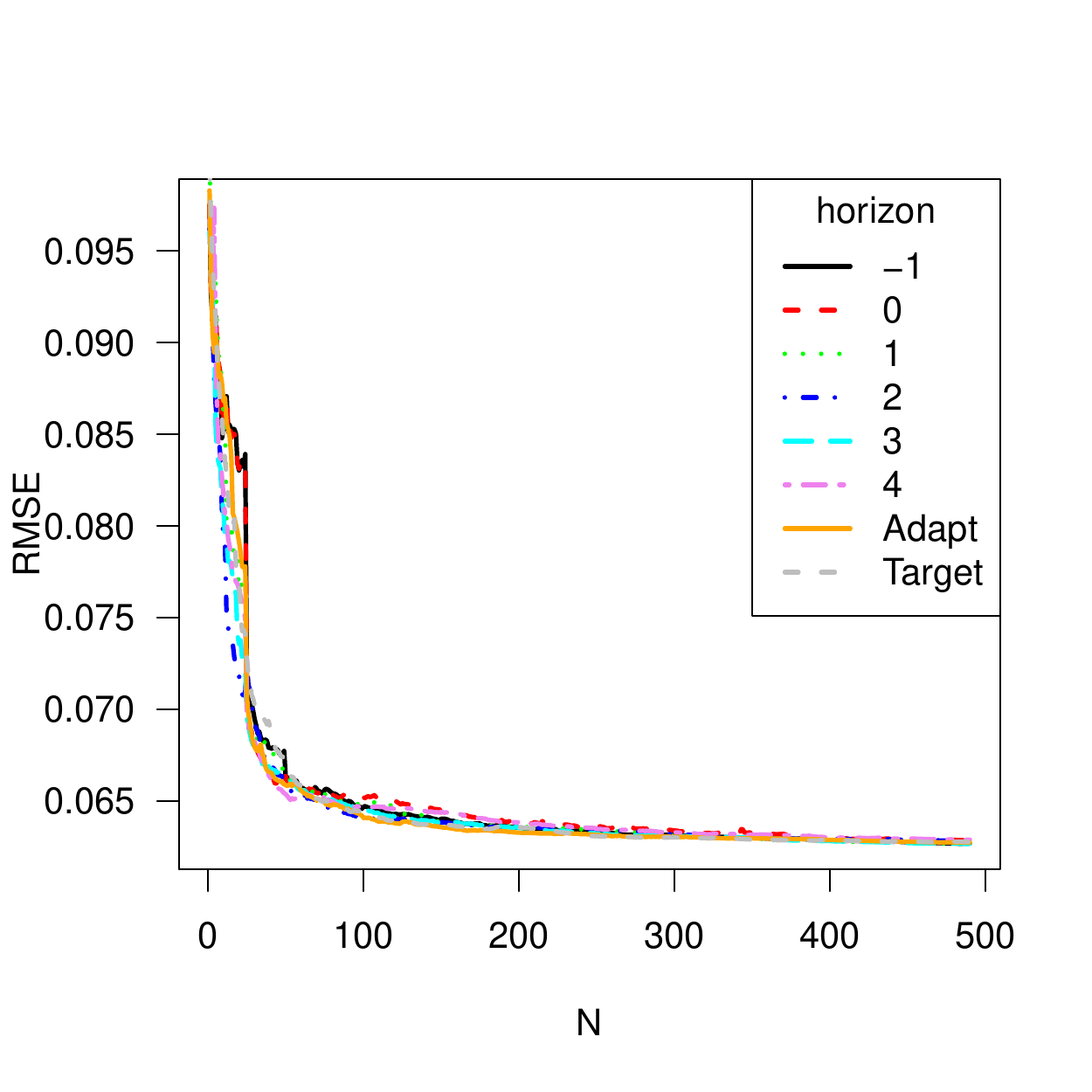}%
	\includegraphics[scale = 0.47,trim=10 0 20 0,clip=TRUE]{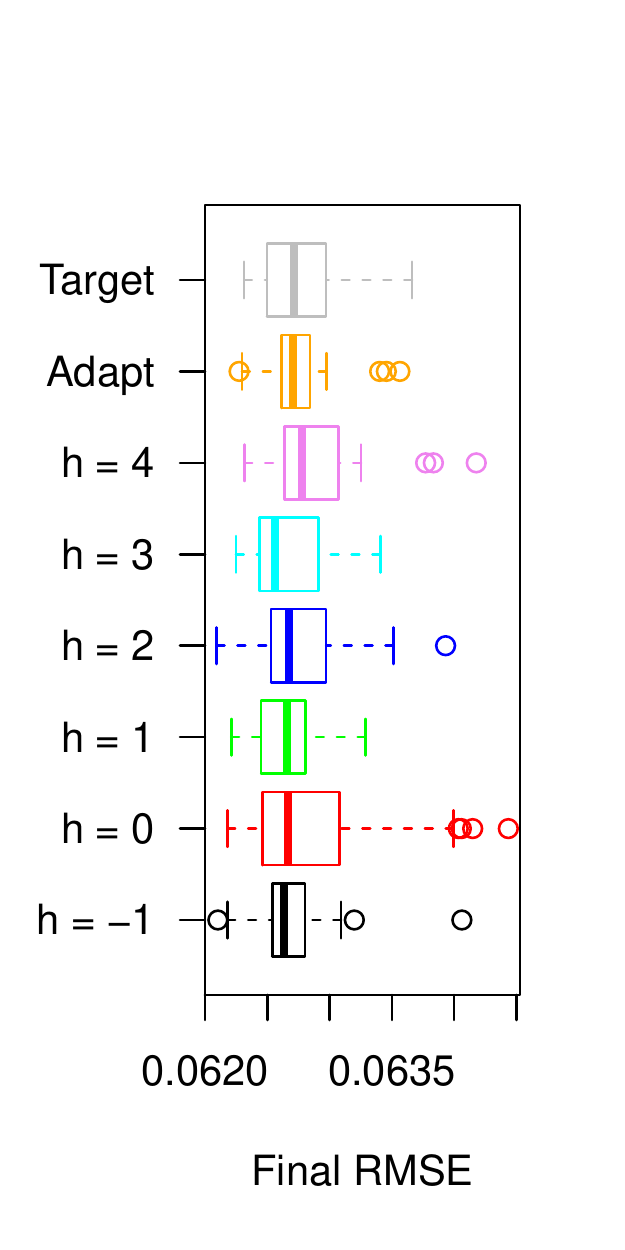}%
	\includegraphics[scale = 0.47,trim=4 0 20 0,clip=TRUE]{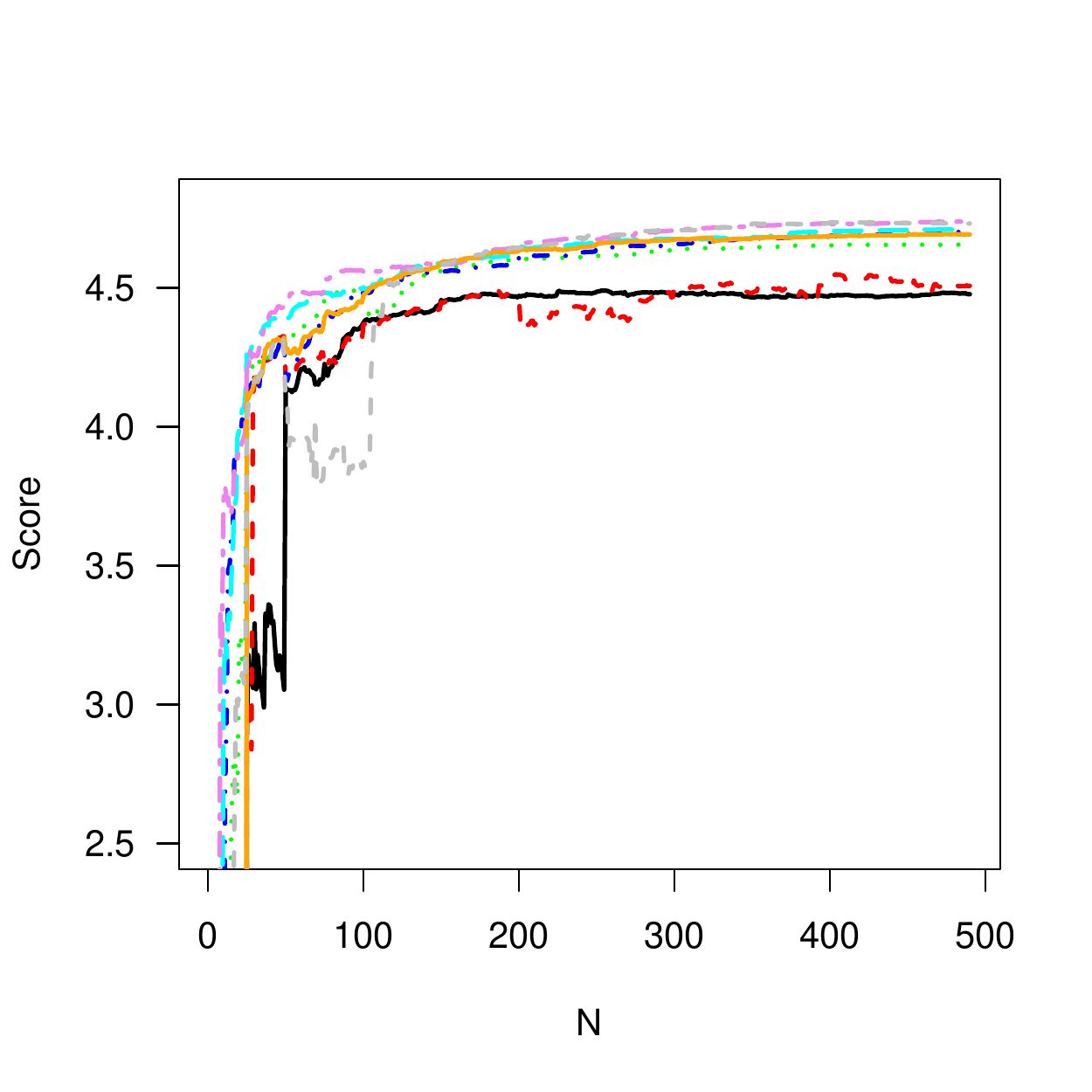}%
	\includegraphics[scale = 0.47,trim=10 0 20 0,clip=TRUE]{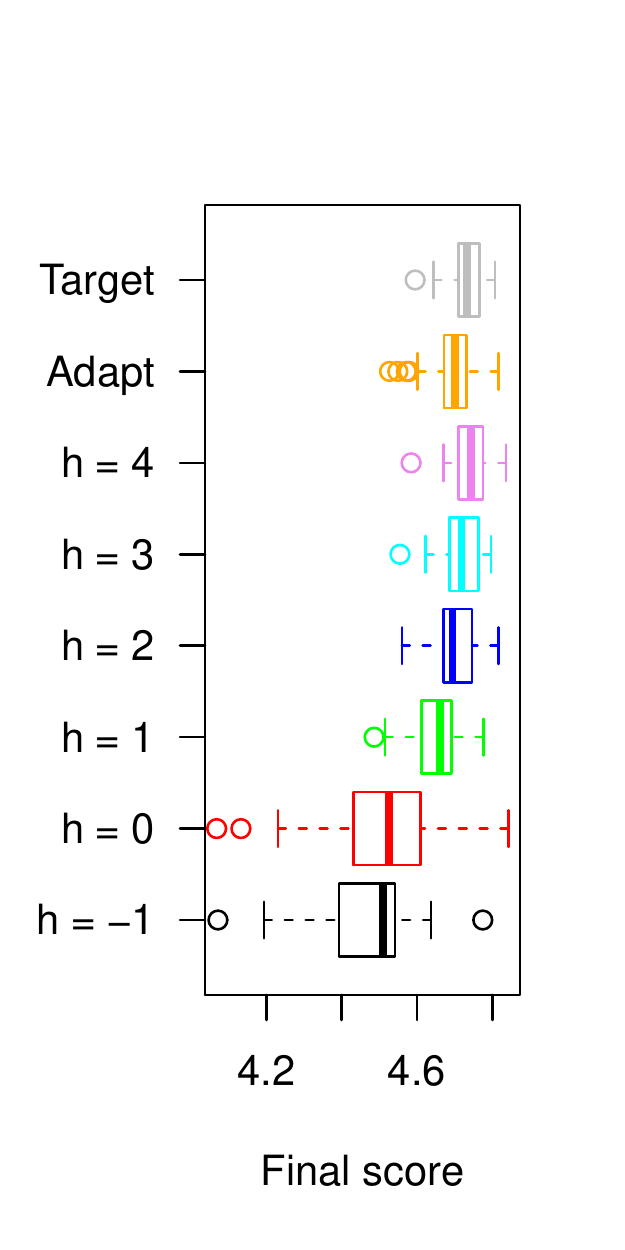}%
	\vspace{-0.25cm}
  \caption{RMSE and score results on the SIR test case over sequential design
  iterations, via summaries 30 MC repetitions. The Target scheme aims for $\rho=n/N = 0.2$.}%
  \label{fig:SIR_res}%
\end{figure}

 In Figure \ref{fig:SIR_res}, the results in terms of RMSEs and score are
 presented. While the RMSEs are barely distinguishable, the scores exhibit
 more spread, and the best results are obtained by the methods leaning the
 most toward replication (i.e., $h = 4$ and Target scheme). Since the signal-to-noise
 ratio is low in some parts of the input space, replication is
 beneficial in terms of RMSE {\em and score}. One reason for the RMSEs not to
 be very different between the alternatives is that the underlying function
 is very smooth. However, the variance surface is more challenging, such that
 having more replicates is helpful in this case, as highlighted by the
 differences in score, shown in the final panel of Figure \ref{fig:SIR_res}.

\begin{figure}[ht!]
  \centering
  \vspace{-0.5cm}
  	\includegraphics[scale = 0.46,trim=4 0 20 0,clip=TRUE]{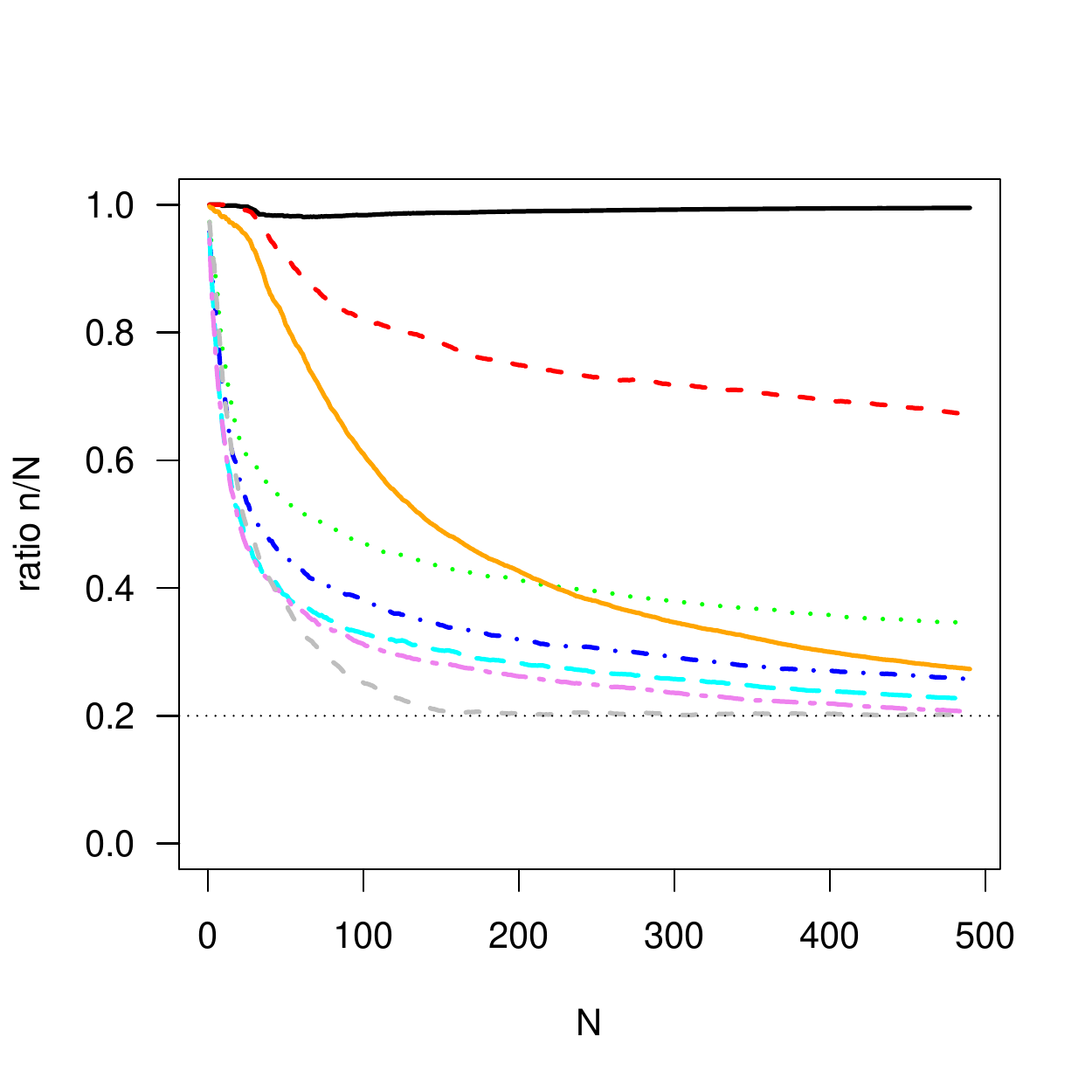}%
	\includegraphics[scale = 0.46,trim=4 0 20 0,clip=TRUE]{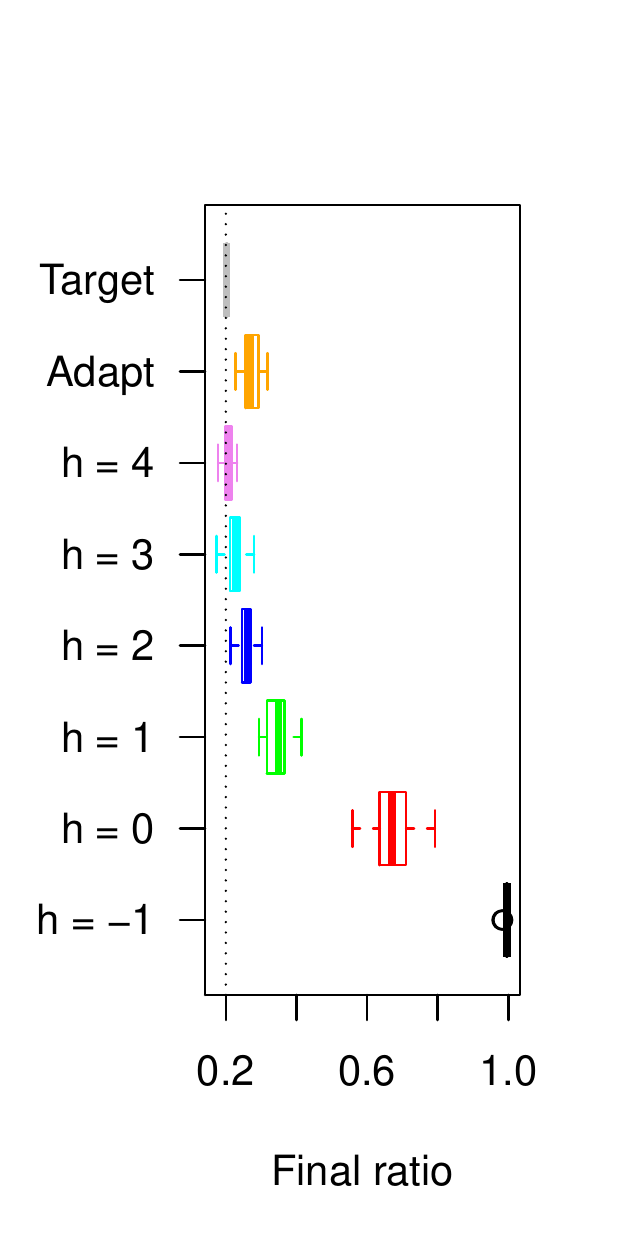}%
	\includegraphics[scale = 0.46,trim=4 0 25 0,clip=TRUE]{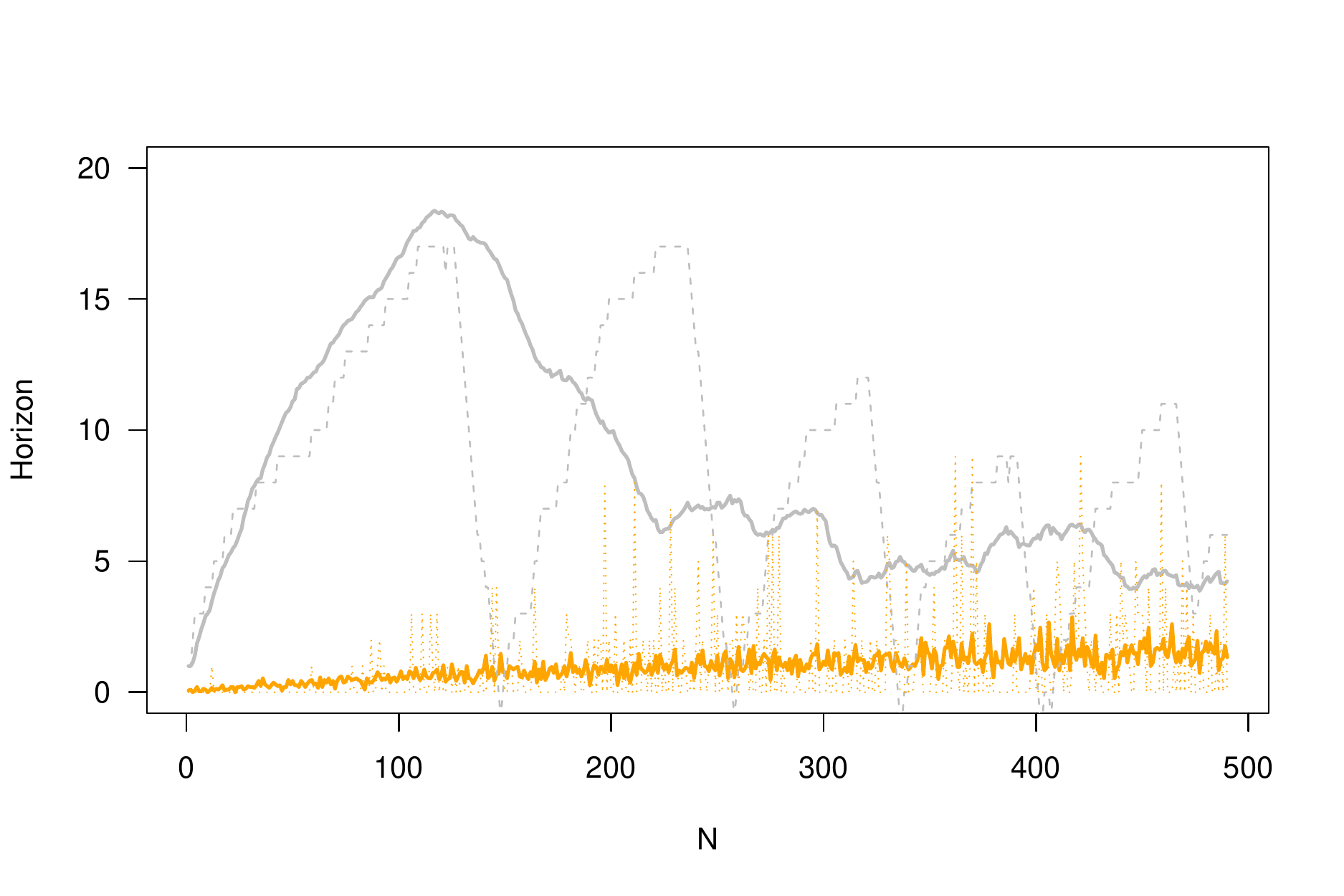}%
	\vspace{-0.25cm}
  \caption{Ratio $n/N$ and horizon evolution on the SIR test case over
  sequential design iterations, via summaries over 30 MC repetitions. The thin
  dotted line indicates the Target ratio of 0.2. Right: thin dotted
  (resp.~thick) lines represent one iteration (resp.~average) of the horizons
  for the Adapt and Target schemes. Colors are the same as in Figure
  \ref{fig:SIR_res}.}
  \label{fig:horizons_SIR}
\end{figure}

\begin{table}[ht!]
\centering
\caption{Average percentage of designs points with no replicates, with more
than five, and running time on the SIR test problem.}
\label{tab:SIR}
\begin{tabular}{l||c|c|c|c|c|c|c|c}
Horizon          & -1   & 0    & 1    & 2   & 3   & 4   & Adapt & Target \\ \hline \hline
Percentage of 1s & 99.2 & 49.6 & 13.6 & 6.9 & 4.8 & 3.5 & 8.8   & 4.1 \\ \hline
Percentage of 5s and more & 0.04 & 1.6 & 5.7 & 6.9 & 7.7 & 7.9 & 6.9 & 7.6 \\ \hline
Time (s)         & 812  & 473	 & 278 & 257 & 259 & 271 & 306 & 288
\end{tabular}
\end{table}

Moving on to Figure \ref{fig:horizons_SIR}, the left and center panels show
the ratio of unique locations over the total design size: $n/N$. As expected,
as the horizon $h$ increases, more replicates are selected. In turn, this
lowers the computation time, as reported in Table \ref{tab:SIR}. In
particular, observe that the computational cost of looking ahead is negligible
next to the cost saved by having smaller $n$ relative to $N$.  The final panel
in Figure \ref{fig:horizons_SIR} shows how the horizon $h$ evolves when fixing
a Target ratio of  $\rho = 0.2$ in (\ref{eq:htarget}), i.e., an average of 5
replicates per unique design location) or learning it with the adaptive scheme
(\ref{eq:hadapt}). Notice that the Target scheme with $\rho = 0.2$ sometimes
utilizes horizons higher than $h \geq 15$, yet the computational cost is never
higher than the high-fixed-horizon results, which offer the best performance
for this problem. Due to its random nature, the Adapt scheme changes abruptly
between algorithm runs, but its horizon $h_N$ is increasing on average in $N$.

\begin{figure}[ht!]
  \centering%
	\begin{subfigure}[t]{0.25\textwidth}%
	\centering%
	\includegraphics[width=0.95\textwidth]{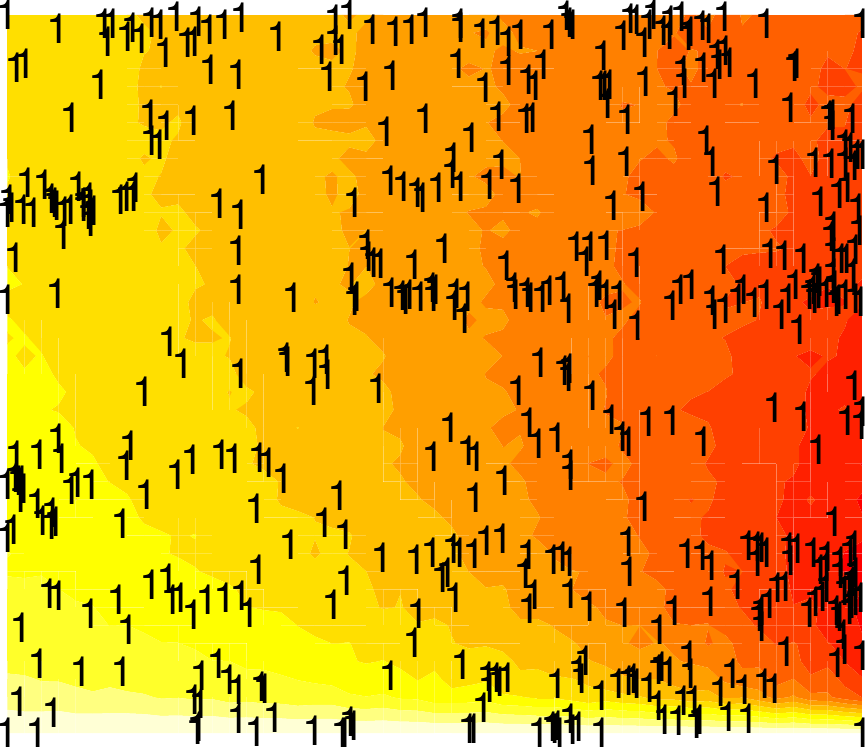}%
	\caption{$h = -1$}
	\end{subfigure}%
		\begin{subfigure}[t]{0.25\textwidth}%
	\centering%
	\includegraphics[width=0.95\textwidth]{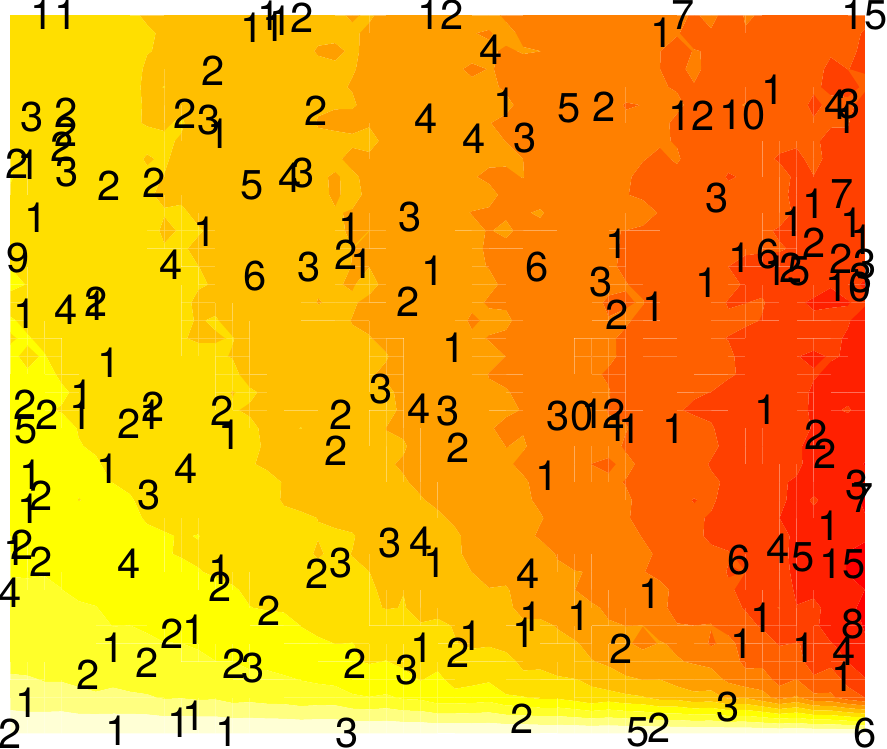}%
	\caption{$h = 1$}
	\end{subfigure}%
	\begin{subfigure}[t]{0.25\textwidth}%
	\centering%
	\includegraphics[width=0.95\textwidth]{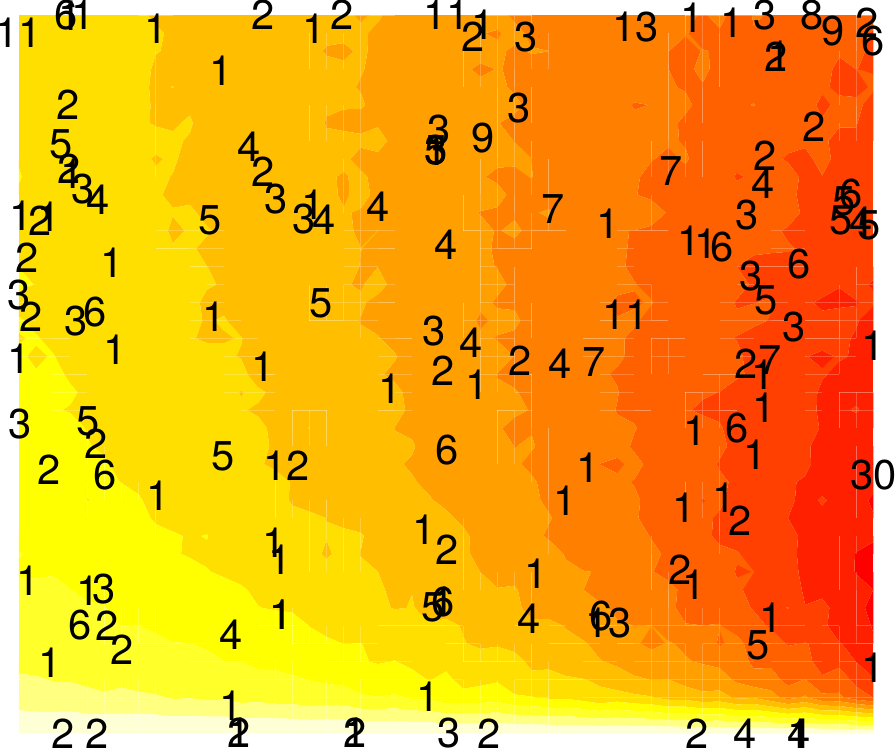}%
	\caption{Adapt}
	\end{subfigure}%
		\begin{subfigure}[t]{0.25\textwidth}%
	\centering%
	\includegraphics[width=0.95\textwidth]{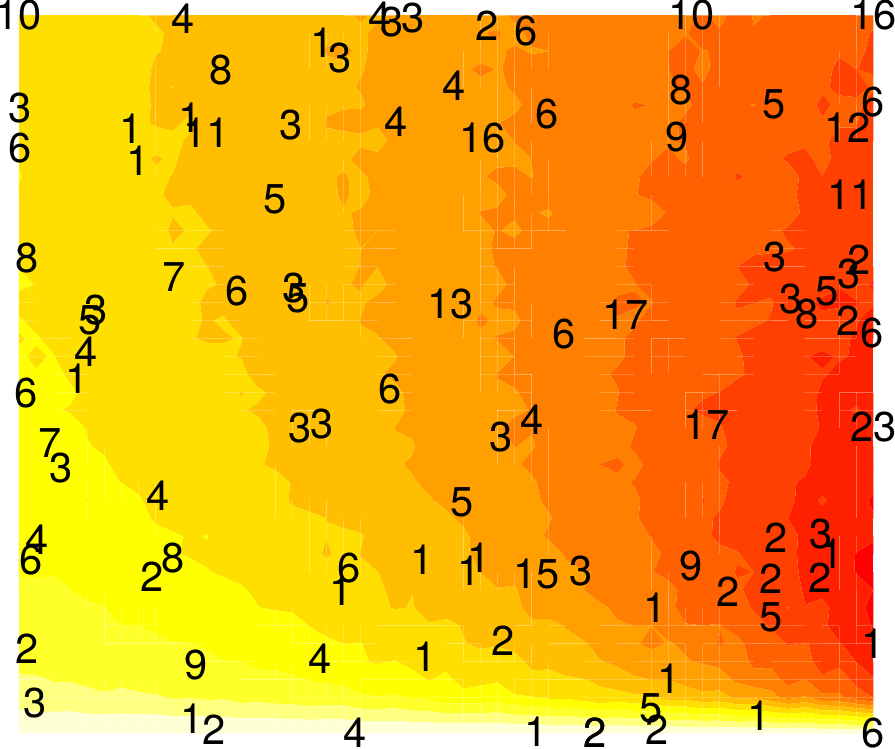}%
	\caption{Target}
	\end{subfigure}%
  \caption{Designs obtained with different strategies for the horizon, where
  numbers indicate how many replicates $a_i$ are performed at a given location
  $\xu_i$. Darker colors indicate higher variance. The x-axis is the number of
  susceptibles, from 1200 to 2000 while the y-axis is the initial number of
  infecteds, from, 0 to 200.}%
  \label{fig:SIR_dens}%
\end{figure}

Figure \ref{fig:SIR_dens} provides a visual indication of the density of
design throughout the input space for fixed and tuned horizons. As expected,
in all panels the density of inputs in the design is higher in high variance
parts of the input space. The numbers in the plot indicate the numbers of
replicates $a_i$. Observe that low-horizon heuristics result in mostly $a_i =
1$, whereas for the longer horizons clusters of tightly grouped unique
locations are replaced with replicates. Table \ref{tab:SIR} demonstrates that
this feature is consistent over MC repetitions. Thus, our heuristic is adept
at capturing the basic logic of amalgamating singleton design locations into
replicates, which apparently maintains essentially the same statistical
efficiency while reducing computational overhead by a factor of more than 3.

\subsection{Inventory management}

The assemble to order (ATO) simulation, first introduced by
\citet{hong:nelson:2006} with implementation in {\sf MATLAB} later provided by
\citet{xie:frazier:chick:2012}, comes from inventory management.  The inputs
determine stocks and replenishment schedules for key items in assembled
products, and the simulator estimates revenue by combining inventory costs
with profits obtained from orders which come in following a compound Poisson
random process. \citet{Binois2016} showed the benefit of heteroskedastic
modeling, versus several homoskedastic alternatives, on random space-filling
designs with $n=1000$ unique locations with a random number of replications
(uniform in $1, \ldots, 10$) so that the average full data size was $N=5000$.
Here, one of our aims is to illustrate that by building  a
better design (sequentially), a much lower $N$ is possible without sacrificing accuracy.
\citeauthor{Binois2016} used a proper scoring rule
\citep[][Eq.~(27)]{gneiting:raftery:2007} as their main metric. Since our
IMSPE criterion targets accuracy via squared-error loss we report RMSEs, but
include scores to facilitate comparison to those space-filling designs.  The
best average score reported in Figure 2 of that paper was 3.3, with a min and
max of 2.8 and 3.6 respectively.

Similar to the SIR experiment, we perform the following variations on
sequential IMSPE design, varying the horizon, $h$, of lookahead and offering
the two adaptive horizon schemes outlined in Section \ref{sec:ahead}.  We
initialize with $n=100$ unique space-filling locations and a random number of
replicates, uniform in $\{1,\ldots,10\}$ so that the  starting size is $N=500$
on average.  Subsequently, sequential design iterations are performed until
$N=2000$ total samples are gathered, irregardless of how many unique
locations, $n$, result. The experiment is repeated in a Monte Carlo fashion,
with thirty repeats.

\begin{figure}[ht!]
  \centering
  \vspace{-0.25cm}
  	\includegraphics[scale = 0.47,trim=4 0 20 40,clip=TRUE]{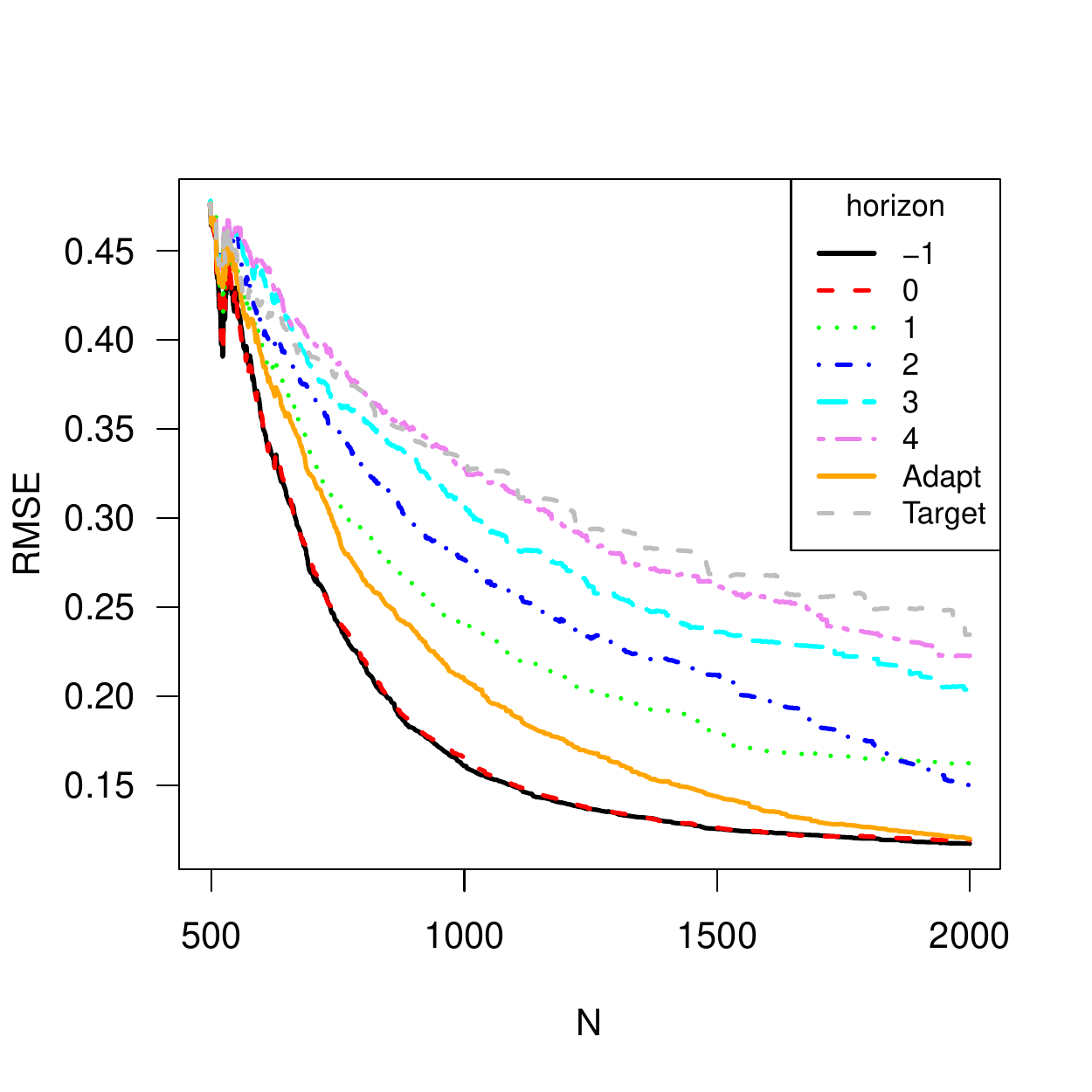}%
	\includegraphics[scale = 0.47,trim=4 0 20 40,clip=TRUE]{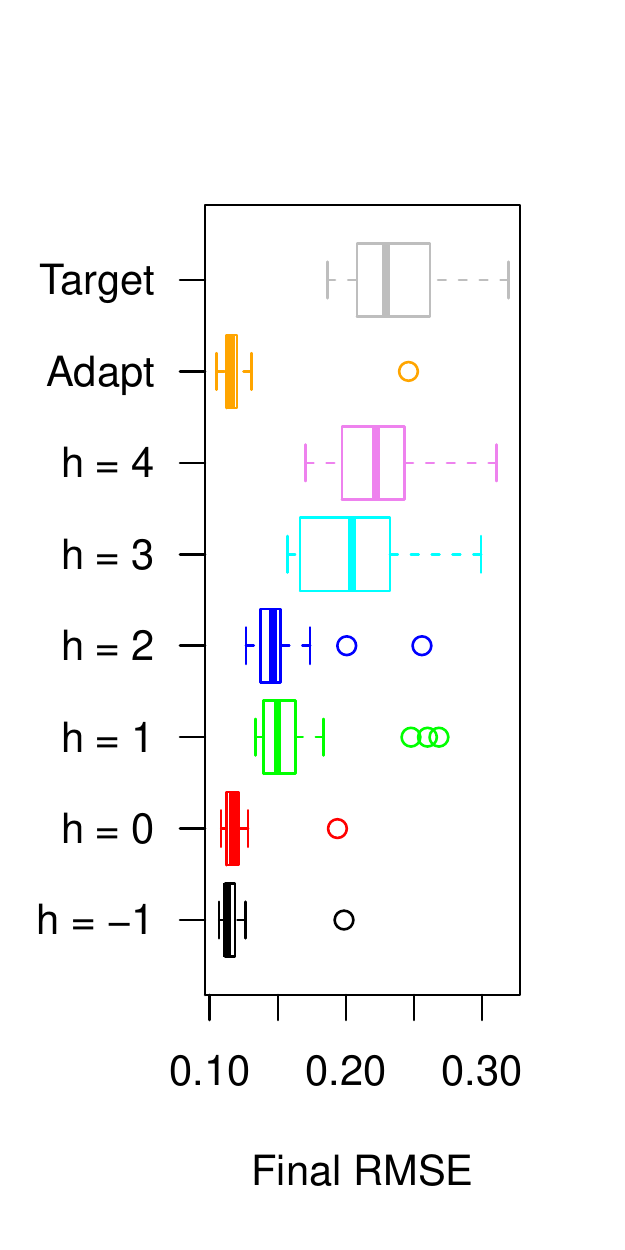}%
	\includegraphics[scale = 0.47,trim=4 0 20 40,clip=TRUE]{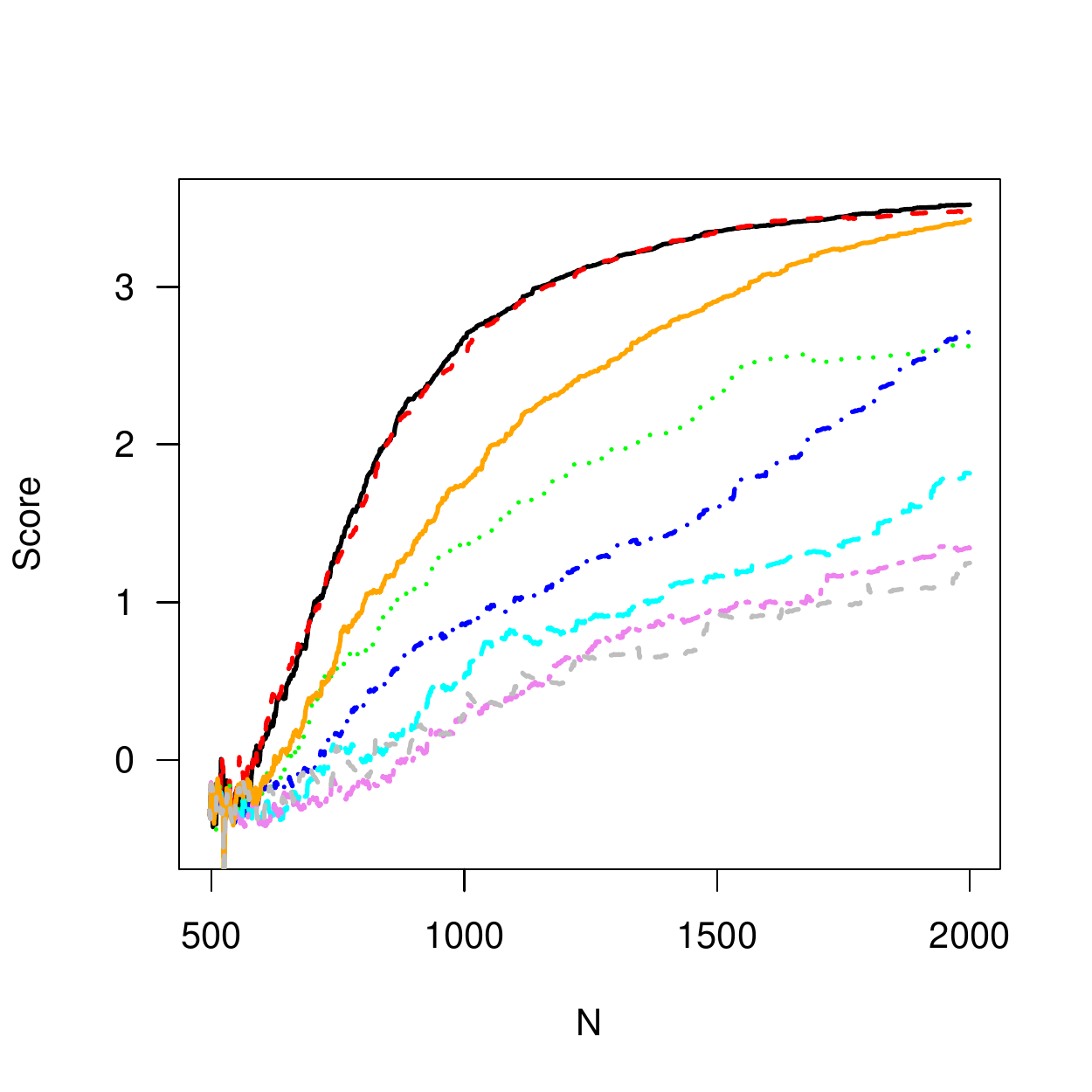}%
	\includegraphics[scale = 0.47,trim=4 0 20 40,clip=TRUE]{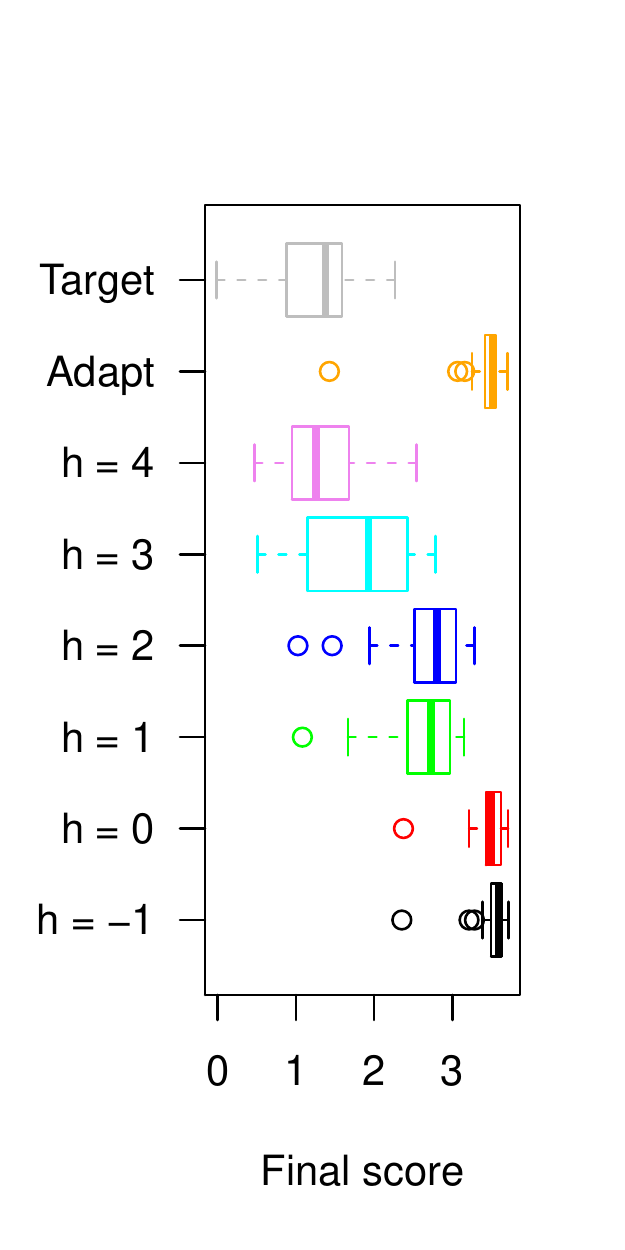}%
	\vspace{-0.25cm}
  \caption{RMSE and score results on the ATO problem, via thirty Monte Carlo iterations, in a format similar to Figure \ref{fig:SIR_res}.}
  \label{f:ato}%
\end{figure}

Figure \ref{f:ato} summarizes the results of the experiment in a format
similar to Figure \ref{fig:SIR_res}.  The take-home message is fairly evident:
in contrast to the SIR example, shorter lookahead horizon is better, owing to
the relatively higher signal-to-noise ratio.  Observe that our average score
is 3.5, and the min and max are 2.4 and 3.7 respectively.  So scores based on
just $N=2000$ samples are higher than in the space-filling $N=5000$
experiment, however the spread is a little wider.  Finally, note that the
Adapt heuristic (\ref{eq:hadapt}) eventually performs as well as the best
horizon ($h=-1$). Targeting $\rho = 0.2$ via (\ref{eq:htarget}), by contrast,
leads to far too little exploration. The Adapt scheme required an average of
682 minutes to build up to a design of size $N=2000$ with an average of
$n=1086$ unique sites (min and max of 465 and 1211 respectively), whereas
Target took only 183 minutes thanks to using $n=400$ locations on average (399
to 405).

\section{Conclusion and perspectives}

This paper addresses a design question which has been around the surrogate
modeling literature, and informally in the community, for many years.  There
is general agreement that the ``fully batched'' version of the problem, of
finding $n$ locations and numbers of replicates $a_1, \dots, a_n$ on each, is
not computationally tractable, although there are some attempts in the recent
literature.  We therefore consider the simpler task of deciding whether the
{\em next} sample should explore or replicate in a sequential design context.
The condition we derive is simple to express, and leads to an intuitive suite
of visualizations.  Proceeding sequentially has merits, not only
computationally but also facilitating ``as you go'' adjustments to help avoid
pathologies arising from feedbacks between design and inference.  However the
procedure is still myopic.  To help correct this we introduced a
computationally tractable lookahead scheme that emphasizes the role of
replication in design. Tuning the horizon of that scheme allows the user to
trade off the dual roles of replication in surrogate modeling design:
computational thriftiness of inference against out-of-sample accuracy,
although as we show these are not always at odds.

Our presentation focused on the integrated mean-squared prediction error
(IMSPE) criteria.  We chose IMSPE because it is popular, but also
because it leads to closed form  
derivatives for optimization, updating equations for search over look-ahead
horizons, and simplifications for entertaining replicates. There is, of
course, a vast literature on model-based design criteria \citep[see,
e.g.,][]{Chen2017,Kleijnen2015} targeting alternative quantities of interest, such
as entropy or information for unknown model parameters. Although designs for
prediction and estimation sometimes coincide, like for linear regression, the
correlation structure for GPs can be a game-changer \citep{Mueller2012}.  It
may well be that other criteria lead to strategies similar to ours, which may be an interesting avenue for future research.

Our implementation and empirical work leveraged a new heteroskedastic Gaussian
process modeling library called {\tt hetGP}, available for {\sf R} on CRAN
\citep{Binois2017}.  Our IMSPE, updates, lookahead procedures, and more are
provided in a recently updated version of the package.  To aid in
reproducibility, our supplementary material contains codes using that library
to reproduce the smaller examples from the paper [Figures
\ref{fig:rollout}--\ref{fig:1D}]. The other examples require rather more
computing, and/or linking between {\sf R} and {\sf MATLAB} for simulation
[ATO], which somewhat challenges ease of replication.  However, we are happy
to provide those codes upon request.
	
Processes (i.e., data generating mechanisms) benefiting from a heteroskedastic
feature bring out the best in our sequential design schemes, demanding a
greater degree of replication in high-noise regions relative to low-noise
ones, confirming the intuition that replication becomes more valuable for
separating signal from noise as the data get noisier \citep[e.g.,][]{Wang2017}.
However, the results we provide are just as valid in the homoskedastic
setting, albeit with somewhat less flair.  In that context, inferring the
right level of replication is a global affair, except perhaps at the edges of
the input space which tend to prefer a slightly higher degree.

Our three sets of examples illustrated that the method both does what it is
designed to do, and that designs with the right trade-off between exploration
and replication perform better than ones which are designed more na\"ively.
These examples span a range of features, from low to high noise (and slow to
rapid change in noise), low to moderate input dimension, and synthetic to real
simulation experiments.  The behavior is diverse but the results are
consistent: sequential design with lookahead-based IMSPE leads to accurate
prediction, and the slight bias toward replication  yields
computationally more thrifty predictors without a compromise on accuracy.
However, sequential design might not always be appropriate.  Sometimes
batching, at least to a small degree, cannot be avoided.  Addressing this
situation represents an exciting avenue for further research.

\subsubsection*{Acknowledgments}

We thank the anonymous reviewers for helpful comments on the earlier version of the paper. All four authors are grateful for support from National Science Foundation grant
DMS-1521702 and DMS-1521743.

\bibliographystyle{jasa}
\bibliography{RepOrExp}

\appendix


\section{Detailed gradient expressions}
\label{ap:grad}

Here we provide expressions for the Section \ref{ss:grad} discussion on the gradient of the IMSPE.
\begin{align*}
\dfrac{ \partial \K^{-1}_{n+1}}{\partial \xnew} & =\dfrac{ \partial} {\partial \xnew} \begin{bmatrix}
\K_{n}^{-1} + \vecg(\xnew) \vecg(\xnew)^ \top \sigma_n^2(\xnew) & \vecg(\xnew) \\
\vecg(\xnew) ^\top & \sigma_n^2(\xnew)^{-1}
 \end{bmatrix}
 = \begin{bmatrix}
 \Hmat(\xnew) & \vech(\xnew) \\
\vech(\xnew)^\top & v_1(\xnew)
 \end{bmatrix} \quad \mbox{as in Eq.~(\ref{eq:dKi})}, \\
\mbox{where} \;\; v_1(\xnew) &:=
\dfrac{ \partial \sigma_n^2(\xnew)^{-1} } {\partial \xnewp}{}
= \dfrac{2\vecd(\xnew)^\top \K_n^{-1} \veck(\xnew) + \frac{\partial r(\xnew)}{\partial \xnewp}}{\left(k(\xnew, \xnew) -\veck(\xnew) ^\top \K_n^{-1} \veck(\xnew) + r(\xnew) \right)^2}, \quad \vecd(\xnew)
:= \dfrac{ \partial \veck(\xnew) } {\partial \xnewp} \nonumber \\
\vech(\xnew) & := \dfrac{ \partial \vecg(\xnew) } {\partial \xnewp}
= - \K_n^{-1} \left( v_1(\xnew) \veck(\xnew)+\sigma_n^2(\xnew)^{-1} \vecd(\xnew) \right) \nonumber \\
\Hmat(\xnew) &:=\dfrac{ \partial \vecg(\xnew) \vecg(\xnew)^ \top \sigma_n^2(\xnew) } {\partial \xnewp}
= \dfrac{ \partial \sigma_n^2(\xnew) } {\partial \xnewp} \vecg(\xnew) \vecg( \xnew)^\top
	\!+\!\sigma_n^2(\xnew) \vech(\xnew) \vecg(\xnew)^\top\!+\!\sigma_n^2(\xnew)  \vecg(\xnew) \vech(\xnew)^\top \nonumber \\
&= v_2(\xnew) \vecg(\xnew) \vecg( \xnew)^\top + \sigma_n^2(\xnew) \left( \vech(\xnew) \vecg(\xnew)^\top
	+ (\vech(\xnew) \vecg(\xnew)^\top)^\top \right), \nonumber
\end{align*}
and $v_2(\xnew) =-2\vecd(\xnew)^\top \K_n^{-1} \veck(\xnew)$.  Similarly, since $\Wn$ does not depend on $\xnew$:
\begin{equation*}
\dfrac{ \partial \W_{n+1}}{\partial \xnewp}= \begin{bmatrix}
\mathbf{0}_{n \times n} & \vecc_1(\xnew) \\
\vecc_1(\xnew)^\top & c_2(\xnew)
 \end{bmatrix}, \quad \quad \mbox{as presented in Eq.~(\ref{eq:dW})}.
 \end{equation*}
Expressions for $\vecc_1(\cdot)$ and $c_2(\cdot)$ for particular kernels may
be found in Appendix \ref{ap:kernel_exp}.

\section{Expressions for common kernels}
\label{ap:kernel_exp}

We consider here four kernels common in practice: Gaussian
(or squared exponential) and Mat\'ern with parameter $\alpha = 5/2, ~3/2, ~1/2$
(the last one being the exponential kernel) and give the
corresponding expressions for $E$, $w$, $\mathbf{d}$, $\mathbf{c}_1$ and $c_2$ as
introduced in Section \ref{sec:imspe}.
Notice that all these kernels are stationary, i.e., $k(\x, \x') = \nu c(\x - \x')$
with $\nu$ the process variance and $c$ the correlation function.
As a consequence, $E = \int_{\x \in D} k(\x,\x) \; d\x = \int_{\x \in D} \nu c(\mathbf{0}) d\x = \nu$.

In their separable form, over $D = [0,1]^d$, these kernel write $k(\x, \x') =
\nu \prod_{p = 1}^d k_i(x_p, x'_p)$ with $k_i$ one of the aforementioned
kernel. By using the separability we get:
$$\nu w(\x_i, \x_j) =
\int \limits_{\x \in D}  k(\x_i, \x) k(\x_j, \x) d\x = \nu
\prod \limits_{p=1}^d \int \limits_{x \in [0,1]} k_i(x_{i,p}, x) k_i(x_{j,p},
x) dx = \nu \prod \limits_{p=1}^d w_p(x_{i,p}, x_{j,p} ).$$

Below we provide
our parameterization of these kernels in univariate form along with the
corresponding expressions for $w$, $\mathbf{d}$, $\mathbf{c}_1$ and $c_2$.

\subsection{Gaussian kernel}

The univariate Gaussian kernel is $k_G(x, x') = \exp \left( -\frac{(x - x')^2}{\theta} \right)$. Therefore:
$$
d_i = \frac{\partial k_G(x_i, x)}{\partial x} = \dfrac{2 (x_{i}-x)}{\theta}  \exp \left(-\dfrac{(x_{i}-x)^2}{\theta}\right)
$$
$$w(x_i, x_j)= \dfrac{\sqrt{2\pi \theta}}{4} \exp \left(-\dfrac{(x_{i}-x_{j})^2}{2 \theta} \right) \left( \erf \left(\dfrac{2-(x_{i}+x_{j})}{\sqrt{2 \theta}} \right)+ \erf \left( \dfrac{x_{i}+x_{j}}{\sqrt{ 2 \theta}} \right) \right), ~ 1 \leq i,j  \leq n$$
with $\erf$ the error function. In addition:
$$c_2=\dfrac{ \partial w(x_i, x_i)}{\partial x_i}=  \exp \left( -\dfrac{2 x_{i} ^2}{\theta} \right) - \exp \left(-\dfrac{(1-2x_i)^2}{\theta} \right)
$$
and, for the vector $\vecc_1$ , $1 \leq i \leq n$:
\begin{multline*}
\frac{\partial w(x, x_i)}{\partial x} = \sqrt{\frac{\pi}{8 \theta}} \exp \left(-\frac{(x - x_i)^2}{2 \theta}\right) \left[ (x -x_i) \left( \erf \left( \frac{x + x_i -2}{\sqrt{2 \theta}}\right) - \erf\left(\frac{x + x_i}{\sqrt{2 \theta} }\right)  \right) \right. \\
\left. + \sqrt{\frac{2 \theta}{\pi}} \left( \exp \left(- \frac{(x + x_i)^2}{2 \theta} \right) - \exp \left( \frac{- (x + x_i - 2)^2}{2\theta} \right) \right) \right].
\end{multline*}
{\bf Remark:} this is the kernel used in Figure \ref{fig:rep}, with hyperparameters $\nu = 1$, $\theta = 0.01$.

\subsection{Mat\'ern kernels with \texorpdfstring{$\alpha = \{1,3,5\}/2$}{nu = \{1,3,5\}/2}}

We use the following parameterization of the Mat\'ern kernel for specific
values of $\alpha$:
$$ k_{M,1/2}(x, x') = \exp \left( -\frac{|x - x'|}{\theta} \right)$$
$$ k_{M,3/2}(x, x') = \left(1 + \frac{\sqrt{3}|x - x'|}{\theta} \right) \exp \left( -\frac{\sqrt{3} |x - x'|}{\theta} \right)$$
$$ k_{M,5/2}(x, x') = \left(1 + \frac{\sqrt{5}|x - x'|}{\theta} + \frac{5 (x - x')^2}{2\theta^2} \right) \exp \left( -\frac{\sqrt{5} |x - x'|}{\theta} \right) $$
The derivatives with respect to $x$, i.e., in $\vecd$ are:
$$ \frac{ \partial k_{M,1/2}(x, x')}{\partial x} = \frac{(-1)^{\delta_{x < x'}}}{\theta} \exp \left( -\frac{|x - x'|}{\theta} \right)$$
$$ \frac{ \partial k_{M,3/2}(x, x')}{\partial x} = \frac{(-1)^{\delta_{x < x'}} \times 3|x - x'|}{\theta^2} \exp \left( -\frac{\sqrt{3}|x - x'|}{\theta} \right)$$
$$ \frac{ \partial k_{M,5/2}(x, x')}{\partial x} = (-1)^{\delta_{x < x'}} \frac{ \left(\frac{10}{3} - 5 \right) |x - x'| - \frac{5 \sqrt{5}}{3 \theta} (x - x')^2}{\theta^2} \exp \left( -\frac{\sqrt{5}|x - x'|}{\theta} \right)$$
To get closed form derivatives of $w(x_i,x_j)$ in Lemma~\ref{lem:IMSPE_closed}, first consider $x_i \leq x_j$ to drop
absolute values, then divide integration into components $p_1$ ($0 \rightarrow
x_i$), $p_2$ ($x_i \rightarrow x_j$), $p_3$ ($x_j \rightarrow 1$). We rely on
symbolic solvers for the most tedious components, see e.g., \url{https://www
.integral-calculator.com/}.
To reduce expression, define $\beta = \exp \left( \frac{2 \sqrt{3}}{\theta} \right)$ and $\gamma = \exp \left( \frac{2\sqrt{5}}{\theta} \right)$.

The first term is given by:
\begin{equation*}
p_{1,1/2} = \int \limits_0^{x_i} \exp \left(-\frac{(x_i - x)}{\theta} \right) \exp \left(-\frac{(x_j - x)}{\theta} \right) dx
= \frac{\theta}{2} \left( \exp \left( \frac{2x_i}{\theta} \right) -1\right) \exp \left(\frac{-x_j-x_i}{\theta} \right),
\end{equation*}
and similarly
\begin{multline*}
p_{1, 3/2} = \frac{1}{12\theta} \left[ \left(\theta\left(5 \sqrt{3}\theta+9 x_j- 9 x_j \right) \exp\left(\frac{2\sqrt{3}x_i}{\theta} \right)-5\sqrt{3}\theta^2-9\left(x_j+x_i\right)\theta \right. \right. \\
\left. \left. -2{\cdot}3^\frac{3}{2}x_i x_j\right) \exp \left(-\frac{\sqrt{3}\left(x_i+x_j\right)}{\theta}\right) \right]
\end{multline*}%
\vspace{-0.5cm}
\begin{multline*}
p_{1, 5/2} \cdot t_1 =\theta^2\left(63 \theta^2+9{\cdot}5^\frac{3}{2}x_j \theta-9{\cdot}5^\frac{3}{2}x_i \theta + 50 x_j^2-100x_ix_j+50 x_i^2\right) \exp \left( \frac{2 \sqrt{5}x_i}{ \theta}\right)    \\
  - 63 \theta^4 -9{\cdot}5^\frac{3}{2}\left(x_j+x_i\right) \theta^3 -10\left(5x_j^2+17x_ix_j+5 x_i^2\right) \theta^2-8{\cdot}5^\frac{3}{2}x_ix_j\left(x_j+x_i\right) \theta-50x_i^2x_j^2,
\end{multline*}
with $t_1 = 36\sqrt{5} \theta^3 \exp \left(\frac{\sqrt{5}\left(x_j+x_i\right)}{\theta}\right)$.

The second term:
\begin{equation*}
p_{2,1/2} =  \int \limits_{x_i}^{x_j} \exp \left(-\frac{(x - x_i)}{\theta} \right) \exp \left(-\frac{(x_j - x)}{\theta} \right) = (x_j - x_i) \exp \left(-\frac{x_j - x_i}{\theta} \right)
\end{equation*}
\begin{equation*}
p_{2,3/2} = \dfrac{\left(x_j-x_i\right)\left(2\theta^2+2 \sqrt{3}\left(x_j-x_i\right)\theta+x_j^2-2x_i x_j+x_i^2\right) \exp \left(-\frac{\sqrt{3}\left(x_j-x_i\right)}{\theta} \right)}{2\theta^2}
\end{equation*}
\begin{multline*}
p_{2,5/2} \cdot t_2 = \left(x_j- x_i\right) 54\theta^4+\left(54\sqrt{5} x_j -54\sqrt{5} x_i \right)\theta^3+\left(105 x_j ^2-210 x_i x_j+105x_i^2\right)\theta^2 \\ + \left(3{\cdot}5^\frac{3}{2}x_j^3-9{\cdot}5^\frac{3}{2} x_i x_j^2+9{\cdot}5^\frac{3}{2}x_i^2 x_j -3{\cdot}5^\frac{3}{2}x_i^3\right)\theta+5x_j^4-20x_i x_j^3+30x_i^2 x_j^2-20 x_i^3 x_j +5x_i^4
\end{multline*}
with $t_2 = 54\theta^4 \exp \left( \frac{\sqrt{5}\left(x_i-x_j\right)}{\theta} \right)$.

The third term:
\begin{multline*}
p_{3, 1/2} = \int \limits_{x_j}^{1} \exp \left(-\frac{(x - x_i)}{\theta} \right) \exp \left(-\frac{(x - x_j)}{\theta} \right)\\
 = \dfrac{\theta}{2} \left( \exp \left( \frac{x_i -x_j}{\theta} \right) - \exp \left( \frac{x_j + x_i -2 }{\theta} \right) \right)
 \end{multline*}
 \vspace{-0.5cm}
 \begin{multline*}
p_{3, 3/2} \cdot t_3 = \theta \left(5 \theta +3^\frac{3}{2}\left(x_j-x_i\right)\right) \beta \\- \left(\theta\left(5 \theta-3^\frac{3}{2}\left(x_j+x_i-2\right)\right)+6\left(x_i-1\right)x_j-6x_i+6\right)  \exp \left( \frac{2 \sqrt{3}x_j}{\theta} \right)
\end{multline*}
\vspace{-0.5cm}
\begin{multline*}
p_{3,5/2} \cdot t_4 = \exp \left( \frac{2\sqrt{5}x_j}{\theta} \right) \cdot \left[\theta \left(\theta \left(9 \theta \left(7 \theta -5^\frac{3}{2}\left(x_j+ x_i -2\right)\right)+10x_j\left(5 x_j+17 x_i -27\right)  \right. \right. \right.  \\
\left. \left. +10\left(5 x_i^2-27 x_i+27\right)\right) -8{\cdot}5^\frac{3}{2}\left(x_i-1\right)\left(x_j-1\right)\left(x_j+x_i-2\right)\right)+50\left(x_i-1\right)^2\left(x_j-2\right)x_j \\
\left. + 50\left(x_i-1\right)^2\right] -\theta^2\left(63\theta^2+9{\cdot}5^\frac{3}{2}x_j \theta-9{\cdot}5^\frac{3}{2}x_i \theta +50 x_j^2-100 x_i x_j+50x_i^2\right) \gamma
\end{multline*}
with $t_3 = 4 \theta \sqrt{3}  \exp \left( \frac{\sqrt{3}\left(x_j-x_i+2\right)}{\theta} \right)$, $t_4 = -36\sqrt{5}\theta^3 \exp \left( \frac{\sqrt{5}\left(x_j-x_i+2\right)}{\theta} \right)$.

The case when $x_i > x_j$ is obtained by swapping $x_i$ and $x_j$ above.
Derivatives with respect to $x_i$ and $x_j$, to account for
both of these cases, are provided as follows:
\begin{equation*}
\frac{\partial w_{1/2}(x_i,x_j)}{\partial x_i} = -\dfrac{\left(2\left(x_i+ \theta- x_j\right) \exp \left( \frac{2 x_i}{\theta} \right) + \theta \exp \left( \frac{2 x_j}{\theta} \right) -\theta\right) \exp \left( -\frac{x_i +x_j}{\theta} \right) }{2 \theta}
\end{equation*}
\begin{equation*}
\frac{\partial w_{1/2}(x_i,x_j)}{\partial x_j} = \dfrac{\left(\theta \exp \left( \frac{2x_j}{\theta} \right) -2 \exp \left( \frac{2 x_i}{\theta} \right) x_j+2\left(\theta+x_i\right) \exp \left( \frac{2 x_i}{\theta} \right) + \theta \right) \exp \left( -\frac{x_j+x_i}{\theta} \right) }{2 \theta}
\end{equation*}
\vspace{-0.5cm}
\small
\begin{multline*}
\frac{\partial w_{3/2}(x_i,x_j)}{\partial x_i} t_5 =
\exp \left( \frac{2\sqrt{3}x_i}{\theta} \right) \left[2 \sqrt{3} \beta x_i^3+\left(-6 \theta -2{\cdot}3^\frac{3}{2}x_j\right) \beta x_i^2 + \right.  \\ \left.
+\left(\left(\left(6x_j-6\right)\theta-3^\frac{3}{2}\theta^2\right) \exp \left( \frac{2 \sqrt{3}x_j}{\theta} \right) +\left(2 \sqrt{3} \theta^2+12 x_j \theta +2{\cdot}3^\frac{3}{2}x_j^2\right) \beta \right)x_i + \right. \\
\left. \left(2\theta^3+\left(4 \sqrt{3}-\sqrt{3} x_j \right)\theta^2+\left(6-6x_j\right) \theta \right) \exp \left(\frac{2\sqrt{3}x_j}{\theta} \right)+\left(-2\sqrt{3}x_j \theta^2 - 6x_j^2 \theta-2\sqrt{3}x_j^3\right) \beta \right] \\
+\left(-3^\frac{3}{2}\theta^2-6x_j x_i\right) \beta x_i + \left(-2s^3-\sqrt{3}x_j \theta^2\right) \beta
\end{multline*}
\vspace{-0.5cm}
\begin{multline*}
\frac{\partial w_{3/2}(x_i,x_j)}{\partial x_j} t_6 =
\theta \left[\left(3^\frac{3}{2} \theta -6 x_i+6\right) x_j- \theta \left(2 \theta - \sqrt{3}\left(x_i - 4\right)\right)+6 x_i-6\right] \exp \left( \frac{2 \sqrt{3} \left(x_j + x_i\right)}{\theta} \right) \\
 -2\sqrt{3} \exp \left( \frac{2 \sqrt{3} \left(x_i+1\right)}{\theta} \right) x_j^3 - 2\left(3 \theta -3^\frac{3}{2}x_i\right)
  \exp \left( \frac{2 \sqrt{3} \left(x_i+1\right)}{\theta} \right) x_j^2- \\
 \beta \left(2 \sqrt{3}\theta^2 \exp \left( \frac{2 \sqrt{3}x_i}{\theta} \right) -12x_i \theta \exp \left( \frac{2 \sqrt{3}x_i}{\theta} \right) +2{\cdot}3^\frac{3}{2} x_i^2 \exp \left( \frac{2 \sqrt{3}x_i}{\theta} \right)-3^\frac{3}{2}\theta^2-6 x_i \theta \right)x_j \\
 + 2 x_i \left(\sqrt{3} \theta^2 - 3 x_i \theta +\sqrt{3} x_i^2 \right) \exp \left( \frac{2 \sqrt{3}\left(x_i+1\right)}{\theta} \right) + \theta^2 \left(2 \theta+\sqrt{3}x_i\right) \beta
\end{multline*}
\normalsize
with  $t_5 = -4\theta^3 \exp\left( \frac{\sqrt{3} (x_i + x_j + 2)}{\theta} \right)$,
$t_6 = - t_5$.

\small
\begin{multline*}
\frac{\partial w_{5/2}(x_i,x_j)}{\partial x_i} t_7=
\left[ 2 \cdot 5^\frac{3}{2} \gamma x_i^5+\left(-100 \theta-2 \cdot5^\frac{5}{2}x_j\right)
\gamma x_i^4 + \left(18 \cdot 5^\frac{3}{2} \theta^2 + 400 x_j \theta+4 \cdot 5^\frac{5}{2} x_j^2\right)  \gamma x_i^3 + \right. \\
 \left( \left(150 \theta^3+\left(24 \cdot 5^\frac{3}{2}-24 \cdot 5^\frac{3}{2} x_j\right) \theta^2+
\left(150 x_j^2-300 x_j+150\right) \theta\right) \exp \left( \frac{2 \sqrt{5} x_j}{\theta} \right) + \right. \\
\left.
 \left(-210 \theta^3-54 \cdot 5^\frac{3}{2} x_j \theta^2-600 x_j^2 \theta - 4 \cdot 5^\frac{5}{2} x_j^3\right)
\gamma\right) x_i^2 +\left(\left(-3 \cdot 5^\frac{5}{2}\theta^4 +\left(270 x_j-570\right) \theta^3+  \right. \right. \\
\left. \left(-12 \cdot 5^\frac{3}{2} x_j^2+72 \cdot 5^\frac{3}{2} x_j-12 \cdot 5^\frac{5}{2}\right) \theta^2 +
\left(-300 x_j^2+600 x_j-300\right)\theta\right) \exp \left( \frac{2 \sqrt{5}x_j}{\theta} \right) \\
\left.
+\left(42 \sqrt{5} \theta^4+420 x_j \theta^3+54 \cdot 5^\frac{3}{2} x_j^2 \theta^2+400 x_j^3 \theta+2 \cdot 5^\frac{5}{2} x_j^4\right) \gamma\right) x_i + \\
\left. \left(54 \theta^5+\left(108 \sqrt{5}-33 \sqrt{5} x_j\right) \theta^4+\left(30 x_j^2-330 x_j+450\right) \theta^3
+\left(12 \cdot 5^\frac{3}{2} x_j^2-48 \cdot 5^\frac{3}{2} x_j+36 \cdot 5^\frac{3}{2}\right) \theta^2 \right. \right. \\
\left.  +\left(150 x_j^2-300 x_j+150\right) \theta\right) \exp \left( \frac{2 \sqrt{5}x_j}{\theta} \right) +\\
\left. \left(-42 \sqrt{5} x_j \theta^4-210 x_j^2 \theta^3-18 \cdot 5^\frac{3}{2} x_j^3 \theta^2-100 x_j^4 \theta-2 \cdot 5^\frac{3}{2} x_j^5\right)
 \gamma\right] \exp\left( \frac{2 \sqrt{5} x_i}{\theta} \right) + \\
 \left(-150 \theta^3-24 \cdot 5^\frac{3}{2} x_j \theta^2-150 x_j^2 \theta\right) \gamma x_i^2+\left(-3 \cdot 5^\frac{5}{2} \theta^4-270 x_j \theta^3-12 \cdot 5^\frac{3}{2} x_j^2 \theta^2\right) \gamma x_i\\
 +\left(-54 \theta^5-33 \sqrt{5} x_j \theta^4-30 x_j^2 \theta^3\right) \gamma
\end{multline*}
\normalsize
with $t_7 = -108 \theta^5 \exp \left( \frac{\sqrt{5} (x_j + x_i + 2)}{\theta} \right)$.
\vspace{-0.5cm}
\small
\begin{multline*}
 \frac{\partial  w_{5/2}(x_i,x_j)}{\partial x_j} t_7 =
 \left( \left( 150 \theta^3+ 24 \cdot 5^\frac{3}{2} \left(1 - x_i \right) \theta^2+ \left( 150x_i^2-300x_i+150 \right)\theta\right) \exp \left( \frac{2 \sqrt{5} x_i}{\theta} \right) x_j^2 + \right. \\
 \left. \left(-3 \cdot 5^\frac{5}{2} \theta^4+ \left(270 x_i - 570\right)\theta^3 - 12 \cdot 5^\frac{3}{2} \left(x_i^2 - 6 x_i +1\right)\theta^2 -300\left(x_i^2 - 2x_i+1\right)\theta\right) \exp \left( \frac{2 \sqrt{5} x_i}{\theta} \right) x_j \right. \\
 + \left(54 \theta^5+\left(108  \sqrt{5}-33  \sqrt{5} x_i\right) \theta^4+\left(30x_i^2-330x_i+450\right)\theta^3+\left(12 \cdot 5^\frac{3}{2}x_i^2-48 \cdot 5^\frac{3}{2}x_i+36 \cdot 5^\frac{3}{2}\right)\theta^2+ \right.\\
  \left. \left. \left(150x_i^2-300x_i+150\right)\theta\right) \exp \left( \frac{2 \sqrt{5}x_i}{\theta} \right)  \right) \exp \left( \frac{2  \sqrt{5}x_j}{\theta} \right) +2 \cdot 5^\frac{3}{2} \exp \left( 2 \sqrt{5}x_i/\theta+2 \sqrt{5}/\theta \right) x_j^5+ \\
 \left(100\theta-2 \cdot 5^\frac{5}{2}x_i\right) \exp \left( \frac{2 \sqrt{5}(x_i + 1)}{\theta} \right) x_j^4+\left(18 \cdot 5^\frac{3}{2}\theta^2-400x_i \theta+4 \cdot 5^\frac{5}{2}x_i^2\right) \exp \left( \frac{2  \sqrt{5} \left(x_i + 1\right)}{\theta} \right)x_j^3\\
 +\left(\left(210\theta^3-54 \cdot 5^\frac{3}{2}x_i\theta^2+600x_i^2\theta-4 \cdot 5^\frac{5}{2}x_i^3\right) \exp \left( \frac{2 \sqrt{5} \left(x_i + 1\right)}{\theta} \right)+ \right.\\
 \left. \left(-150 \theta^3-24 \cdot 5^\frac{3}{2}x_i\theta^2-150x_i^2\theta\right) \gamma \right)x_j^2+ \left( \left(42 \sqrt{5}\theta^4-420x_i\theta^3+54 \cdot 5^\frac{3}{2}x_i^2\theta^2-400x_i^3\theta+ \right. \right. \\
 \left. \left. 2 \cdot 5^\frac{5}{2}x_i^4\right) \exp \left( \frac{2  \sqrt{5} \left(x_i +1 \right) }{\theta} \right) +\left(-3 \cdot 5^\frac{5}{2}\theta^4-270x_i\theta^3-12 \cdot 5^\frac{3}{2}x_i^2\theta^2\right) \gamma \right)x_j+ \\
 \left(-42  \sqrt{5}x_i\theta^4+210x_i^2\theta^3-18 \cdot 5^\frac{3}{2}x_i^3\theta^2+100x_i^4\theta-2 \cdot 5^\frac{3}{2}x_i^5\right) \exp \left( \frac{2 \sqrt{5}\left( x_i +1\right)}{\theta} \right)+ \\
 \left(-54\theta^5-33  \sqrt{5}x_i\theta^4-30x_i^2\theta^3\right) \gamma
\end{multline*}
\normalsize
\noindent Finally, we provide expressions for $c_2$ from \eqref{eq:dW}:
\begin{equation*}
c_{2, 1/2} = \exp \left( -\frac{2x_i}{\theta} \right);
\end{equation*}
\vspace{-1.0cm}
\begin{multline*}
c_{2, 3/2} \cdot t_8 = \left(3x_i^2-2\left(\sqrt{3}\theta+3\right)x_i+\theta^2+2 \sqrt{3}\theta+3\right) \exp \left( \frac{4 \sqrt{3} x_i}{\theta} \right) - 3 \beta x_i^2 \\- 2 \sqrt{3} \theta \beta x_i -\theta^2 \beta;
\end{multline*}
\vspace{-1.0cm}
\begin{multline*}
c_{2, 5/2} \cdot t_9 = \exp \left( \frac{4\sqrt{5}x_i}{\theta} \right) \cdot \left[25 \theta^4-2\left(3{\cdot}5^\frac{3}{2}\theta+50\right) x_i^3+3\left(\theta \left(25 \theta +6{\cdot}5^\frac{3}{2}\right)+50\right)x_i^2 - \right. \\
\left. 2\left(3 \theta \left(\theta \left(3\sqrt{5} \theta +25\right)+3{\cdot}5^\frac{3}{2}\right)+50\right)x_i+9\theta^4+18\sqrt{5} \theta^3+75\theta^2+6{\cdot}5^\frac{3}{2} \theta +25\right]  -\\
25 \gamma x_i^4 - 6{\cdot}5^\frac{3}{2}\theta \gamma x_i^3 - 75\theta^2 \gamma x_i^2 - 18 \sqrt{5} \theta^3 \gamma x_i - 9\theta^4 \gamma
\end{multline*}
with $t_8 = -\theta^2 \exp \left(\frac{2 \sqrt{3}\left(x_i+1\right)}{\theta} \right)$,
$t_9 = -9\theta^4 \exp \left( \frac{2\sqrt{5}\left(x_i+1\right)}{\theta} \right)$.

\end{document}